\newcommand\DoToC{%
  \startcontents
  \printcontents{}{0}{\textbf{Contents}\vskip1em\hrule\vskip1em}
  \vskip1em\hrule\vskip5pt
}
\title{\Large Reliability-Targeted Simulation of Item Response Data: Solving the Inverse Design Problem}
\author{JoonHo Lee}
\date{\footnotesize \today. \\[0.5em]
Lee: Department of Educational Studies in Psychology, Research Methodology, and Counseling, 
The University of Alabama, jlee296@ua.edu.
\\[0.5em]
Acknowledgments: The author is grateful for the support of the Institute of Education Sciences Grant R305D240078.
}
\numberwithin{equation}{section}  
\begin{document}


\begin{abstract}
Monte Carlo simulations are the primary methodology for evaluating Item Response Theory (IRT) methods, yet marginal reliability---the fundamental metric of data informativeness---is rarely treated as an explicit design factor. Unlike in multilevel modeling where the intraclass correlation (ICC) is routinely manipulated, IRT studies typically treat reliability as an incidental outcome, creating a ``reliability omission'' that obscures the signal-to-noise ratio of generated data. To address this gap, we introduce a principled framework for \emph{reliability-targeted simulation}, transforming reliability from an implicit by-product into a precise input parameter. We formalize the inverse design problem, solving for a global discrimination scaling factor that uniquely achieves a pre-specified target reliability. Two complementary algorithms are proposed: Empirical Quadrature Calibration (EQC) for rapid, deterministic precision, and Stochastic Approximation Calibration (SAC) for rigorous stochastic estimation. A comprehensive validation study across 960 conditions demonstrates that EQC achieves essentially exact calibration, while SAC remains unbiased across non-normal latent distributions and empirical item pools. Furthermore, we clarify the theoretical distinction between average-information and error-variance-based reliability metrics, showing they require different calibration scales due to Jensen's inequality. An accompanying open-source R package, \texttt{IRTsimrel}, enables researchers to standardize reliability as a controlled experimental input.
\end{abstract}

\maketitle  

\noindent\textbf{Keywords:} Item Response Theory; reliability-targeted simulation; inverse design problem; marginal reliability; Empirical Quadrature Calibration; Stochastic Approximation Calibration

\pagestyle{plain}  
\newpage



\section{Introduction}
\label{sec:introduction}

Monte Carlo simulation studies constitute the primary methodology for evaluating Item Response Theory (IRT) estimation methods, model selection criteria, and scoring algorithms. In the absence of ``ground truth'' in empirical data, simulations provide the sole environment where latent parameters are known, allowing researchers to quantify bias, efficiency, and error rates \citep{morris_using_2019}. Consequently, the rigor of a simulation design directly determines the validity and generalizability of its findings. Standard practice in IRT simulation involves the systematic manipulation of sample size, test length, item parameter distributions, and latent distribution shapes \citep[e.g.,][]{cheng_estimating_2025,guastadisegni_generalized_2025,monroe_estimation_2014,paganin_irt_2022,woods_thissen_2006}. Researchers meticulously cross these factors to explore how estimators behave under varying conditions of data scarcity and structural complexity.

Yet a systematic methodological gap persists: \emph{marginal reliability}---the fundamental metric of data informativeness and signal-to-noise ratio \citep{cronbach_signalnoise_1964, brennan_signalnoise_1977, guyatt_measuring_1992, rouder_hierarchical-model_2024, cheng_comparison_2012, kim2012note}---is rarely treated as an explicit design factor. Instead, reliability is almost invariably treated as an implicit outcome of item parameter selection, reported sporadically if at all. This ``reliability omission'' creates a significant blind spot. Readers are often left without a clear, comparable summary of the generated data's signal-to-noise ratio, because marginal reliability is rarely computed and reported explicitly even when item parameters are described. Without explicit control or reporting of marginal reliability, it remains unclear whether a proposed method's performance is robust across the spectrum of precision found in operational testing, or if it is contingent upon a specific, potentially idealized, high-reliability regime.

This omission persists even in rigorous, high-quality methodological studies. For instance, \citet{paganin_irt_2022} conducted a sophisticated evaluation of Bayesian semiparametric IRT models, varying sample sizes, test lengths, and complex multimodal latent distributions. Despite this attention to distributional nuance, the marginal reliability of the generated datasets was neither controlled nor reported. Similarly, recent studies investigating non-normal latent traits \citep[e.g.,][]{cheng_estimating_2025,guastadisegni_generalized_2025, bambirra_goncalves_bayesian_2018, wang_robustness_2018} vary structural parameters but leave the resulting signal-to-noise ratio uncontrolled. These cases exemplify a broader disciplinary tendency: while the structure of simulated data is rigorously engineered, its informational strength is frequently left uncalibrated.

A useful parallel comes from multilevel modeling (MLM), where the intraclass correlation (ICC)---itself a deterministic function of variance components---is nonetheless routinely reported and often intentionally varied to characterize the signal-to-noise regime of clustered data \citep{maas_sufficient_2005,mcneish_effect_2016}. This practice reflects the structural isomorphism between the ICC and marginal reliability: just as the ICC quantifies the proportion of variance attributable to between-group differences (signal) relative to residual error (noise), marginal reliability quantifies the proportion of variance attributable to the latent trait relative to measurement error. Both metrics share the same mathematical structure---a Rayleigh quotient representing projection efficiency in Hilbert space \citep[cf.][for the Hilbert space formalization of reliability]{zumbo_reliability_2025, zumbo_bregman_2025}---which explains their analogous roles as summary indices of data quality in their respective domains.

In MLM research, it is standard practice to explicitly vary the ICC across conditions (e.g., $\rho \in \{0.10, 0.20, 0.30\}$) to examine method performance across different informativeness regimes \citep[e.g.,][]{can_collinear_2015,cho_detecting_2015,hsu_impact_2017,ludtke_multiple_2017}. This design choice reflects a recognition that the ratio of signal variance to error variance is a fundamental property of the data structure that shapes estimator performance \citep{lee_et_al_msts_2025,lee_wind_targeting_2025,paddock_flexible_2006}. By contrast, IRT simulation practice has rarely treated reliability as an explicit design factor. One reason for this difference is that, unlike the ICC, IRT precision is inherently conditional on the latent trait level: the test information function $I(\theta)$ varies across $\theta$, so there is no single, universally adopted scalar reliability coefficient. Analogous reliability-based design has therefore been less common in IRT research---in part because convenient tools for targeting marginal reliability have not been available.

The failure to control reliability in simulation design has direct and substantive consequences for the ecological validity of psychometric research. First, it threatens the generalizability of findings to real-world contexts. Operational assessments vary wildly in precision \citep{ramsay_functional_2016}, from high-stakes exams with reliabilities exceeding $0.90$ to short formative assessments or screeners where reliability may hover between $0.50$ and $0.70$ \citep{frisbie_reliability_1988, conoyer_meta-analysis_2022}. Simulations that implicitly generate data with consistently high reliability may overstate the utility of complex models or underestimate the fragility of estimators in ``messy,'' low-information environments---such as clinical settings with small samples ($N=115$) and high missingness rates (e.g., $42\%$) \citep{gilholm_bayesian_2021}.

Second, uncontrolled reliability can obscure interpretation in model comparisons. Apparent differences between estimators may reflect shifts in the underlying signal-to-noise regime induced by design choices (e.g., test length and discrimination distributions), rather than the intended experimental factor. While post hoc approaches---such as simulation metamodels that condition on design factors---may partially account for this \citep{gilbert_multilevel_2025}, the resulting reliability regime is rarely reported explicitly and is typically not orthogonal to other manipulated factors. For example, \citet{soland_how_2024} demonstrated that scoring decisions interact with test reliability to alter study conclusions, finding that Bayesian shrinkage methods (EAP) produced valid inference in high-reliability conditions but led to massive inflation of Type~I error rates in low-reliability settings. If a simulation study defaults to a high-reliability condition without explicit control, such pathologies may remain undetected. Treating marginal reliability as an explicit design input allows researchers to hold information constant when desired, or to vary it deliberately as a focal experimental factor. More broadly, the lack of reliability reporting limits cumulative science; without knowing the achieved reliability of simulated datasets, it is difficult to meaningfully compare performance metrics (e.g., bias, RMSE) across different studies.

To address this gap, this paper introduces a principled framework for \emph{reliability-targeted simulation} of IRT data. Our primary objective is to transform marginal reliability from an implicit outcome into an explicit input parameter, allowing researchers to generate data that achieves a pre-specified reliability target while preserving realistic distributional characteristics.

The specific contributions of this study are fourfold:
\begin{enumerate}
    \item \textbf{Mathematical Framework:} We formalize the ``inverse design problem'' in IRT simulation. We define the relationship between data-generating parameters and marginal reliability, establishing the monotonicity conditions required to uniquely map a global discrimination scaling factor to a target reliability.
    
    \item \textbf{Algorithmic Toolkit:} We propose two complementary calibration algorithms. \emph{Algorithm~1 (Empirical Quadrature Calibration; EQC)} offers a fast, deterministic method suitable for routine use, utilizing large-sample Monte Carlo quadrature. \emph{Algorithm~2 (Stochastic Approximation Calibration; SAC)} employs a Robbins--Monro stochastic approximation approach \citep{robbins_monro_1951} for complex data-generating processes where deterministic quadrature is infeasible.
    
    \item \textbf{Validation Study:} We validate these algorithms across a diverse set of conditions, including non-normal latent distributions (skewed, bimodal, heavy-tailed) and realistic item parameters drawn from the Item Response Warehouse \citep{domingue_introduction_2025,zhang_realistic_2025}. This is important because IRT reliability coefficients are not invariant to the latent distribution, and thus can differ across populations even when item parameters are held fixed \citep{andersson2018large}. We demonstrate that our framework achieves target reliabilities within strict tolerance levels and elucidate the theoretical distinction between average-information and error-variance-based reliability metrics, showing they require different calibration scales due to Jensen's inequality.
    
    \item \textbf{Software Implementation:} We provide an open-source R package, \texttt{IRTsimrel}, which implements these algorithms. This tool enables applied researchers to easily integrate reliability targeting into their existing simulation workflows.
\end{enumerate}

The remainder of this paper is organized as follows. \cref{sec:framework} presents the mathematical framework, defining the measurement models and the specific reliability estimands used. \cref{sec:algorithms} details the EQC and SAC calibration algorithms. \cref{sec:validation} reports the results of a comprehensive validation study comparing the proposed methods against standard benchmarks. \cref{sec:discussion} discusses the implications of reliability-targeted simulation for the field, offers practical recommendations for simulation design, and outlines limitations and future directions.

\section{Mathematical Framework}
\label{sec:framework}

To turn marginal reliability from an incidental outcome of simulation design into an explicit control variable, we must formalize how reliability arises from the underlying item response model and the data-generating process (DGP). This section (a) specifies the generative measurement model, (b) defines the target reliability functionals based on error variance and information, and (c) formulates the inverse design problem in terms of a global discrimination scaling factor.

\subsection{Measurement Model and Notation}
\label{subsec:model}

Consider a test with $I$ dichotomous items administered to $N$ persons. Let $Y_{pi} \in \{0,1\}$ denote the response of person $p$ to item $i$. We employ the two-parameter logistic (2PL) model as the general framework \citep{birnbaum_latent_1968}, where the probability of a correct response is given by:
\begin{equation}
\label{eq:2pl}
\Pr(Y_{pi} = 1 \mid \theta_p, \beta_i, \lambda_i)
  = \text{logit}^{-1}\{\lambda_i(\theta_p - \beta_i)\}.
\end{equation}
Here, $\theta_p$ is the latent ability, $\beta_i$ is the item difficulty, and $\lambda_i$ is the item discrimination. The Rasch model corresponds to the structural constraint where all discriminations are uniform and fixed at unity ($\lambda_i \equiv 1$) \citep{rasch_probabilistic_1980}.

In a simulation context, these parameters are treated as random draws from a structural configuration $\Psi = (G, H_\beta, H_\lambda, I)$. Person abilities follow a latent distribution $G$ (typically standardized with variance $\sigma_\theta^2$). Item difficulties are drawn from $H_\beta$, and baseline discriminations from $H_\lambda$. The framework accommodates dependence between item parameters, allowing simulations to mimic empirical item pools where difficulty and discrimination are correlated.

\subsection{Test Information and Reliability Definitions}
\label{subsec:reliability}

Reliability is a generic signal-to-noise concept: it quantifies how much of the variability in observed measurements is attributable to stable person differences (signal) versus measurement error (noise). In classical test theory (CTT), this idea is formalized as a variance ratio---equivalently, as the expected agreement between repeated measurements or parallel forms---so that higher reliability corresponds to more informative data for downstream inference \citep{kim2012note}. Importantly, CTT reliability is a single number attached to a test.

In IRT, however, measurement error is not constant: precision depends on the latent trait level through the test information function, and thus ``reliability'' is inherently conditional \citep{kim2012note}. The \emph{test information function (TIF)}, assuming local independence, is the sum of item Fisher information functions \citep{fischer_rasch_1995, baker_item_2004, de_ayala_theory_2022}:
\begin{equation}
\label{eq:tif}
\mathcal{J}(\theta)
  = \sum_{i=1}^I \lambda_i^2 \, \pi_i(\theta)\{1 - \pi_i(\theta)\},
\end{equation}
where $\pi_i(\theta)$ is the response probability. The asymptotic conditional standard error of measurement is $\mathrm{SEM}(\theta) = 1/\sqrt{\mathcal{J}(\theta)}$. 

For simulation design and cross-study comparability, we require scalar, population-level summaries of conditional precision---marginal reliability functionals---that map a full data-generating process to a single, interpretable signal-to-noise index. This need for globalized reliability indices in IRT has been emphasized in the reliability-coefficient literature \citep[e.g.,][]{cheng_comparison_2012, zumbo_bregman_2025}. \citet{andersson2018large} review two such population-level coefficients---marginal reliability (with respect to ability estimates) and test reliability (with respect to sum scores)---as overall indices aggregated over the latent distribution. This study defines two forms of marginal reliability.

\paragraph{MSEM-based Marginal Reliability.}
Following standard variance decomposition definitions \citep{thissen_wainer_2001,yang_characterizing_2012, kim2012note}, marginal reliability is the ratio of true variance to total variance. We define the \emph{average error variance}---formally the Mean Squared Error of Measurement (MSEM)---as the expectation of the conditional error variances over the population:

\begin{equation}
\label{eq:msem}
\text{MSEM}
  = \mathbb{E}_{\theta \sim G}\!\left[\frac{1}{\mathcal{J}(\theta)}\right],
\qquad
\bar{w}
  = \frac{\sigma_\theta^2}{\sigma_\theta^2 + \text{MSEM}}.
\end{equation}

This quantity $\bar{w}$ is our primary estimand, representing the reliability of maximum likelihood estimates \citep{andersson2018large}. Note that MSEM here refers to the mean of the \emph{squared} errors (variances), ensuring dimensional consistency with the variance-ratio formula. This MSEM-based marginal reliability corresponds to the globalized IRT reliability coefficient defined via the expected inverse test information \citep[e.g.,][]{cheng_comparison_2012} . Relatedly, IRT-based reliability has also been formulated on the test-score metric as an explicit functional of item parameters and the ability distribution \citep{kim2010estimation}. Our focus differs in that we target information-based marginal reliabilities for ability estimation. As discussed by \citet{kim2012note}, population-level reliability of IRT ability estimates can be defined via the parallel-forms correlation and expressed through a marginal decomposition involving conditional error variance; our $\bar{w}$ operationalizes this idea using the information-based approximation to the conditional estimation error variance.

\paragraph{Average-Information Reliability.}
In design contexts where computational speed is paramount or a summary of ``design informativeness'' is required, we use a simplified index:
\begin{equation}
\label{eq:avg-info}
\bar{\mathcal{J}}
  = \mathbb{E}_{\theta \sim G}\big[\mathcal{J}(\theta)\big],
\qquad
\tilde{\rho}
  = \frac{\sigma_\theta^2 \, \bar{\mathcal{J}}}
         {\sigma_\theta^2 \, \bar{\mathcal{J}} + 1}.
\end{equation}
We term $\tilde{\rho}$ the \emph{average-information reliability}. This metric summarizes the heterogeneous information curve $\mathcal{J}(\theta)$ into a single mean value $\bar{\mathcal{J}}$ before applying the variance-ratio transformation.

This formulation parallels the concept of ``average reliability'' used in multilevel modeling \citep{lee_et_al_msts_2025, rabe-hesketh_multilevel_2022}, where heterogeneous standard errors are condensed into a single summary statistic to guide design decisions. Similarly, it follows the logic of design effects in educational measurement \citep{adams_reliability_2005}, which characterize the reduction in posterior uncertainty relative to the prior. Although test information should not be equated with reliability itself \citep{doran_information_2005}, $\tilde{\rho}$ serves as a valid ``reliability-like'' summary index that approximates $\bar{w}$ when the TIF is relatively flat over the bulk of the latent distribution.

Note that due to the convexity of $x \mapsto 1/x$, Jensen's inequality implies $\mathbb{E}[1/\mathcal{J}] \ge 1/\mathbb{E}[\mathcal{J}]$, and thus $\tilde{\rho} \ge \bar{w}$. We adopt $\bar{w}$ as the rigorous target for calibration, while retaining $\tilde{\rho}$ as a useful diagnostic of the test's global signal-to-noise potential.

\subsection{The Inverse Design Problem}
\label{subsec:inverse}

Standard psychometric analysis solves a \emph{forward} problem: given a test configuration and a response matrix, estimate item and person parameters and then compute the resulting reliability $\rho$. Conceptually, if $\rho = f(\Psi)$ denotes the marginal reliability induced by a test configuration $\Psi = (G, H_\beta, H_\lambda, I)$, classical simulation studies simply evaluate $f(\Psi)$ after the fact and, at most, report the realized reliability as a descriptive statistic.

Reliability-targeted simulation reverses this logic. The goal is to solve an \emph{inverse design problem}: Given a target reliability $\rho^* \in (0,1)$ and a structural configuration $\Psi$, construct a data-generating mechanism such that the expected marginal reliability of the resulting datasets satisfies $\mathbb{E}[\rho] = \rho^*$.

This inverse problem is inherently non-unique: many combinations of test length, item quality, and latent distribution can yield the same reliability \citep{van_der_linden_linear_2005}. A short test with very high discriminations can, in principle, produce the same $\rho$ as a longer test with more modest discriminations. To obtain a tractable and interpretable calibration problem, we identify a \emph{single} control variable that adjusts overall informativeness while preserving the qualitative structure of the test.

We separate the DGP into a structural component and a scalar \emph{scale}:
\begin{itemize}
    \item \textbf{Structure (fixed):} The shape of the latent distribution $G$, the distribution of item difficulties $H_\beta$, the heterogeneity and dependence structure of baseline discriminations $\{\lambda_{i,0}\}$ and their correlation with difficulties, and the test length $I$.
    \item \textbf{Scale (calibrated):} A global factor $c > 0$ that uniformly rescales all discriminations.
\end{itemize}

Formally, we define calibrated discriminations by
\begin{equation}
\label{eq:scaling}
\lambda_i(c) = c \, \lambda_{i,0}, \qquad c > 0,\; i = 1,\dots,I.
\end{equation}
As $c$ increases, item response curves become steeper, local information $\mathcal{I}_i(\theta)$ grows roughly with $c^2$ in regions where $\theta$ is well aligned with $\beta_i$, and the resulting TIF $\mathcal{J}(\theta;c)$ increases on substantial portions of the latent support. Note that if the baseline structure is Rasch ($\lambda_{i,0} \equiv 1$), scaling by $c$ results in a 1PL model with uniform discrimination $c$, effectively maintaining the equality of discriminations while adjusting their magnitude.

This parameterization cleanly separates \emph{structure} (how information is distributed across $\theta$) from \emph{scale} (how much total information is available). Adjusting $c$ acts like a ``volume knob'' on the test's informativeness, changing the signal-to-noise ratio without altering the relative ordering of item qualities or their alignment with $G$.

A natural question is why one should parameterize simulation designs in terms of reliability rather than in terms of item discriminations, as is customary in IRT simulations. Our position is not that manipulating discriminations is wrong, but that marginal reliability provides a more interpretable and comparable summary of the simulated data's signal-to-noise regime. The mapping from item parameters to marginal reliability is many-to-one and depends jointly on test length, the difficulty--ability alignment, and the latent distribution. As a result, two conditions with nominally identical discrimination settings can induce meaningfully different reliability regimes, and differences in method performance may be inadvertently driven by uncontrolled shifts in informativeness. Reliability-targeted simulation therefore re-parameterizes the design space in terms of quantities that directly encode the intended information regime (e.g., $N$, $I$, $G$, and $\rho^*$), while preserving the qualitative structure of the item pool and solving only for a single global scale factor $c$ that achieves the target reliability.

\paragraph{Monotonicity of the reliability function.}
The effectiveness of discrimination scaling as a control variable depends on the behavior of $\rho(c)$. While extremely high discriminations can theoretically degrade MSEM by creating information gaps between items (rendering the function unimodal), within a \emph{practical calibration interval $[c_L, c_U]$} where the item grid remains sufficiently dense relative to discrimination, the mapping $c \mapsto \rho(c)$ is continuous and strictly increasing. Intuitively, scaling discriminations up within this range uniformly reduces the MSEM, so both $\bar{w}(c)$ and $\tilde{\rho}(c)$ increase monotonically.

A formal statement of the regularity conditions and derivative calculations is provided in \cref{app:proofs}. For the main text it suffices to record the following consequence:

\begin{corollary}[Existence and uniqueness of the calibrated scale]
\label{cor:existence}
Let $\rho_{\min} = \rho(c_L)$ and $\rho_{\max} = \rho(c_U)$ denote the reliabilities at the lower and upper calibration bounds. If $\rho(c)$ is continuous and strictly increasing on $[c_L, c_U]$, then for any target $\rho^* \in (\rho_{\min}, \rho_{\max})$ there exists a unique $c^* \in (c_L, c_U)$ such that $\rho(c^*) = \rho^*$.
\end{corollary}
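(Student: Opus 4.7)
The plan is to treat \cref{cor:existence} as an immediate application of the Intermediate Value Theorem (IVT) combined with strict monotonicity; the substantive mathematics does not lie in the corollary itself but in establishing the two hypotheses it imports, which the paper defers to \cref{app:proofs}. Under those hypotheses the argument reduces to two elementary steps that I would carry out in sequence.

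First I would establish existence. Because $\rho$ is continuous on the closed interval $[c_L, c_U]$, and because $\rho(c_L) = \rho_{\min} < \rho^* < \rho_{\max} = \rho(c_U)$ by the choice of target, the IVT yields at least one interior point $c^* \in (c_L, c_U)$ with $\rho(c^*) = \rho^*$. Strict monotonicity is not needed for this half. Second I would establish uniqueness by contradiction: if two distinct points $c_1 < c_2$ in $(c_L, c_U)$ both satisfied $\rho(c_1) = \rho(c_2) = \rho^*$, then strict monotonicity of $\rho$ on $[c_L, c_U]$ would force $\rho(c_1) < \rho(c_2)$, contradicting their common value. Hence the calibrated scale $c^*$ is unique.

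The main obstacle is not in this essentially two-line IVT argument but in verifying the regularity conditions that underwrite it, which is precisely why the paper isolates them in the appendix. Continuity of $c \mapsto \rho(c)$ should follow from dominated convergence applied to the definitions of $\mathrm{MSEM}$ and $\bar{\mathcal{J}}$ in \cref{subsec:reliability}, since $\mathcal{J}(\theta; c)$ depends smoothly on $c$ through $\lambda_i(c) = c\,\lambda_{i,0}$ and the relevant integrand families admit integrable envelopes on the compact interval $[c_L, c_U]$. Strict monotonicity is the more delicate part, especially for $\bar{w}(c)$: pointwise differentiation shows $\partial_c \mathcal{J}(\theta; c) > 0$ on the bulk of the latent support, but $1/\mathcal{J}(\theta; c)$ can behave non-monotonically in tail regions where no item is well aligned, and as the paper notes, extreme $c$ can even degrade $\bar{w}$ by opening information gaps between items. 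The restriction to a \emph{practical} calibration interval in which the item grid remains sufficiently dense relative to discrimination is exactly what rules out these pathologies, so the real work of \cref{app:proofs} is to pin down that interval and justify differentiation under the integral—after which \cref{cor:existence} falls out as the corollary it claims to be.
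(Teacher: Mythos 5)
Your proof is correct and matches the paper's own argument in \cref{app:proofs} (\cref{cor:existence-uniqueness}), which likewise invokes the Intermediate Value Theorem for existence and injectivity of a strictly increasing function for uniqueness, with the regularity conditions (continuity and strict monotonicity on $[c_L,c_U]$) deferred to Conditions~A.1--A.3 and \cref{prop:strict-monotonicity}. Your closing remarks about where the substantive work lies---dominated convergence for continuity and the restriction to a practical calibration interval to rule out the non-monotone regime of $\bar{w}(c)$---accurately reflect how the appendix organizes that material.
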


Thus, once a calibration interval is chosen, the inverse design problem reduces to solving a well-posed one-dimensional root-finding problem in $c$. This study develops two algorithms---Empirical Quadrature Calibration (EQC) and Stochastic Approximation Calibration (SAC)---that approximate $c^*$ in practice.

\subsection{Achievable Reliability Bounds}
\label{subsec:bounds}

Even with an optimally chosen scale $c^*$, not every target reliability is attainable for a fixed test configuration. Structural features of the latent distribution and item pool impose \emph{bounds} on the range of achievable reliabilities. For a given configuration $\Psi$ and calibration interval $[c_L, c_U]$, define
\begin{equation}
\label{eq:bounds}
\rho_{\min} = \rho(c_L),
\qquad
\rho_{\max} = \rho(c_U),
\qquad
\rho_{\min} < \rho^* < \rho_{\max}.
\end{equation}
Any target $\rho^*$ in $(\rho_{\min}, \rho_{\max})$ admits a unique solution $c^*$ as described above; targets outside this range cannot be achieved without altering the underlying structure $\Psi$ or widening the calibration interval.

Several design factors jointly determine $\rho_{\max}$ (and, less dramatically, $\rho_{\min}$):
\begin{itemize}
    \item \textbf{Test length ($I$).} Holding the item pool fixed, adding items increases information across the trait continuum, raising the ceiling on marginal reliability. For very short tests, even aggressive scaling of discriminations cannot push $\rho$ beyond a moderate level.
    \item \textbf{Item parameter quality.} The distribution of baseline discriminations $\{\lambda_{i,0}\}$ controls how much information can be generated: item pools with many highly discriminating items support larger $\bar{\mathcal{J}}(c)$ and therefore higher potential $\rho_{\max}$, whereas pools dominated by weak items exhibit strong diminishing returns as $c$ increases.
    \item \textbf{Trait--difficulty alignment.} Reliability is highest when item difficulties are well aligned with the bulk of $G$. If most examinees fall in regions where the TIF is low (e.g., a skewed latent distribution against a nearly symmetric difficulty distribution), the MSEM remains large and $\rho_{\max}$ is depressed, no matter how $c$ is tuned.
\end{itemize}

In the accompanying \texttt{IRTsimrel} R package (see \cref{app:software} for implementation details), attempts to calibrate to a $\rho^*$ outside $(\rho_{\min}, \rho_{\max})$ return a boundary solution (either $c_L$ or $c_U$) together with a diagnostic warning, signaling that the requested reliability is incompatible with the current test configuration. \cref{app:bounds} develops the theoretical basis for these feasibility limits and provides practical guidance for checking target achievability; \cref{app:extended} illustrates how calibration precision degrades near feasibility boundaries.

\section{Calibration Algorithms}
\label{sec:algorithms}

\cref{sec:framework} formalized the reliability-targeted simulation as an inverse design problem: finding a global scaling factor $c^*$ such that the expected marginal reliability of the generated data matches a pre-specified target $\rho^*$. Formally, we seek the root of the expectation function:
\begin{equation}
\label{eq:root}
\mathbb{E}_{\Psi}[\rho(c)] - \rho^* = 0.
\end{equation}
Because the reliability function $\rho(c)$ involves complex integrals over latent distributions and item parameters---which may be non-normal or dependent---it generally does not admit a closed-form solution. Consequently, numerical methods must be employed to approximate $c^*$ \citep{kim2010estimation}.

This section details two complementary algorithms for solving this calibration problem: \emph{Empirical Quadrature Calibration (EQC)} and \emph{Stochastic Approximation Calibration (SAC)}. These algorithms are implemented in the accompanying \texttt{IRTsimrel} R package via the \texttt{eqc\_calibrate()} and \texttt{sac\_calibrate()} functions, respectively. Both algorithms rely on the realistic data-generating functions described in \cref{app:distributions} (for latent distributions) and \cref{app:items} (for item parameters via the Item Response Warehouse), treating the generation process as a modular component while focusing on the calibration mechanics.

\subsection{Overview and Design Philosophy}
\label{subsec:overview}

Our framework relies on a fundamental separation between the \emph{structure} of a test and its \emph{scale}. The \emph{structure} is captured by the configuration $\Psi = (G, H_\beta, H_\lambda, I)$ introduced in \cref{subsec:model}---encompassing the latent distribution, item parameter distributions, their dependencies, and test length---and is held fixed throughout calibration. The calibration algorithms operate solely on the \emph{scale} parameter $c$, adjusting the global discrimination intensity (via $\lambda_i(c) = c \cdot \lambda_{i,0}$) to achieve the target reliability $\rho^*$ without altering the underlying distributional characteristics of the test.

While both algorithms solve the same root-finding problem, they navigate the bias-variance trade-off differently. EQC fixes the Monte Carlo noise to create a smooth deterministic function, making it ideal for routine use. SAC embraces the noise, updating estimates dynamically, which makes it rigorously valid even when fixed quadrature is infeasible. We recommend the following decision rule for applied researchers (\cref{tab:algorithm-selection}).

\begin{table}[ht]
\centering
\caption{Algorithm Selection Decision Matrix}
\label{tab:algorithm-selection}
\begin{tabular}{lll}
\hline
\textbf{Scenario} & \textbf{Recommended} & \textbf{Rationale} \\
\hline
Routine simulation work & EQC & High speed, deterministic reproducibility, \\
 & & and negligible error ($< 0.01$). \\[3pt]
Independent validation & SAC (with EQC & Rigorous verification of EQC solutions \\
 & warm start) & against infinite population sampling. \\[3pt]
Complex custom DGPs & SAC & Handles dynamic dependencies or stochastic \\
 & & item pools where fixed grids are awkward. \\[3pt]
Targeting exact $\bar{w}$ & SAC & Targets the MSEM-based parameter directly \\
 & & without quadrature approximation bias. \\
\hline
\end{tabular}
\end{table}

\subsection{Algorithm 1: Empirical Quadrature Calibration (EQC)}
\label{subsec:eqc}

Empirical Quadrature Calibration approximates the reliability function $\rho(c)$ using a large, \emph{fixed} Monte Carlo sample that is reused throughout the calibration process. We refer to this fixed sample as the \emph{empirical quadrature}. This quadrature-based approximation mirrors common practice in IRT reliability computations, where integrals defining marginal reliability are approximated via Gaussian quadrature \citep[e.g.,][]{andersson2018large}. Conditional on this sample, the mapping $c \mapsto \hat{\rho}_M(c)$ becomes a smooth, deterministic, and strictly monotonic function. This transformation allows us to use Brent's method \citep{brent_algorithm_1971}, a robust root-finding algorithm, to locate the solution efficiently.

The EQC procedure consists of three stages:
\begin{enumerate}
    \item \textbf{Construct Empirical Quadrature (Once):} We draw a sample of size $M$ (default $M=10{,}000$) from the latent distribution $G$ to obtain $\boldsymbol{\theta} = \{\theta_m\}_{m=1}^M$. Simultaneously, we draw a single realization of baseline item parameters $\{(\beta_i, \lambda_{i,0})\}_{i=1}^I$ from their distribution $H$. These values remain frozen throughout the calibration.
    
    \item \textbf{Define Empirical Reliability Function:} For any candidate scale $c$, the calibrated discriminations are $\lambda_i(c) = c \cdot \lambda_{i,0}$. We compute the test information $\mathcal{J}(\theta_m; c)$ for each quadrature point. EQC targets the average-information reliability $\tilde{\rho}$, which ensures computational stability and monotonicity of the objective function.\footnote{Empirical investigation revealed that the MSEM-based reliability $\bar{w}$ can produce a non-monotone objective under certain item configurations, making root-finding unreliable. EQC is therefore restricted to the average-information metric; users requiring exact $\bar{w}$ targeting should use SAC.} The empirical reliability function is:
    \begin{equation}
    \label{eq:eqc-reliability}
    \hat{\bar{\mathcal{J}}}_M(c) = \frac{1}{M} \sum_{m=1}^M \mathcal{J}(\theta_m; c),
    \qquad
    \hat{\rho}_M(c) = \frac{\hat{\sigma}^2_\theta \, \hat{\bar{\mathcal{J}}}_M(c)}{\hat{\sigma}^2_\theta \, \hat{\bar{\mathcal{J}}}_M(c) + 1}.
    \end{equation}
    
    \item \textbf{Solve via Root-Finding:} Since $\hat{\rho}_M(c)$ is strictly increasing in $c$, we apply Brent's method to find the unique $c^*$ such that $\hat{\rho}_M(c^*) - \rho^* = 0$ within a tolerance $\varepsilon$.
\end{enumerate}

\begin{mdframed}[linewidth=0.5pt, roundcorner=5pt]
\textbf{Algorithm 1: Empirical Quadrature Calibration (EQC)}

\smallskip
\textbf{Input:} Target $\rho^*$, Generators $G, H$, Quadrature size $M$, Bounds $[c_L, c_U]$.

\textbf{Output:} Calibrated scale $c^*_{\text{EQC}}$.

\begin{enumerate}
    \item \textbf{Initialize (Fixed):} Draw $\boldsymbol{\theta}_{1:M} \sim G$ and baseline items $\boldsymbol{\beta}, \boldsymbol{\lambda}_0 \sim H$.
    \item \textbf{Define Objective Function $f(c)$:}
    \begin{enumerate}
        \item[(a)] Set $\boldsymbol{\lambda} \leftarrow c \cdot \boldsymbol{\lambda}_0$.
        \item[(b)] Compute information vector $\mathbf{J} = \mathcal{J}(\boldsymbol{\theta}; \boldsymbol{\beta}, \boldsymbol{\lambda})$.
        \item[(c)] Compute $\hat{\rho}_M(c)$ via \cref{eq:eqc-reliability}.
        \item[(d)] Return $\hat{\rho}_M(c) - \rho^*$.
    \end{enumerate}
    \item \textbf{Solve:} $c^* \leftarrow \text{BrentMethod}(f, \text{bounds}=[c_L, c_U])$.
    \item \textbf{Return} $c^*$.
\end{enumerate}
\end{mdframed}

By the Strong Law of Large Numbers, $\hat{\rho}_M(c)$ converges almost surely to the population reliability $\rho(c)$ as $M \to \infty$, with a Monte Carlo error of order $O(M^{-1/2})$. In practice, the default $M=10{,}000$ yields calibration accuracy within $\pm 0.001$, sufficient for most simulation studies. Brent's method ensures superlinear convergence in finding the root, making EQC extremely fast. The default bounds of $c \in [0.3, 3]$ cover most realistic item pools. For typical test lengths (e.g., 20--50 items), this interval generally spans reliability levels from approximately $0.20$ to $0.99$, encompassing virtually all scenarios of practical psychometric interest. However, these bounds can be adjusted if $\rho^*$ lies near the theoretical boundaries defined in \cref{subsec:bounds}.

\subsection{Algorithm 2: Stochastic Approximation Calibration (SAC)}
\label{subsec:sac}

Stochastic Approximation Calibration addresses scenarios where fixing a quadrature grid is impractical or where exact targeting of the MSEM-based reliability is required. Instead of a fixed sample, SAC employs the Robbins--Monro method \citep{robbins_monro_1951}, updating the scale parameter $c$ iteratively using \emph{fresh} Monte Carlo samples at each step \citep{toulis_proximal_2021}. This allows the algorithm to integrate over all sources of randomness in the data-generating process.

Let $c_n$ be the scale factor at iteration $n$. At each step, we draw a fresh batch of data, compute the sample reliability $\hat{\rho}_n$, and update $c$ according to:
\begin{equation}
\label{eq:sac-update}
c_{n+1} = \Pi_{[c_L, c_U]} \left[ c_n - a_n (\hat{\rho}_n - \rho^*) \right],
\end{equation}
where $\Pi$ denotes projection onto the feasible interval, and $a_n$ is a decreasing step size sequence defined by $a_n = a/(n + A)^\gamma$.

To stabilize convergence and ensure asymptotic normality, we employ \emph{Polyak--Ruppert averaging} \citep{polyak_acceleration_1992, gadat_optimal_2023}. The final estimator is not the last iterate, but the average of the iterates after a burn-in period $B$:
\begin{equation}
\label{eq:sac-average}
c^*_{\text{SAC}} = \frac{1}{N - B} \sum_{n=B+1}^{N} c_n.
\end{equation}

\begin{mdframed}[linewidth=0.5pt, roundcorner=5pt]
\textbf{Algorithm 2: Stochastic Approximation Calibration (SAC)}

\smallskip
\textbf{Input:} Target $\rho^*$, Iterations $N$, Burn-in $B$, Step parameters $(a, A, \gamma)$.

\textbf{Output:} Calibrated scale $c^*_{\text{SAC}}$.

\begin{enumerate}
    \item \textbf{Initialize:} Set $c_0$ (optionally using $c^*_{\text{EQC}}$ as warm start).
    \item \textbf{Iterate ($n = 1 \dots N$):}
    \begin{enumerate}
        \item[(a)] \textbf{Sample:} Draw fresh batch $\boldsymbol{\theta} \sim G$ and items $\Psi \sim H$.
        \item[(b)] \textbf{Estimate:} Compute $\hat{\rho}_n$ (typically MSEM-based) at scale $c_{n-1}$.
        \item[(c)] \textbf{Update:} $c_n \leftarrow c_{n-1} - a_n(\hat{\rho}_n - \rho^*)$ (with projection).
    \end{enumerate}
    \item \textbf{Average:} Compute mean of $\{c_n\}$ for $n > B$.
    \item \textbf{Return} Average.
\end{enumerate}
\end{mdframed}

Users may observe that EQC and SAC yield slightly different calibrated values, with $c^*_{\text{SAC}}$ typically exceeding $c^*_{\text{EQC}}$ by 5--8\%. This is not an algorithmic error but a reflection of the different reliability definitions targeted.

EQC is typically configured to target the \emph{Average Information Reliability} ($\tilde{\rho}$) for computational efficiency, whereas SAC is naturally suited to target the \emph{MSEM-based Reliability} ($\bar{w}$) directly. As established in \cref{subsec:reliability}, Jensen's inequality implies $\tilde{\rho} \ge \bar{w}$ because the harmonic mean of information (used in MSEM) is less than or equal to the arithmetic mean. Consequently, to achieve the \emph{same} numerical target $\rho^*$, the MSEM-based approach (SAC) requires a higher discrimination scale than the average-information approach (EQC). Ideally, SAC provides the ``exact'' solution for $\bar{w}$, while EQC provides a close, computationally efficient approximation.

For optimal performance, we recommend using an \emph{EQC warm start}: run EQC first to obtain an approximate solution, and use this value to initialize SAC. This hybrid strategy significantly reduces the burn-in period. Default hyperparameters of $N=300$, $B=150$, $a=1$, $A=50$, and $\gamma=0.67$ provide robust convergence across a wide range of latent distributions.

\section{Validation Study}
\label{sec:validation}

This section reports a comprehensive validation study designed to assess whether the proposed calibration algorithms reliably solve the inverse design problem across realistic testing scenarios. The study evaluates (a) calibration accuracy in terms of the deviation between achieved and target reliability, (b) robustness to non-normal latent distributions and empirically realistic item pools, and (c) the theoretical and practical implications of targeting different reliability estimands introduced in \cref{sec:framework}.

\subsection{Objectives and Design}
\label{subsec:objectives}

The validation study was designed to answer four questions:
\begin{enumerate}
    \item \textbf{Calibration accuracy.} Do EQC and SAC achieve a pre-specified target reliability $\rho^*$ across a broad set of data-generating processes?
    \item \textbf{Robustness.} Is calibration performance stable under non-normal latent distributions and empirically realistic item pools?
    \item \textbf{Algorithmic agreement.} When both methods target the same estimand (average-information reliability $\tilde{\rho}$), do they yield similar calibrated scales $c^*$?
    \item \textbf{Reliability estimand choice.} How does the required calibration scale differ when SAC targets the MSEM-based reliability $\bar{w}$ versus the average-information reliability $\tilde{\rho}$, and do empirical results align with the Jensen's-inequality relationship $\tilde{\rho} \ge \bar{w}$ (\cref{subsec:reliability})?
\end{enumerate}

\paragraph{Experimental design and manipulated factors.}
We conducted a fully crossed factorial Monte Carlo study varying the following structural factors:
\begin{itemize}
    \item \textbf{Latent distribution shape ($G$):} Normal, bimodal, positively skewed, and heavy-tailed (standardized to $\sigma_\theta^2=1$).
    \item \textbf{IRT model:} Rasch ($\lambda_{i,0}\equiv 1$) and 2PL (heterogeneous $\lambda_{i,0}$).
    \item \textbf{Item source:} Empirical item pools drawn from the Item Response Warehouse (IRW) versus parametric item pools.
    \item \textbf{Test length ($I$):} 15, 30, and 60 items.
    \item \textbf{Sample size for generated datasets ($N$):} 100, 200, 500, 1,000, and 2,000 persons.
\end{itemize}

For parametric item pools, item difficulties were generated as $\beta_i \sim \mathcal{N}(0,1)$. Under the 2PL model, baseline discriminations were generated as $\log(\lambda_{i,0}) \sim \mathcal{N}(0,0.3^2)$ with modest negative dependence between $\beta_i$ and $\lambda_{i,0}$ induced via a Gaussian copula (target correlation $\rho=-0.3$) \citep{nelsen_introduction_2006}. Under the Rasch model, $\lambda_{i,0}\equiv 1$ by definition. For IRW pools, $(\beta_i,\lambda_{i,0})$ were sampled from the empirical joint distribution used by our IRW generator module. Importantly, reliability targeting does not require abandoning realistic item pools: we demonstrate calibration under empirical difficulty distributions from the IRW, preserving key distributional features while controlling the overall information regime. Full details of item parameter generation are provided in \cref{app:items}.

The non-normal latent distributions were parameterized to represent common departures from normality: a symmetric bimodal mixture (mode separation $\delta=0.8$), a positively skewed distribution (shape parameter $k=4$), and a heavy-tailed distribution (Student-$t$ with $\text{df}=5$). Detailed specifications of these distributions are given in \cref{app:distributions}.

Target reliability levels were adapted by test length to ensure feasibility (\cref{subsec:bounds}): $\rho^* \in \{0.30, 0.40, 0.50, 0.60\}$ for $I=15$; $\rho^* \in \{0.40, 0.50, 0.60, 0.70\}$ for $I=30$; and $\rho^* \in \{0.50, 0.60, 0.70, 0.80\}$ for $I=60$. This scheme reflects the practical upper bounds on achievable reliability for short tests implied by the information-based limits in \cref{subsec:bounds}. In total, the design yields $4 \times 2 \times 2 \times 3 \times 5 \times 4 = 960$ conditions.

\paragraph{Calibration configuration and evaluation.}
EQC was implemented with a fixed quadrature of size $M=20{,}000$ and deterministic root-finding over $c \in [0.1,10]$. Following diagnostic results indicating that the MSEM-based objective can be non-monotone in $c$ for some item configurations, EQC was evaluated only for $\tilde{\rho}$.

SAC used 1,000 stochastic approximation iterations with 1,000 Monte Carlo draws per iteration (with Polyak--Ruppert averaging after burn-in) and was warm-started at the EQC solution to reduce burn-in. We ran two SAC variants: one targeting $\tilde{\rho}$ and one targeting $\bar{w}$.

Calibration accuracy was summarized by the deviation
\begin{equation}
\label{eq:deviation}
\Delta \;=\; \rho_{\text{achieved}} - \rho^*,
\end{equation}
where $\rho_{\text{achieved}}$ denotes the reliability implied by the calibrated configuration, computed via Monte Carlo integration. For EQC, $\rho_{\text{achieved}}$ corresponds to the empirical reliability function evaluated at the calibrated root; thus EQC deviations primarily reflect numerical root-finding tolerance and finite-precision arithmetic. For SAC, $\rho_{\text{achieved}}$ was recomputed using an independent Monte Carlo evaluation sample, so deviations reflect stochastic approximation error and Monte Carlo variability.

Finally, to separate calibration error from finite-sample variability, we generated $K=2{,}000$ replicated response datasets per condition using the EQC-calibrated configuration and each of the five sample sizes $N$. These replications quantify how much the \emph{empirically realized} reliability varies across datasets even when the population-level design target is held fixed; replication-focused results are summarized in \cref{app:extended}.

\subsection{Results}
\label{subsec:results}

\paragraph{Overall calibration accuracy.}
\cref{tab:calibration-accuracy} summarizes calibration accuracy across all 960 conditions. EQC achieved effectively perfect calibration of $\tilde{\rho}$ (MAE $\approx 10^{-5}$; 100\% of conditions within even the strictest tolerance thresholds). SAC achieved calibration that was accurate on average but noisier: both SAC variants exhibited near-zero mean deviation (unbiasedness) with MAE $\approx 0.015$ and SD $\approx 0.024$.

\begin{table}[ht]
\centering
\caption{Calibration Accuracy: Deviation from Target Reliability by Algorithm}
\label{tab:calibration-accuracy}
\begin{tabular}{lrrrrrrrr}
\hline
Algorithm & Cond. & Mean $\Delta$ & SD & MAE & Max $|\Delta|$ & $<.01$ & $<.02$ & $<.05$ \\
\hline
EQC & 960 & $-0.00000$ & 0.00001 & 0.00001 & 0.0000 & 100.0\% & 100.0\% & 100.0\% \\
SAC ($\tilde{\rho}$) & 960 & $-0.00006$ & 0.02396 & 0.01503 & 0.1693 & 53.3\% & 73.0\% & 94.4\% \\
SAC ($\bar{w}$) & 960 & $-0.00062$ & 0.02437 & 0.01582 & 0.1792 & 50.7\% & 71.7\% & 94.9\% \\
\hline
\end{tabular}

\smallskip
\begin{minipage}{0.95\textwidth}
\footnotesize
\emph{Note.} Cond.\ = number of design conditions; $\Delta$ = deviation (achieved $-$ target reliability); MAE = Mean Absolute Error. SAC ($\tilde{\rho}$) targets average-information reliability; SAC ($\bar{w}$) targets MSEM-based reliability. Percentages indicate the proportion of conditions with absolute deviation below each threshold.
\end{minipage}
\end{table}

The qualitative pattern in \cref{tab:calibration-accuracy} is clarified by \cref{fig:achieved-vs-target}, which plots achieved reliability against the target level across all conditions. EQC points lie exactly on the identity line. This near-zero deviation is largely by construction: conditional on the fixed quadrature sample, EQC defines a deterministic empirical reliability function $\hat{\rho}_M(c)$ and finds $c^*$ such that $\hat{\rho}_M(c^*)=\rho^*$. The reported achieved reliability is then $\rho_{\text{achieved}}=\hat{\rho}_M(c^*)$, which equals $\rho^*$ up to numerical tolerance.

By contrast, SAC exhibits symmetric scatter around the identity line (\cref{fig:achieved-vs-target}), reflecting stochastic approximation error and Monte Carlo variability. SAC deliberately reintroduces randomness at each iteration to estimate the population-level mapping under the full data-generating process, which yields unbiased but noisier calibration. The mean trend for SAC coincides with the identity line, confirming approximate unbiasedness at the design level. The remaining dispersion is expected under Robbins--Monro theory because (i) the SAC iterates converge at $O_p(n^{-1/2})$ and (ii) post-calibration evaluation uses independent Monte Carlo draws.

These results suggest a practical division of labor: EQC should be preferred for routine deterministic calibration of $\tilde\rho$, given its numerical exactness and computational efficiency. SAC remains valuable for (i) independent stochastic validation of EQC solutions and (ii) direct targeting of $\bar{w}$ or other reliability functionals that are awkward to evaluate under fixed quadrature.

\begin{figure}[ht]
\centering
\includegraphics[width=\textwidth]{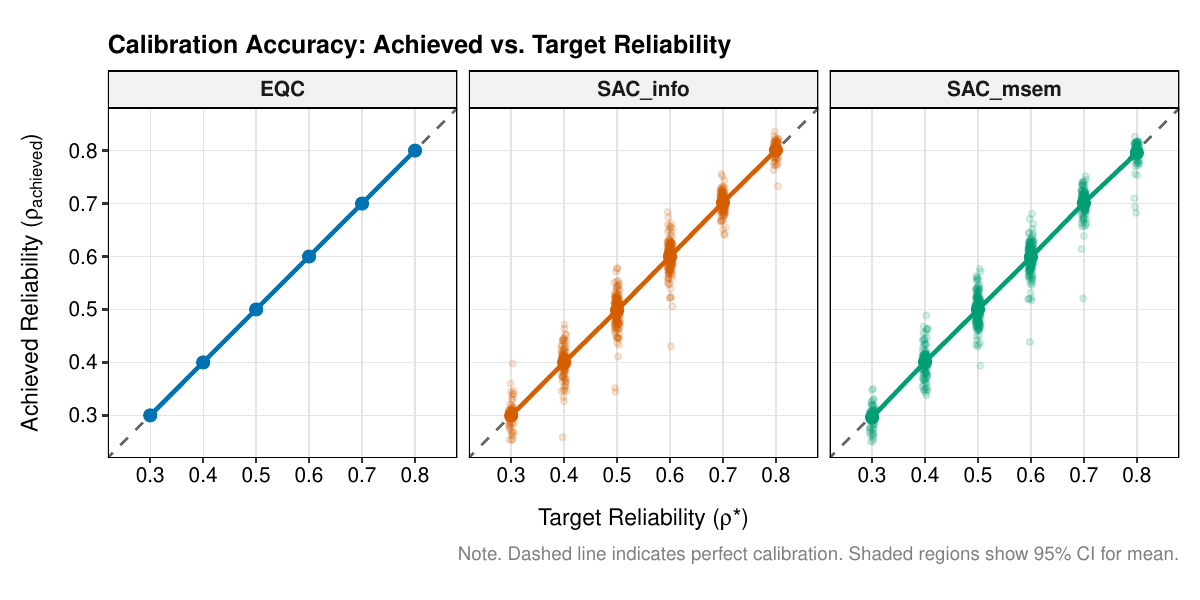}
\caption{Calibration Accuracy: Achieved vs.\ Target Reliability}
\label{fig:achieved-vs-target}

\smallskip
\begin{minipage}{0.95\textwidth}
\footnotesize
\emph{Note.} The dashed diagonal line indicates perfect calibration ($\rho_{\text{achieved}}=\rho^*$). Semi-transparent points represent individual conditions. Solid lines show the condition-averaged achieved reliability for each algorithm.
\end{minipage}
\end{figure}

\paragraph{Performance by target level.}
While \cref{tab:calibration-accuracy} aggregates across targets, \cref{tab:by-target} shows achieved reliability by target level. EQC tracks each target essentially exactly (SD $\approx 10^{-5}$). SAC exhibits small mean discrepancies that remain close to the target at all levels, with slightly larger departures at the edges of the feasible range (e.g., $\rho^*=0.30$ or $0.80$). This edge behavior is consistent with the feasibility logic in \cref{subsec:bounds}: near the lower and upper reliability bounds, the mapping between $c$ and $\rho$ becomes more sensitive to Monte Carlo error, and (for $\bar{w}$) the harmonic-mean structure of $\text{MSEM}=\mathbb{E}[1/\mathcal{J}(\theta)]$ increases sensitivity to low-information regions.

\begin{table}[ht]
\centering
\caption{Achieved Reliability by Target Level and Algorithm}
\label{tab:by-target}
\begin{tabular}{rrrrrr}
\hline
$\rho^*$ & Cond. & EQC Mean & EQC SD & SAC ($\tilde{\rho}$) Mean & SAC ($\bar{w}$) Mean \\
\hline
0.30 & 80 & 0.3000 & 0.00001 & 0.2997 & 0.2961 \\
0.40 & 160 & 0.4000 & 0.00002 & 0.3999 & 0.4011 \\
0.50 & 240 & 0.5000 & 0.00001 & 0.4984 & 0.4999 \\
0.60 & 240 & 0.6000 & 0.00001 & 0.6000 & 0.5987 \\
0.70 & 160 & 0.7000 & 0.00001 & 0.7020 & 0.7014 \\
0.80 & 80 & 0.8000 & 0.00001 & 0.8007 & 0.7957 \\
\hline
\end{tabular}

\smallskip
\begin{minipage}{0.95\textwidth}
\footnotesize
\emph{Note.} $\rho^*$ = target reliability; Cond.\ = number of design conditions at each target level. Adaptive targets were restricted to plausible ranges by test length: $I=15$ ($\rho^* \in \{0.30, 0.40, 0.50, 0.60\}$), $I=30$ ($\rho^* \in \{0.40, 0.50, 0.60, 0.70\}$), and $I=60$ ($\rho^* \in \{0.50, 0.60, 0.70, 0.80\}$).
\end{minipage}
\end{table}

\paragraph{Robustness across latent distribution shapes.}
\cref{fig:by-latent-shape} shows the distribution of deviations $\Delta$ across latent distribution shapes. EQC deviations are effectively zero across all shapes. For SAC, deviations remain centered near zero under all latent distributions, with broadly similar dispersion. Heavy-tailed and skewed distributions exhibit slightly more extreme outliers, which is expected because these distributions allocate more probability mass to trait regions where the test information function is low and more variable. Nevertheless, the median deviations remain close to zero, indicating that the calibration procedure remains stable even under substantial departures from normality.

\begin{figure}[ht]
\centering
\includegraphics[width=\textwidth]{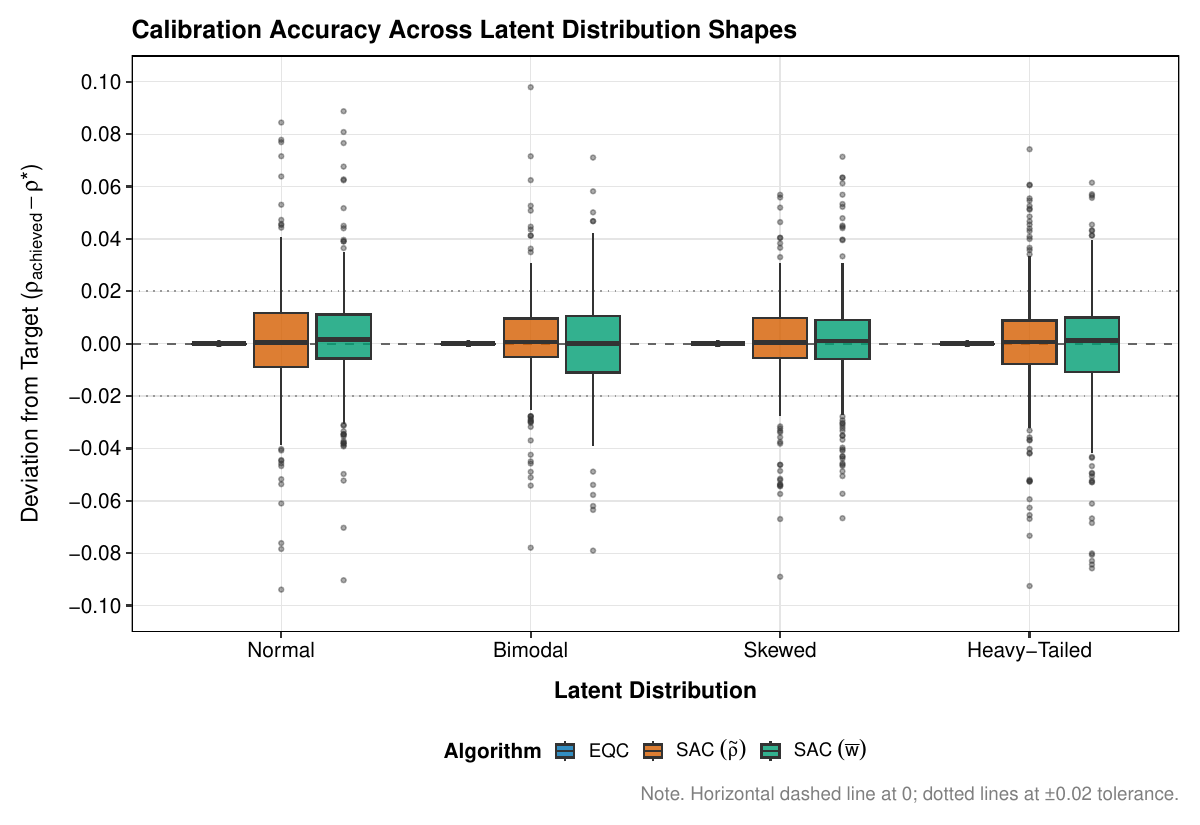}
\caption{Calibration Accuracy Across Latent Distribution Shapes}
\label{fig:by-latent-shape}

\smallskip
\begin{minipage}{0.95\textwidth}
\footnotesize
\emph{Note.} The outcome is the deviation from target reliability ($\Delta=\rho_{\text{achieved}}-\rho^*$). The horizontal dashed line is $\Delta=0$; dotted lines denote a $\pm 0.02$ tolerance band.
\end{minipage}
\end{figure}

\paragraph{Calibration accuracy by IRT model and item source.}
\cref{fig:by-model-source} decomposes calibration deviations by IRT model and item source. The most notable pattern is that SAC variability is largest for 2PL conditions using IRW item pools. Empirical pools lead to irregular difficulties and heterogeneous discrimination structures, which induce greater variability in the test information function and consequently larger Monte Carlo variability in both $\tilde{\rho}$ and $\bar{w}$. By contrast, Rasch conditions---especially under parametric item generation---exhibit the tightest SAC deviations, consistent with the relative homogeneity of information when discriminations are fixed at unity.

\begin{figure}[ht]
\centering
\includegraphics[width=\textwidth]{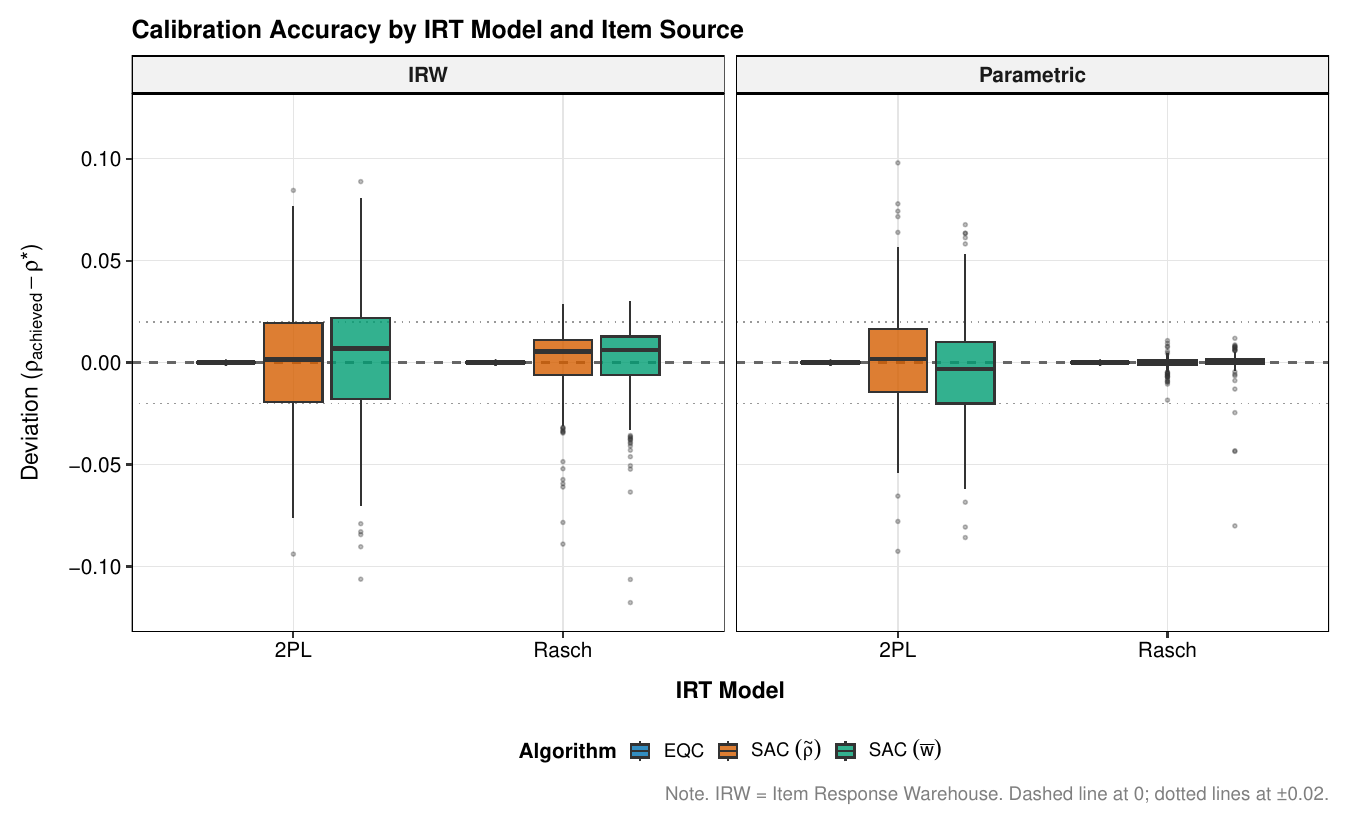}
\caption{Calibration Accuracy by IRT Model and Item Source}
\label{fig:by-model-source}

\smallskip
\begin{minipage}{0.95\textwidth}
\footnotesize
\emph{Note.} The outcome is the deviation from target reliability ($\Delta=\rho_{\text{achieved}}-\rho^*$). The horizontal dashed line is $\Delta=0$; dotted lines denote a $\pm 0.02$ tolerance band. IRW = Item Response Warehouse.
\end{minipage}
\end{figure}

\paragraph{Algorithm agreement: EQC vs.\ SAC discrimination scale.}
When targeting the same estimand ($\tilde{\rho}$), EQC and SAC should yield similar calibrated scales $c^*$ (\cref{sec:algorithms}). \cref{fig:eqc-vs-sac} compares the calibrated scale from EQC to the scale from SAC ($\tilde{\rho}$). Points cluster near the identity line, indicating strong agreement in the solution to the inverse design problem. The remaining dispersion is attributable to stochastic approximation variability and to the fact that EQC conditions on a fixed quadrature realization while SAC uses fresh Monte Carlo draws.

\begin{figure}[ht]
\centering
\includegraphics[width=0.85\textwidth]{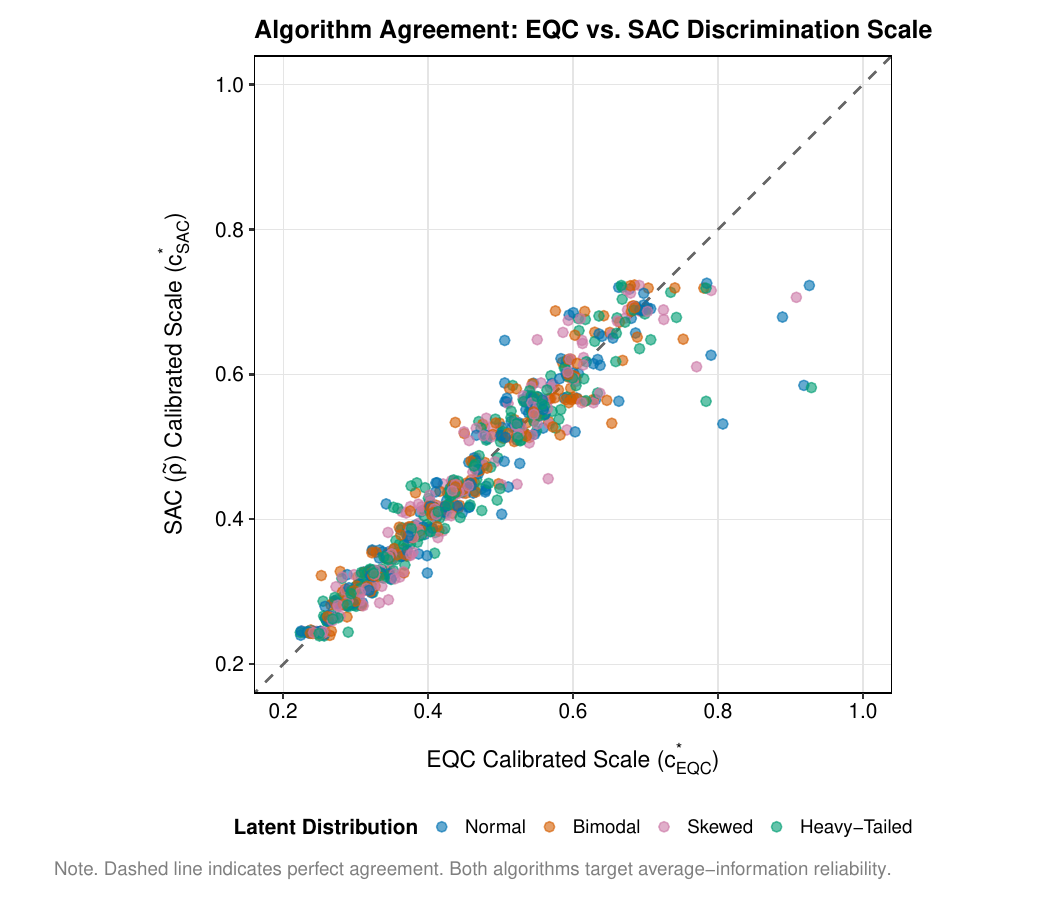}
\caption{Algorithm Agreement: EQC vs.\ SAC Discrimination Scale}
\label{fig:eqc-vs-sac}

\smallskip
\begin{minipage}{0.95\textwidth}
\footnotesize
\emph{Note.} The dashed diagonal line indicates perfect agreement in the calibrated scale ($c^*_{\text{EQC}}=c^*_{\text{SAC}}$). Both algorithms target average-information reliability $\tilde{\rho}$. Point colors indicate the latent distribution shape.
\end{minipage}
\end{figure}

\paragraph{Jensen's inequality: SAC ($\tilde{\rho}$) vs.\ SAC ($\bar{w}$).}
\cref{subsec:reliability} established that $\tilde{\rho}(c)\ge \bar{w}(c)$ by Jensen's inequality, because $\bar{w}$ depends on $\mathbb{E}[1/\mathcal{J}(\theta)]$ (a harmonic-mean structure) whereas $\tilde{\rho}$ depends on $\mathbb{E}[\mathcal{J}(\theta)]$ (an arithmetic-mean structure). Consequently, achieving the same target reliability should require a weakly larger scale when targeting $\bar{w}$ than when targeting $\tilde{\rho}$. Formally, if $c^*_{\tilde{\rho}}$ solves $\tilde{\rho}(c)=\rho^*$, then $\bar{w}(c^*_{\tilde{\rho}})\le \rho^*$, implying that the solution $c^*_{\bar{w}}$ satisfying $\bar{w}(c)=\rho^*$ must obey $c^*_{\bar{w}}\ge c^*_{\tilde{\rho}}$.

\cref{fig:jensen} confirms this implication empirically: the SAC ($\bar{w}$) scales lie uniformly above the SAC ($\tilde{\rho}$) scales. The practical magnitude of this gap depends on the ``Jensen gap'' (how variable $\mathcal{J}(\theta)$ is across the population). When the test information function is nearly flat, $\tilde{\rho}\approx\bar{w}$ and the two scales nearly coincide; when information varies strongly across $\theta$, the gap widens and SAC ($\bar{w}$) requires a meaningfully larger $c^*$. In particular, heavy-tailed latent distributions exhibit a somewhat larger gap, because they allocate more probability mass to extreme $\theta$ values where test information is typically lower.

\begin{figure}[ht]
\centering
\includegraphics[width=0.85\textwidth]{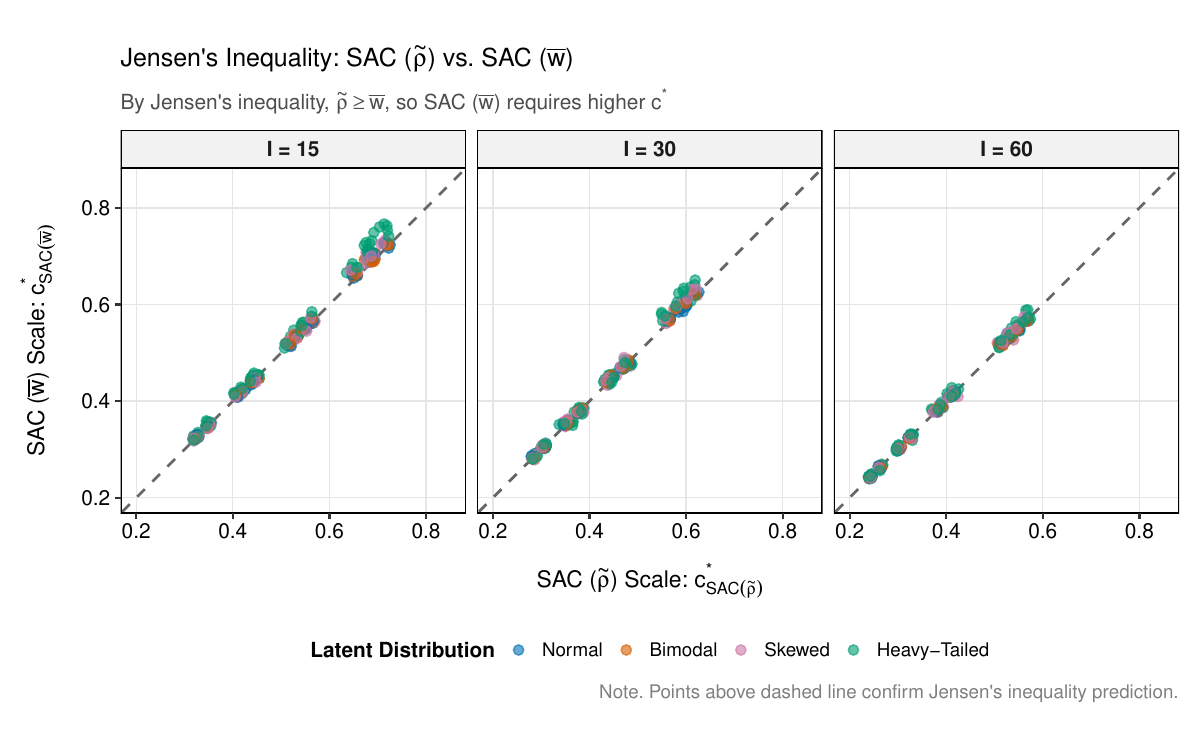}
\caption{Jensen's Inequality: SAC ($\tilde{\rho}$) vs.\ SAC ($\bar{w}$)}
\label{fig:jensen}

\smallskip
\begin{minipage}{0.95\textwidth}
\footnotesize
\emph{Note.} The dashed diagonal line indicates equality of scales ($c^*_{\text{SAC}(\tilde{\rho})}=c^*_{\text{SAC}(\bar{w})}$). Points above the line indicate that targeting $\bar{w}$ requires a larger calibration scale, consistent with Jensen's inequality ($\tilde{\rho}\ge \bar{w}$).
\end{minipage}
\end{figure}

\paragraph{Summary.}
The validation study yields three primary conclusions. First, EQC solves the inverse design problem for $\tilde{\rho}$ to numerical precision. Across all 960 conditions, EQC deviations are essentially zero, reflecting deterministic root-finding on a fixed empirical quadrature.

Second, SAC provides an accurate, unbiased calibration method with predictable stochastic variability, and uniquely enables direct calibration to $\bar{w}$. Although SAC is noisier than EQC, its deviations remain centered at zero, and a large majority of conditions fall within practically useful tolerance bands (e.g., $\pm 0.02$ or $\pm 0.05$).

Third, the empirical results confirm the theoretical implications of estimand choice. EQC and SAC agree closely when targeting $\tilde{\rho}$, and the calibrated scale required to target $\bar{w}$ is systematically larger than that required to target $\tilde{\rho}$---a direct consequence of Jensen's inequality. Together, these findings validate both the algorithmic toolkit (\cref{sec:algorithms}) and the reliability framework (\cref{sec:framework}), supporting reliability-targeted simulation as a practical and theoretically grounded design strategy.


\section{Discussion and Conclusion}
\label{sec:discussion}

Reliability is the central signal-to-noise quantity in measurement-based simulation: it determines how much information about the latent trait is present in generated responses and, consequently, how difficult the downstream estimation problem truly is. Yet IRT simulation studies have rarely treated marginal reliability as an explicit design factor, despite its direct interpretability and its role in determining ecological realism and cross-condition comparability. The framework developed in this paper closes this gap by turning marginal reliability from an incidental by-product of item and population generation into an explicit input---one that can be targeted, varied, and reported with the same status as $N$, $I$, and other design factors.

\subsection{Summary of Contributions}
\label{subsec:contributions}

This paper makes three contributions.

First, we formalize \emph{reliability-targeted IRT simulation} as an inverse design problem. Under a specified structural configuration---latent distribution $G$, item-parameter distributions, and test length---we seek a global discrimination scaling factor $c^*$ such that the induced population reliability equals a target level $\rho^*$. The key idea is to separate \emph{structure} from \emph{scale}: realistic item and population features are generated first, and then overall measurement strength is tuned by a single multiplier applied to discrimination. Concretely, if $\lambda_{i,0}$ denotes a baseline discrimination from the chosen item generator, the calibrated discrimination is $\lambda_i^* = c^*\,\lambda_{i,0}$. Under mild regularity conditions, the reliability function is strictly increasing in $c$, which yields existence and uniqueness of the solution $c^*$ for any feasible target. This theoretical foundation clarifies that marginal reliability is not a fixed attribute of a model class but a functional of the full data-generating process---and that it can often be controlled while preserving the qualitative structure of realistic item pools (e.g., difficulty coverage and discrimination heterogeneity). That said, extreme targets---especially unusually low reliability for long tests or unusually high reliability under sparse difficulty coverage---may require global scales that yield discrimination magnitudes outside empirically plausible ranges. In such cases, we recommend treating plausibility bounds on $c$ (or on the resulting $\lambda_i$) as an additional design constraint and, if needed, revising the structural configuration (e.g., test length, difficulty targeting, or item-pool quality) rather than relying on extreme scaling alone.

Second, we introduce two complementary calibration algorithms for solving the inverse design problem. Empirical Quadrature Calibration (EQC) is designed for routine use: conditional on a fixed quadrature sample, it constructs a deterministic empirical approximation to the population reliability function and uses numerical root finding to obtain $c^*$ rapidly and stably. Stochastic Approximation Calibration (SAC) provides a more general alternative based on Robbins--Monro updates and Polyak--Ruppert averaging. SAC is particularly useful when deterministic quadrature is inconvenient, when the user wants an independent stochastic check of an EQC-based design, or when targeting reliability estimands that are most naturally evaluated via Monte Carlo.

Third, we validate the framework and algorithms in a comprehensive Monte Carlo study varying latent distribution shape, IRT model (Rasch vs.\ 2PL), item source (empirical vs.\ parametric), test length, and target reliability. The results show that EQC attains essentially exact calibration in the targeted estimand, while SAC remains approximately unbiased with quantifiable Monte Carlo variability---accuracy levels that are sufficient for most practical simulation research. The validation also clarifies a key interpretive point: average-information and MSEM-based marginal reliabilities are not numerically interchangeable. Because Jensen's inequality implies a systematic gap between the arithmetic-mean and harmonic-mean information functionals, designs that target average-information reliability and designs that target MSEM-based reliability will generally require different calibrated scales, even under identical structural settings. This is not a flaw of the calibration; it reflects a substantive distinction in what ``reliability'' is being held constant.

\subsection{Practical Recommendations for Simulation Researchers}
\label{subsec:recommendations}

Several pragmatic guidelines follow directly from the theoretical results and validation patterns.

\paragraph{Treat marginal reliability as a primary design factor.} In many simulation settings, the most consequential difference between ``easy'' and ``hard'' datasets is not sample size or estimator choice but the amount of measurement information present in the responses.  Reliability should therefore be (i) explicitly targeted during design, (ii) varied as an experimental factor when studying robustness, and (iii) reported alongside other design features such as $N$, $I$, latent distribution shape, and item pool characteristics. This practice improves ecological validity and reduces the risk that apparent method differences are artifacts of uncontrolled information regimes.

\paragraph{When is targeting necessary versus reporting sufficient?} Reliability targeting is not intended to invalidate conventional parameter-based simulation designs. In many studies, transparently reporting the realized marginal reliability in each condition may be sufficient---especially when structural conditions (item pool, latent distribution) are held constant and the focus is on other experimental factors. The contribution of the present framework is to provide a principled and reproducible option for situations where (i) one wishes to vary informativeness directly as an experimental factor, (ii) one wishes to equalize informativeness across otherwise incomparable structural conditions (e.g., different item pools or latent distributions), or (iii) reliability itself is a substantively meaningful axis of robustness. For transparency, we recommend reporting the target $\rho^*$, the achieved reliability under the calibration metric, the calibrated scale $c^*$, and a brief summary of the resulting discrimination distribution (e.g., mean and SD, or selected quantiles such as the 10th, 50th, and 90th percentiles). This reporting template allows readers to assess both the intended and realized information regimes and to judge the plausibility of the calibrated item parameters.

\paragraph{Choose the reliability estimand to match the substantive goal.} This paper distinguishes average-information reliability (denoted $\tilde{\rho}$) and MSEM-based marginal reliability (denoted $\bar{w}$). Average-information reliability aligns with an ``expected information'' description of measurement quality and is computationally convenient; $\bar{w}$ corresponds directly to a variance-decomposition definition based on the expected conditional error variance \citep{kim2010estimation, kim2012note}. Because these estimands are not numerically interchangeable (and satisfy $\tilde{\rho}\ge \bar{w}$ under the conditions described in \cref{subsec:reliability}), simulation designs should state which estimand is being targeted. As a practical rule, targeting $\bar{w}$ is preferable when the evaluation criterion is tightly tied to conditional error variance (or when a conservative target is desired), whereas targeting $\tilde{\rho}$ is sensible when the goal is to align conditions by an information-based summary of overall test precision---provided the interpretation is made explicit.

\paragraph{Check feasibility early and interpret boundary behavior correctly.} For short tests or restrictive item pools, some targets may be unattainable even with extreme scaling. Researchers should therefore compute (or at least approximate) achievable reliability bounds implied by their structural configuration and then either (i) adjust the target, (ii) increase test length, or (iii) widen the admissible scaling interval. When targets lie near feasibility limits, small changes in discrimination scaling can produce disproportionately large changes in reliability; such ``edge targets'' should be used intentionally rather than inadvertently.

\paragraph{Use EQC for routine calibration, and SAC for generality or independent validation.} A practical workflow is to calibrate with EQC to obtain $c^*$ quickly and deterministically, and then (optionally) run SAC initialized at the EQC solution to provide an independent stochastic check or to target $\bar{w}$ directly. Warm-starting SAC from EQC reduces burn-in and improves efficiency. If strict adherence to narrow tolerance bands is required, EQC is strongly preferred; if modest stochastic variability is acceptable---or if the target estimand is most naturally handled via Monte Carlo---SAC provides a flexible and theoretically grounded alternative.

Finally, even when population-level reliability is tightly controlled, the \emph{realized} reliability in finite samples will vary across replications. This distinction echoes the point that reported IRT reliability coefficients are estimates with nontrivial sampling variability, yet standard errors or confidence intervals are often omitted; see \citet{andersson2018large} for large-sample variances and CI construction. For studies aiming to mimic finite-sample behavior closely, it is useful to distinguish (a) the design-level target (controlled by calibration) from (b) replication-level variability induced by finite $N$ and finite test length. Reporting both quantities helps readers separate calibration performance from irreducible variability in realized precision across simulated datasets.

\subsection{Limitations and Future Directions}
\label{subsec:limitations}

Several limitations define the current scope and motivate future work.

\paragraph{Scope of measurement models.} The present study focuses on dichotomous Rasch and 2PL models. Extending reliability-targeted calibration to additional models---such as the 3PL, polytomous models (e.g., graded response and partial credit), and multidimensional IRT---is a natural next step. These extensions are conceptually straightforward because the proposed algorithms require only (i) model-specific routines to compute test information (or its analogue) and (ii) evaluation of the corresponding marginal reliability functional $\rho(c)$ under a global scaling of the relevant slope/discrimination parameters. For polytomous models, test information generalizes as a sum of category-response information contributions that retain monotonic scaling in discrimination, so the EQC root-finding logic and SAC updates carry over with minimal modification. For multidimensional models, discrimination becomes a vector and information is typically matrix-valued; one can apply a global scalar multiplier to discrimination-vector magnitudes (preserving directional structure) and calibrate a scalar summary of reliability (e.g., trace- or determinant-based summaries), or consider dimension-specific scaling when dimension-wise targets are desired. For the 3PL, the presence of a guessing parameter may compress achievable reliability ranges, but global discrimination scaling remains a viable calibration lever, and the inverse design formulation remains applicable.

\paragraph{What is controlled: global reliability, not the shape of precision.} A single scaling factor controls the overall magnitude of measurement information but does not change where along the latent continuum the test is most informative. If a simulation study requires targeted precision at specific trait levels (e.g., floor/ceiling designs, adaptive testing, or differential precision across subpopulations), reliability targeting should be combined with additional structural manipulations such as the difficulty distribution, test targeting, or adaptive item selection rules. In this sense, the proposed framework is best viewed as a modular ``reliability layer'' that can be added after other structural design choices are set.

\paragraph{Finite-sample realism and estimator-specific reporting.} The primary estimands targeted in this paper are population-level quantities. In applied reporting, however, reliability is often expressed through estimator-dependent measures (e.g., WLE vs.\ EAP reliability), and finite-sample estimation introduces additional variability beyond the design target. \citet{kim2012note} discusses how bias and definitional choices (e.g., parallel-forms vs.\ squared-correlation reliability) affect reliability coefficients for ML/MAP/EAP ability estimates, and cautions that some Bayesian reliability expressions rely on strong assumptions. Future work should further clarify how to map design-level targets to commonly reported estimator-based reliabilities under finite-sample estimation, and how robust such mappings are under model misspecification or atypical response behavior.

\subsection{Conclusion}
\label{subsec:conclusion}

Reliability-targeted simulation is both theoretically well founded and practically achievable. By treating marginal reliability as an explicit control variable---analogous to the role of ICC in multilevel simulation---researchers can design IRT simulation studies that are more interpretable, more ecologically realistic, and more reproducible. The proposed EQC and SAC algorithms, together with the \texttt{IRTsimrel} implementation, lower the barrier to adopting this practice and encourage routine reporting and manipulation of reliability as a first-class design parameter. We anticipate that making ``Target Reliability: $\rho^*$'' a standard line in simulation protocols will improve the comparability and cumulative value of psychometric simulation evidence.


\printbibliography


\newpage 

\appendix

\numberwithin{lemma}{section}
\numberwithin{theorem}{section}
\numberwithin{corollary}{section}
\numberwithin{proposition}{section}
\numberwithin{equation}{section}

\renewcommand{\thefigure}{\Alph{section}.\arabic{figure}}
\renewcommand{\thetable}{\Alph{section}.\arabic{table}}
\makeatletter
\@addtoreset{figure}{section}
\@addtoreset{table}{section}
\makeatother

\crefalias{section}{appendix}  


\begin{center}
\textbf{\large Online Appendix}

\today
\end{center}

\DoToC

\newpage

\section{Mathematical Proofs and Derivations}
\label{app:proofs}

This appendix provides formal derivations supporting the theoretical claims in \cref{sec:framework,sec:algorithms} of the main text. In particular, we (i) derive closed-form derivatives of item and test information under global discrimination scaling, (ii) formalize the monotonicity logic that makes the inverse design problem well-posed on a practical calibration interval, (iii) prove the Jensen-inequality ordering between the two reliability estimands, and (iv) state standard large-sample guarantees for EQC and SAC. Achievable reliability bounds are treated separately in \cref{app:bounds}.

\subsection{Notation, Setup, and Standing Conditions}
\label{subsec:app-notation}

\subsubsection{Measurement model and global discrimination scaling}

We work with the dichotomous 2PL logistic model (\cref{eq:2pl}). For a fixed test form with item difficulties and baseline discriminations
\[
\Psi=\{(\beta_i,\lambda_{i,0})\}_{i=1}^I,\qquad \lambda_{i,0}>0,
\]
global discrimination scaling (\cref{eq:scaling}) is
\begin{equation}
\lambda_i(c)=c\,\lambda_{i,0},\qquad c>0,\ i=1,\dots,I.
\label{eq:A1}
\end{equation}
Define the conditional success probability
\begin{equation}
\pi_i(\theta;c)
=\Pr(Y_i=1\mid \theta;\beta_i,\lambda_i(c))
=\frac{\exp\{\lambda_i(c)(\theta-\beta_i)\}}{1+\exp\{\lambda_i(c)(\theta-\beta_i)\}}.
\label{eq:A2}
\end{equation}
Let $\theta\sim G$ with $\E[\theta]=0$ and $\Var(\theta)=\sigma_\theta^2\in(0,\infty)$, as assumed in \cref{sec:framework}.

\subsubsection{Test information and reliability functionals}

The item Fisher information (for $\theta$) under the 2PL logistic model is
\begin{equation}
\mathcal{J}_i(\theta;c)
=\lambda_i(c)^2\,\pi_i(\theta;c)\{1-\pi_i(\theta;c)\}.
\label{eq:A3}
\end{equation}
The test information (\cref{eq:tif}) is
\begin{equation}
\mathcal{J}(\theta;c)=\sum_{i=1}^I \mathcal{J}_i(\theta;c).
\label{eq:A4}
\end{equation}

Two population-level reliability functionals (\cref{subsec:reliability}) are:

\paragraph{MSEM-based marginal reliability (\cref{eq:msem}).}
\begin{equation}
\mathrm{MSEM}(c)=\E\!\left[\frac{1}{\mathcal{J}(\theta;c)}\right],
\qquad
\bar{w}(c)=\frac{\sigma_\theta^2}{\sigma_\theta^2+\mathrm{MSEM}(c)}.
\label{eq:A5}
\end{equation}

\paragraph{Average-information reliability (\cref{eq:avg-info}).}
\begin{equation}
\bar{\mathcal{J}}(c)=\E\big[\mathcal{J}(\theta;c)\big],
\qquad
\tilde{\rho}(c)=\frac{\sigma_\theta^2\,\bar{\mathcal{J}}(c)}{\sigma_\theta^2\,\bar{\mathcal{J}}(c)+1}.
\label{eq:A6}
\end{equation}

\subsubsection{Regularity conditions on a practical calibration interval}
\label{subsubsec:conditions}

Throughout, calibration is restricted to a compact interval $[c_L,c_U]\subset(0,\infty)$ as in \cref{subsec:inverse}. The substantive ``well-posedness'' statements in the main text require two types of conditions: (i) finiteness and interchange of differentiation and expectation, and (ii) monotonicity of the relevant derivative expectations on $[c_L,c_U]$.

\paragraph{Condition A.1 (Positivity and finiteness).}\label{cond:A1} For all $c\in[c_L,c_U]$,
\begin{equation}
\E\big[\mathcal{J}(\theta;c)\big]<\infty,
\qquad
\E\!\left[\frac{1}{\mathcal{J}(\theta;c)}\right]<\infty.
\label{eq:A7}
\end{equation}
(Under the logistic model, $\mathcal{J}(\theta;c)>0$ for all finite $(\theta,c)$, but $\E[1/\mathcal{J}]$ can diverge if $G$ places non-negligible mass in regions where $\mathcal{J}$ is extremely small.)

\paragraph{Condition A.2 (Differentiation under the integral sign).}\label{cond:A2} For $c\in[c_L,c_U]$, $\mathcal{J}(\theta;c)$ is differentiable in $c$ for a.e.\ $\theta$, and there exist integrable envelopes $M_1(\theta),M_2(\theta)$ such that for all $c\in[c_L,c_U]$,
\begin{equation}
\left|\frac{\partial}{\partial c}\mathcal{J}(\theta;c)\right|\le M_1(\theta),
\qquad
\left|\frac{\partial}{\partial c}\left(\frac{1}{\mathcal{J}(\theta;c)}\right)\right|
=\left|\frac{-\mathcal{J}'(\theta;c)}{\mathcal{J}(\theta;c)^2}\right|
\le M_2(\theta),
\label{eq:A8}
\end{equation}
with $\E[M_1(\theta)]<\infty$ and $\E[M_2(\theta)]<\infty$.

\paragraph{Condition A.3 (Monotonicity on the practical interval).}\label{cond:A3} On $[c_L,c_U]$,
\begin{equation}
\bar{\mathcal{J}}'(c)=\frac{d}{dc}\E\big[\mathcal{J}(\theta;c)\big]>0,
\qquad
\E\!\left[\frac{\mathcal{J}'(\theta;c)}{\mathcal{J}(\theta;c)^2}\right]>0.
\label{eq:A9}
\end{equation}
This formalizes the ``dense item grid / no information holes'' regime discussed in \cref{subsec:inverse}: extremely large scaling can produce information spikes and gaps, which may cause non-monotonicity in MSEM-based reliability; restricting to a practical $[c_L,c_U]$ avoids this regime.

\subsection{Derivatives of Information Under Global Discrimination Scaling}
\label{subsec:app-derivatives}

\subsubsection{Logistic kernel representation}

Let $s(x)=\{1+\exp(-x)\}^{-1}$ and define the logistic variance kernel
\begin{equation}
h(x)=s(x)\{1-s(x)\}=\frac{e^{x}}{(1+e^{x})^2}.
\label{eq:A10}
\end{equation}
With
\[
x_i(\theta;c)=\lambda_i(c)(\theta-\beta_i)=c\,\lambda_{i,0}(\theta-\beta_i),
\]
\cref{eq:A3} can be written as
\begin{equation}
\mathcal{J}_i(\theta;c)
=\lambda_i(c)^2\,h\{x_i(\theta;c)\}
=c^2\lambda_{i,0}^2\,h\!\left(c\lambda_{i,0}(\theta-\beta_i)\right).
\label{eq:A11}
\end{equation}

\subsubsection{Derivative of item information and local (non-)monotonicity}

\begin{lemma}[Derivative of $\mathcal{J}_i(\theta;c)$]
\label{lem:deriv-item-info}
For each fixed $(\theta,i)$, $\mathcal{J}_i(\theta;c)$ is differentiable in $c$, and
\begin{equation}
\frac{\partial}{\partial c}\mathcal{J}_i(\theta;c)
=c\,\lambda_{i,0}^2\,h\{x_i(\theta;c)\}\,
\Bigl[2-x_i(\theta;c)\tanh\!\bigl(x_i(\theta;c)/2\bigr)\Bigr].
\label{eq:A12}
\end{equation}
\end{lemma}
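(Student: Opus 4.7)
The plan is to combine the product rule and the chain rule on the expression $\mathcal{J}_i(\theta;c)=c^2\lambda_{i,0}^2\,h\!\left(c\lambda_{i,0}(\theta-\beta_i)\right)$ given in \cref{eq:A11}, and then simplify the derivative of the logistic variance kernel $h$ into the clean $\tanh(x/2)$ form that appears in the target expression \cref{eq:A12}. Because $h\in C^\infty(\mathbb{R})$ and the inner map $c\mapsto x_i(\theta;c)=c\lambda_{i,0}(\theta-\beta_i)$ is linear in $c$, differentiability of $\mathcal{J}_i(\theta;c)$ in $c$ for every fixed $(\theta,i)$ is immediate; no regularity assumption from \hyperref[cond:A1]{Condition A.1} or \hyperref[cond:A2]{A.2} is needed at this pointwise stage.

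First I would record that $\partial x_i/\partial c = \lambda_{i,0}(\theta-\beta_i) = x_i(\theta;c)/c$ (for $c>0$), so that the product rule gives
\begin{equation*}
\frac{\partial}{\partial c}\mathcal{J}_i(\theta;c)
= 2c\lambda_{i,0}^2\,h(x_i)
  + c^2\lambda_{i,0}^2\,h'(x_i)\,\frac{x_i}{c}
= c\lambda_{i,0}^2\,h(x_i)\Bigl[2 + x_i\,\frac{h'(x_i)}{h(x_i)}\Bigr].
\end{equation*}
At this point the statement \cref{eq:A12} reduces to the single identity $h'(x)/h(x) = -\tanh(x/2)$.

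Next I would establish that identity by logarithmic differentiation: from $h(x)=e^x/(1+e^x)^2$ we obtain $\log h(x) = x - 2\log(1+e^x)$, whence
\begin{equation*}
\frac{h'(x)}{h(x)} = 1 - \frac{2e^x}{1+e^x} = 1 - 2s(x).
\end{equation*}
Then I would apply the standard half-angle rewriting $2s(x)-1 = (1-e^{-x})/(1+e^{-x}) = \tanh(x/2)$, which yields $h'(x)/h(x) = -\tanh(x/2)$. Substituting back delivers \cref{eq:A12} exactly.

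The only substantive step is recognizing the $\tanh(x/2)$ reduction; everything else is mechanical differentiation. There is no real obstacle, but two small bookkeeping points deserve attention in the writeup: (i) the derivation is valid for all $c>0$, matching the domain in \cref{eq:A1}, and (ii) the expression is continuous and bounded in $c$ on any compact $[c_L,c_U]$, which makes \cref{eq:A12} suitable as a building block for the envelope bound $M_1(\theta)$ invoked later in \hyperref[cond:A2]{Condition A.2} when justifying differentiation under the integral sign for $\bar{\mathcal{J}}'(c)$ and $\mathrm{MSEM}'(c)$.
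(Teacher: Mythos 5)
Your proof is correct and follows essentially the same route as the paper's: product rule plus chain rule on $\mathcal{J}_i = c^2\lambda_{i,0}^2 h(x_i)$ with $\partial x_i/\partial c = x_i/c$, then the identity $h'(x)/h(x) = 1-2s(x) = -\tanh(x/2)$. The only difference is that you derive that identity explicitly via logarithmic differentiation, whereas the paper states it directly; this is a harmless (indeed helpful) elaboration.
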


\begin{proof}
Differentiate \cref{eq:A11}. Let $x=x_i(\theta;c)$. Then $\mathcal{J}_i=c^2\lambda_{i,0}^2 h(x)$ and $dx/dc=\lambda_{i,0}(\theta-\beta_i)=x/c$. Since $h'(x)=h(x)\{1-2s(x)\}=-h(x)\tanh(x/2)$,
\begin{align*}
\frac{\partial}{\partial c}\mathcal{J}_i
&=2c\lambda_{i,0}^2 h(x)+c^2\lambda_{i,0}^2 h'(x)\frac{x}{c}\\
&=c\lambda_{i,0}^2 h(x)\Bigl[2+x\{1-2s(x)\}\Bigr]\\
&=c\lambda_{i,0}^2 h(x)\Bigl[2-x\tanh(x/2)\Bigr].
\end{align*}
\end{proof}

\begin{remark}[Local non-monotonicity]
\label{rmk:local-nonmonotone}
Define $\phi(x)=2-x\tanh(x/2)$. For $x>0$,
\[
\phi'(x)=-\tanh(x/2)-\frac{x}{2}\mathrm{sech}^2(x/2)<0,
\]
so $\phi$ is strictly decreasing on $(0,\infty)$ with $\phi(0)=2$ and $\lim_{x\to\infty}\phi(x)=2-x$. Hence $\phi$ has a unique positive root $x_0\approx 2.399$, and $\phi(x)>0$ iff $|x|<x_0$. Therefore, for fixed $(\theta,i)$, $\mathcal{J}_i(\theta;c)$ increases in $c$ when $|x_i(\theta;c)|<x_0$, but can decrease once $|x_i(\theta;c)|$ is sufficiently large. This mechanism explains why extreme scaling can generate information spikes near item difficulties and information gaps between them, motivating restriction to a practical calibration interval.
\end{remark}

\subsubsection{Derivative of test information}

Summing \cref{eq:A12} over items yields
\begin{equation}
\mathcal{J}'(\theta;c)\equiv \frac{\partial}{\partial c}\mathcal{J}(\theta;c)
=\sum_{i=1}^I
c\,\lambda_{i,0}^2\,h\{x_i(\theta;c)\}\,
\Bigl[2-x_i(\theta;c)\tanh\!\bigl(x_i(\theta;c)/2\bigr)\Bigr].
\label{eq:A13}
\end{equation}

\subsection{Monotonicity of Reliability and Existence/Uniqueness of the Calibrated Scale}
\label{subsec:app-monotonicity}

This section formalizes the monotonicity claim in \cref{subsec:inverse} and the resulting existence/uniqueness of the calibrated scale.

\subsubsection{Derivatives of $\tilde{\rho}(c)$ and $\bar{w}(c)$}

\begin{lemma}[Derivatives of reliability functionals]
\label{lem:deriv-reliability}
Assume \hyperref[cond:A1]{Conditions~A.1}--\hyperref[cond:A2]{A.2}. Then for $c\in[c_L,c_U]$,
\begin{equation}
\tilde{\rho}'(c)
=\frac{\sigma_\theta^2}{\{\sigma_\theta^2\bar{\mathcal{J}}(c)+1\}^2}\,\bar{\mathcal{J}}'(c),
\qquad
\bar{\mathcal{J}}'(c)=\E\big[\mathcal{J}'(\theta;c)\big],
\label{eq:A14}
\end{equation}
and
\begin{equation}
\bar{w}'(c)
=\frac{\sigma_\theta^2}{\{\sigma_\theta^2+\mathrm{MSEM}(c)\}^2}\,
\E\!\left[\frac{\mathcal{J}'(\theta;c)}{\mathcal{J}(\theta;c)^2}\right].
\label{eq:A15}
\end{equation}
\end{lemma}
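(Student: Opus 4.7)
The plan is to reduce both identities to a two-step recipe: first, an interchange of $d/dc$ and $\E$ justified by the integrable envelopes supplied by Condition A.2; second, an elementary chain- or quotient-rule calculation applied to a smooth scalar function of $c$. Since $\tilde{\rho}$ and $\bar{w}$ are both smooth monotone transformations of $\bar{\mathcal{J}}(c)$ and $\mathrm{MSEM}(c)$ respectively, once the scalar derivatives $\bar{\mathcal{J}}'(c) = \E[\mathcal{J}'(\theta;c)]$ and $\mathrm{MSEM}'(c) = -\E[\mathcal{J}'(\theta;c)/\mathcal{J}(\theta;c)^2]$ are in hand, the remaining steps are routine.

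For (A.14), I would first invoke the Leibniz rule (a standard corollary of dominated convergence): by Condition A.2 the integrand $\mathcal{J}(\theta;c)$ is differentiable in $c$ with $|\partial_c \mathcal{J}(\theta;c)| \le M_1(\theta)$ uniformly on $[c_L,c_U]$, and by Condition A.1 the quantity $\bar{\mathcal{J}}(c)$ is finite, so $\bar{\mathcal{J}}'(c) = \E[\mathcal{J}'(\theta;c)]$. Writing $\tilde{\rho}(c) = 1 - \{1 + \sigma_\theta^2 \bar{\mathcal{J}}(c)\}^{-1}$ and applying the chain rule then immediately yields $\tilde{\rho}'(c) = \sigma_\theta^2 \bar{\mathcal{J}}'(c) / \{\sigma_\theta^2 \bar{\mathcal{J}}(c) + 1\}^2$, matching (A.14).

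For (A.15), I would apply the same template to $\mathrm{MSEM}(c) = \E[1/\mathcal{J}(\theta;c)]$. The Leibniz rule, with envelope $M_2(\theta)$ from Condition A.2 and finiteness from Condition A.1, yields $\mathrm{MSEM}'(c) = \E[\partial_c(1/\mathcal{J}(\theta;c))] = -\E[\mathcal{J}'(\theta;c)/\mathcal{J}(\theta;c)^2]$, where the inner derivative is the elementary identity $\partial_c(1/\mathcal{J}) = -\mathcal{J}'/\mathcal{J}^2$ already invoked to define $M_2$. Quotient-rule differentiation of $\bar{w}(c) = \sigma_\theta^2 / \{\sigma_\theta^2 + \mathrm{MSEM}(c)\}$ then gives $\bar{w}'(c) = -\sigma_\theta^2 \mathrm{MSEM}'(c) / \{\sigma_\theta^2 + \mathrm{MSEM}(c)\}^2$, and substituting the expression for $\mathrm{MSEM}'(c)$ produces (A.15).

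The only step requiring care is the differentiation-under-the-integral interchange, which is exactly what Condition A.2 is designed to license: the envelopes $M_1, M_2 \in L^1(G)$ together with the compactness of $[c_L,c_U]$ deliver uniform domination, so dominated convergence applies cleanly. Everything downstream is algebra, so I anticipate no further obstacles; the one pitfall worth flagging in the writeup is that the envelope property must be stated to hold uniformly over $c \in [c_L,c_U]$, not merely pointwise, which is already embedded in the phrasing of Condition A.2.
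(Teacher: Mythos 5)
Your proposal is correct and follows essentially the same route as the paper's proof: differentiation under the integral sign licensed by Condition~A.2 to obtain $\bar{\mathcal{J}}'(c)=\E[\mathcal{J}']$ and $\mathrm{MSEM}'(c)=-\E[\mathcal{J}'/\mathcal{J}^2]$, followed by chain/quotient-rule algebra on the scalar maps (your rewriting of $\tilde{\rho}$ as $1-\{1+\sigma_\theta^2\bar{\mathcal{J}}(c)\}^{-1}$ is just a cosmetic variant of the paper's composition with $r(u)=\sigma_\theta^2 u/(\sigma_\theta^2 u+1)$). No gaps.
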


\begin{proof}
For $\tilde{\rho}$, write $\tilde{\rho}(c)=r(\bar{\mathcal{J}}(c))$ with $r(u)=\sigma_\theta^2u/(\sigma_\theta^2u+1)$, so $r'(u)=\sigma_\theta^2/(\sigma_\theta^2u+1)^2$. \hyperref[cond:A2]{Condition~A.2} permits $\bar{\mathcal{J}}'(c)=\frac{d}{dc}\E[\mathcal{J}]=\E[\mathcal{J}']$, giving \cref{eq:A14}.

For $\bar{w}(c)=\sigma_\theta^2/(\sigma_\theta^2+\mathrm{MSEM}(c))$, we have
\[
\bar{w}'(c)=-\frac{\sigma_\theta^2}{\{\sigma_\theta^2+\mathrm{MSEM}(c)\}^2}\,\mathrm{MSEM}'(c).
\]
By \hyperref[cond:A2]{Condition~A.2} and $\mathrm{MSEM}(c)=\E[1/\mathcal{J}(\theta;c)]$,
\[
\mathrm{MSEM}'(c)=\E\!\left[\frac{\partial}{\partial c}\left(\frac{1}{\mathcal{J}(\theta;c)}\right)\right]
=\E\!\left[-\frac{\mathcal{J}'(\theta;c)}{\mathcal{J}(\theta;c)^2}\right],
\]
which yields \cref{eq:A15}.
\end{proof}

\subsubsection{Monotonicity and uniqueness on $[c_L,c_U]$}

\begin{proposition}[Strict monotonicity on the practical interval]
\label{prop:strict-monotonicity}
Under \hyperref[cond:A1]{Conditions~A.1}--\hyperref[cond:A3]{A.3}, both $\tilde{\rho}(c)$ and $\bar{w}(c)$ are strictly increasing on $[c_L,c_U]$.
\end{proposition}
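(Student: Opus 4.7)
The plan is a short chain of implications from Lemma \ref{lem:deriv-reliability} to the desired monotonicity, with \hyperref[cond:A3]{Condition A.3} doing the essential work. First I would write down the two derivative formulas already established in Lemma \ref{lem:deriv-reliability}, which express $\tilde{\rho}'(c)$ and $\bar{w}'(c)$ as products of a strictly positive prefactor and one of the expectations appearing in \hyperref[cond:A3]{Condition A.3}. For $\tilde{\rho}'(c)$, the prefactor is $\sigma_\theta^2/\{\sigma_\theta^2\bar{\mathcal{J}}(c)+1\}^2$; for $\bar{w}'(c)$, it is $\sigma_\theta^2/\{\sigma_\theta^2+\mathrm{MSEM}(c)\}^2$.

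Second, I would verify that each prefactor is strictly positive on $[c_L,c_U]$. This uses $\sigma_\theta^2>0$ (standing assumption in \cref{subsec:app-notation}), $\bar{\mathcal{J}}(c)\ge 0$ with finiteness guaranteed by \hyperref[cond:A1]{Condition A.1}, and $\mathrm{MSEM}(c)\in(0,\infty)$ also by \hyperref[cond:A1]{Condition A.1}. Combined with the two strict inequalities supplied by \hyperref[cond:A3]{Condition A.3}, this yields $\tilde{\rho}'(c)>0$ and $\bar{w}'(c)>0$ for every $c\in[c_L,c_U]$. Strict monotonicity of each functional then follows by a standard mean-value-theorem argument: for any $c_1<c_2$ in $[c_L,c_U]$ there exists $\xi\in(c_1,c_2)$ with $\tilde{\rho}(c_2)-\tilde{\rho}(c_1)=\tilde{\rho}'(\xi)(c_2-c_1)>0$, and analogously for $\bar{w}$.

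I do not anticipate a substantive mathematical obstacle in the proof itself; the heavy lifting has already been absorbed into Lemma \ref{lem:deriv-reliability} (which legitimizes interchanging $d/dc$ and $\E[\cdot]$ via the envelopes of \hyperref[cond:A2]{Condition A.2}) and into \hyperref[cond:A3]{Condition A.3} itself. The conceptual point worth flagging in the write-up is that \hyperref[cond:A3]{Condition A.3} is not vacuous: \cref{rmk:local-nonmonotone} shows that the pointwise integrand $\mathcal{J}'(\theta;c)$ can turn negative once $|x_i(\theta;c)|$ exceeds $x_0\approx 2.399$, so the expectations in \hyperref[cond:A3]{Condition A.3} encode the substantive requirement that positive contributions from items whose difficulties align with the bulk of $G$ dominate negative contributions from items at extreme difficulty. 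Restricting calibration to the practical interval $[c_L,c_U]$ is precisely the device that secures this domination and makes the inverse design problem well-posed in the sense of \cref{cor:existence}.
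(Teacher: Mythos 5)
Your proposal is correct and follows essentially the same route as the paper's proof: both read off from Lemma~\ref{lem:deriv-reliability} that $\tilde{\rho}'(c)$ and $\bar{w}'(c)$ equal a strictly positive prefactor times the respective expectation in \hyperref[cond:A3]{Condition~A.3}, and conclude strict positivity of the derivatives on $[c_L,c_U]$. Your additional care in verifying the prefactors' positivity via \hyperref[cond:A1]{Condition~A.1} and in spelling out the mean-value-theorem step is a harmless elaboration of what the paper leaves implicit.
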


\begin{proof}
By \cref{lem:deriv-reliability}, $\tilde{\rho}'(c)$ has the same sign as $\bar{\mathcal{J}}'(c)$, and $\bar{w}'(c)$ has the same sign as $\E[\mathcal{J}'/\mathcal{J}^2]$. \hyperref[cond:A3]{Condition~A.3} makes both strictly positive on $[c_L,c_U]$.
\end{proof}

\begin{corollary}[Existence and uniqueness of the calibrated scale]
\label{cor:existence-uniqueness}
Let $\rho(c)$ denote either $\tilde{\rho}(c)$ or $\bar{w}(c)$. If $\rho(c)$ is continuous and strictly increasing on $[c_L,c_U]$, then for any target $\rho^*\in(\rho(c_L),\rho(c_U))$ there exists a unique $c^*\in(c_L,c_U)$ such that $\rho(c^*)=\rho^*$.
\end{corollary}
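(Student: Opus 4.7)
The plan is to combine the Intermediate Value Theorem with strict monotonicity: the former delivers existence, the latter upgrades it to uniqueness. All of the substantive analytic work has already been packaged into the hypotheses of the corollary via Proposition A.1 (strict monotonicity on the practical interval) and the continuity of $\rho(c)$ implicit in Conditions A.1--A.2, so the proof is essentially a two-line application of classical real analysis rather than anything new.

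First I would establish existence. Since $\rho$ is continuous on the closed interval $[c_L, c_U]$ and the target $\rho^*$ lies strictly between the endpoint values $\rho(c_L)$ and $\rho(c_U)$, the Intermediate Value Theorem yields at least one $c^* \in [c_L, c_U]$ with $\rho(c^*) = \rho^*$. The strict inequalities $\rho(c_L) < \rho^* < \rho(c_U)$ preclude $c^* = c_L$ and $c^* = c_U$, so $c^*$ must lie in the open interval $(c_L, c_U)$.

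Next I would establish uniqueness by contradiction. Suppose two distinct solutions $c_1, c_2 \in (c_L, c_U)$ with, without loss of generality, $c_1 < c_2$ both satisfy $\rho(c_1) = \rho(c_2) = \rho^*$. Strict monotonicity of $\rho$ on $[c_L, c_U]$ forces $\rho(c_1) < \rho(c_2)$, contradicting the assumed equality. Hence the solution $c^*$ is unique.

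The only ``obstacle'' here is really bookkeeping rather than mathematics: the statement subsumes both reliability functionals $\tilde{\rho}$ and $\bar{w}$ under a common symbol $\rho$, so I would add a brief remark noting that continuity of each follows automatically from continuity of $\bar{\mathcal{J}}(c)$ (respectively $\mathrm{MSEM}(c)$) under Conditions A.1--A.2, composed with the smooth maps $u \mapsto \sigma_\theta^2 u / (\sigma_\theta^2 u + 1)$ (respectively $v \mapsto \sigma_\theta^2 / (\sigma_\theta^2 + v)$), both of which are continuously differentiable on the positive half-line. This confirms that the hypotheses of the corollary are in fact met for both estimands targeted in the main text, so the inverse design problem reduces to a well-posed one-dimensional root-finding problem in each case.
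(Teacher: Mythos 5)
Your proof is correct and follows essentially the same route as the paper's: the intermediate value theorem gives existence, strict monotonicity (injectivity) gives uniqueness, and continuity of both functionals is inherited from Conditions A.1--A.2 via composition with the smooth variance-ratio maps. Nothing further is needed.
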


\begin{proof}
Continuity follows from \hyperref[cond:A1]{Conditions~A.1}--\hyperref[cond:A2]{A.2} and continuity of the mapping $c\mapsto \mathcal{J}(\theta;c)$. Strict monotonicity is \cref{prop:strict-monotonicity}. The result follows from the intermediate value theorem and injectivity of a strictly increasing function.
\end{proof}

\subsection{Jensen's Inequality and the Reliability Estimand Gap}
\label{subsec:app-jensen}

\cref{subsec:reliability} notes that $\tilde{\rho}(c)$ is typically larger than $\bar{w}(c)$. We provide the formal argument and its implication for calibrated scales.

\subsubsection{Proof of $\tilde{\rho}(c)\ge \bar{w}(c)$}

\begin{proposition}[Jensen inequality for reliability]
\label{prop:jensen-reliability}
Under \hyperref[cond:A1]{Condition~A.1},
\begin{equation}
\tilde{\rho}(c)\ge \bar{w}(c)\qquad\text{for all }c\in[c_L,c_U].
\label{eq:A16}
\end{equation}
Equality holds iff $\mathcal{J}(\theta;c)$ is $G$-a.s.\ constant.
\end{proposition}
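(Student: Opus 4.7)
The plan is to reduce the claim to a single application of Jensen's inequality for the strictly convex function $x\mapsto 1/x$ on $(0,\infty)$, followed by the bookkeeping needed to translate an inequality between $\bar{\mathcal{J}}(c)$ and $\mathrm{MSEM}(c)$ into the stated inequality between $\tilde{\rho}(c)$ and $\bar{w}(c)$. \hyperref[cond:A1]{Condition~A.1} guarantees that both $\bar{\mathcal{J}}(c)$ and $\mathrm{MSEM}(c)$ are finite and positive, so everything below is well-defined.

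First, I would observe that $\mathcal{J}(\theta;c)>0$ almost surely under the logistic model, so Jensen's inequality applied to $1/x$ yields
\[
\mathrm{MSEM}(c)=\E\!\left[\frac{1}{\mathcal{J}(\theta;c)}\right]\ \ge\ \frac{1}{\E[\mathcal{J}(\theta;c)]}=\frac{1}{\bar{\mathcal{J}}(c)},
\]
with equality iff $\mathcal{J}(\theta;c)$ is $G$-a.s.\ constant (by strict convexity of $1/x$ on $(0,\infty)$).

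Next, I would exploit the fact that the scalar map $t\mapsto \sigma_\theta^2/(\sigma_\theta^2+t)$ is strictly decreasing in $t>0$. Applying this to $t=\mathrm{MSEM}(c)$ and $t=1/\bar{\mathcal{J}}(c)$ and using the first step gives
\[
\bar{w}(c)=\frac{\sigma_\theta^2}{\sigma_\theta^2+\mathrm{MSEM}(c)}\ \le\ \frac{\sigma_\theta^2}{\sigma_\theta^2+1/\bar{\mathcal{J}}(c)}.
\]
Multiplying numerator and denominator of the right-hand side by $\bar{\mathcal{J}}(c)>0$ (finite and positive by \hyperref[cond:A1]{Condition~A.1}) recovers exactly $\tilde{\rho}(c)=\sigma_\theta^2\bar{\mathcal{J}}(c)/\{\sigma_\theta^2\bar{\mathcal{J}}(c)+1\}$, establishing \cref{eq:A16}.

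Finally, I would handle the equality case by chaining the two monotonicity steps: the second inequality is a strictly monotone transformation, so $\tilde{\rho}(c)=\bar{w}(c)$ forces equality in Jensen's step, which in turn forces $\mathcal{J}(\theta;c)$ to be $G$-a.s.\ constant. The converse is immediate by direct substitution. There is no real obstacle here; the only subtlety worth flagging is that the argument relies solely on finiteness of the relevant expectations (\hyperref[cond:A1]{Condition~A.1}) and requires neither the differentiability of \hyperref[cond:A2]{Condition~A.2} nor the monotonicity of \hyperref[cond:A3]{Condition~A.3}, so the inequality holds pointwise in $c$ over $[c_L,c_U]$ without any additional regularity.
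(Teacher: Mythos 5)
Your proposal is correct and follows essentially the same route as the paper's own proof: a single application of Jensen's inequality to the convex map $x\mapsto 1/x$, followed by the monotone transformation $t\mapsto\sigma_\theta^2/(\sigma_\theta^2+t)$ and the algebraic identification of $\sigma_\theta^2/\{\sigma_\theta^2+1/\bar{\mathcal{J}}(c)\}$ with $\tilde{\rho}(c)$, with the equality case handled via strict convexity exactly as in \cref{prop:jensen-reliability}. Your closing remark that only \hyperref[cond:A1]{Condition~A.1} is needed is also consistent with the paper's hypotheses.
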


\begin{proof}
The function $x\mapsto 1/x$ is convex on $(0,\infty)$. By Jensen's inequality,
\begin{equation}
\E\!\left[\frac{1}{\mathcal{J}(\theta;c)}\right]\ge \frac{1}{\E[\mathcal{J}(\theta;c)]}
=\frac{1}{\bar{\mathcal{J}}(c)}.
\label{eq:A17}
\end{equation}
Using \cref{eq:A5,eq:A6},
\[
\bar{w}(c)=\frac{\sigma_\theta^2}{\sigma_\theta^2+\E[1/\mathcal{J}]}
\le \frac{\sigma_\theta^2}{\sigma_\theta^2+1/\bar{\mathcal{J}}(c)}
=\frac{\sigma_\theta^2\bar{\mathcal{J}}(c)}{\sigma_\theta^2\bar{\mathcal{J}}(c)+1}
=\tilde{\rho}(c).
\]
Equality in Jensen holds iff $1/\mathcal{J}(\theta;c)$ is a.s.\ constant, equivalently $\mathcal{J}(\theta;c)$ is a.s.\ constant.
\end{proof}

\subsubsection{Implication for calibrated scales across estimands}

Let $c^*_{\tilde{\rho}}$ solve $\tilde{\rho}(c)=\rho^*$ and $c^*_{\bar{w}}$ solve $\bar{w}(c)=\rho^*$ (when solutions exist and are unique on $[c_L,c_U]$).

\begin{corollary}[Scale ordering across estimands]
\label{cor:scale-ordering}
If $\bar{w}$ is strictly increasing on $[c_L,c_U]$ and both roots exist, then
\begin{equation}
c^*_{\bar{w}} \ge c^*_{\tilde{\rho}}.
\label{eq:A18}
\end{equation}
\end{corollary}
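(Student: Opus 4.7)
The plan is to chain Proposition~A.4.1 (the pointwise Jensen inequality $\tilde{\rho}(c) \ge \bar{w}(c)$) with the strict monotonicity of $\bar{w}$ on $[c_L, c_U]$, using the defining equations of the two roots. The whole argument is essentially one line of inequality manipulation plus one appeal to injectivity, so no elaborate machinery beyond what has already been established should be needed.

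First I would evaluate the Jensen inequality at the specific point $c = c^*_{\tilde{\rho}}$. By definition of $c^*_{\tilde{\rho}}$ we have $\tilde{\rho}(c^*_{\tilde{\rho}}) = \rho^*$, so Proposition~A.4.1 immediately gives
\begin{equation*}
\bar{w}(c^*_{\tilde{\rho}}) \;\le\; \tilde{\rho}(c^*_{\tilde{\rho}}) \;=\; \rho^* \;=\; \bar{w}(c^*_{\bar{w}}),
\end{equation*}
where the last equality is just the definition of $c^*_{\bar{w}}$. Thus $\bar{w}$ takes a value at $c^*_{\tilde{\rho}}$ that is no larger than its value at $c^*_{\bar{w}}$.

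Next I would invoke the hypothesis that $\bar{w}$ is strictly increasing on $[c_L, c_U]$, which is precisely the regime covered by Proposition~A.3.1 under \hyperref[cond:A1]{Conditions~A.1}--\hyperref[cond:A3]{A.3}. A strictly increasing function is injective and order-preserving, so the inequality $\bar{w}(c^*_{\tilde{\rho}}) \le \bar{w}(c^*_{\bar{w}})$ transfers directly to $c^*_{\tilde{\rho}} \le c^*_{\bar{w}}$, which is \cref{eq:A18}. I would also briefly note the equality case: the inequality is strict whenever $\mathcal{J}(\theta; c^*_{\tilde{\rho}})$ is not $G$-a.s.\ constant (by the equality condition in Proposition~A.4.1), so the gap $c^*_{\bar{w}} - c^*_{\tilde{\rho}}$ is strictly positive in any non-degenerate test configuration and vanishes only in the trivial flat-information case.

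I do not anticipate a genuine obstacle here, since existence and uniqueness of the two roots are assumed as part of the corollary's hypothesis and the Jensen bound is already in hand. The one subtlety worth flagging is simply the logical direction of the monotonicity step: the argument uses $\bar{w}$ strictly increasing (as stated), but the same conclusion can be obtained symmetrically by evaluating at $c^*_{\bar{w}}$ and using strict monotonicity of $\tilde{\rho}$ instead, which provides a convenient sanity check that the ordering does not depend on which estimand carries the monotonicity assumption.
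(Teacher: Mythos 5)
Your proof is correct and follows essentially the same route as the paper: evaluate the Jensen ordering $\bar{w}\le\tilde{\rho}$ at $c^*_{\tilde{\rho}}$ to get $\bar{w}(c^*_{\tilde{\rho}})\le\rho^*=\bar{w}(c^*_{\bar{w}})$, then use strict monotonicity of $\bar{w}$ to conclude $c^*_{\tilde{\rho}}\le c^*_{\bar{w}}$. Your added remarks on the equality case and the symmetric argument via monotonicity of $\tilde{\rho}$ are accurate but not needed.
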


\begin{proof}
By \cref{prop:jensen-reliability}, $\bar{w}(c)\le\tilde{\rho}(c)$ for all $c$. In particular,
$\bar{w}(c^*_{\tilde{\rho}})\le \tilde{\rho}(c^*_{\tilde{\rho}})=\rho^*$.
Since $\bar{w}$ is strictly increasing, achieving $\bar{w}(c)=\rho^*$ requires $c\ge c^*_{\tilde{\rho}}$.
\end{proof}

\subsubsection{A second-order characterization of the Jensen gap}

Write $J=\mathcal{J}(\theta;c)$ and $\mu=\E[J]$. If $J$ has finite variance and is sufficiently concentrated around $\mu$, a second-order Taylor expansion gives
\begin{equation}
\E\!\left[\frac{1}{J}\right]
=\frac{1}{\mu}+\frac{\Var(J)}{\mu^3}+R_3,
\label{eq:A19}
\end{equation}
where $R_3$ is a third-order remainder controlled by $\E[|J-\mu|^3]$. Plugging \cref{eq:A19} into \cref{eq:A5} shows that the gap $\tilde{\rho}(c)-\bar{w}(c)$ is governed, to second order, by $\Var\{\mathcal{J}(\theta;c)\}$: the more unevenly information is distributed across the latent population, the larger the arithmetic--harmonic discrepancy and thus the larger the estimand gap.

A related local expansion for the calibrated scale gap follows from an implicit-function argument:
\begin{equation}
c^*_{\bar{w}}-c^*_{\tilde{\rho}}
\approx
\frac{\tilde{\rho}(c^*_{\tilde{\rho}})-\bar{w}(c^*_{\tilde{\rho}})}{\bar{w}'(c^*_{\tilde{\rho}})},
\label{eq:A20}
\end{equation}
whenever $\bar{w}'(c^*_{\tilde{\rho}})>0$.

\subsubsection{A simple lower bound highlighting MSEM sensitivity}

Let $A_\varepsilon(c)=\{\theta:\mathcal{J}(\theta;c)\le \inf_{\vartheta}\mathcal{J}(\vartheta;c)+\varepsilon\}$. Then
\begin{equation}
\mathrm{MSEM}(c)
=\E\!\left[\frac{1}{\mathcal{J}(\theta;c)}\right]
\ge
\frac{G\{A_\varepsilon(c)\}}{\inf_{\vartheta}\mathcal{J}(\vartheta;c)+\varepsilon},
\qquad \varepsilon>0.
\label{eq:A21}
\end{equation}
This inequality makes explicit why MSEM-based reliability is particularly sensitive to low-information regions: even a small amount of $G$-mass placed where $\mathcal{J}(\theta;c)$ is very small can substantially inflate MSEM.

\subsection{EQC: Consistency and Asymptotic Behavior of the Calibrated Root}
\label{subsec:app-eqc}

\cref{subsec:eqc} defines EQC by fixing an empirical quadrature $\{\theta_m\}_{m=1}^M$ (and a fixed test form $\Psi$) and solving for a root of the empirical reliability curve.

\subsubsection{Empirical quadrature objects}

Let $\{\theta_m\}_{m=1}^M$ be i.i.d.\ draws from $G$. For a fixed $\Psi$, define
\begin{equation}
\hat{\bar{\mathcal{J}}}_M(c)=\frac{1}{M}\sum_{m=1}^M \mathcal{J}(\theta_m;c),
\qquad
\hat{\rho}_M(c)=\frac{\sigma_\theta^2\,\hat{\bar{\mathcal{J}}}_M(c)}{\sigma_\theta^2\,\hat{\bar{\mathcal{J}}}_M(c)+1},
\label{eq:A22}
\end{equation}
which corresponds to the default EQC implementation targeting $\tilde{\rho}$. (If EQC is configured to target $\bar{w}$, replace $\hat{\bar{\mathcal{J}}}_M$ by $\widehat{\mathrm{MSEM}}_M(c)=\frac{1}{M}\sum_m 1/\mathcal{J}(\theta_m;c)$ and apply \cref{eq:A5}.)

Let $\hat{c}_M$ denote the EQC root:
\begin{equation}
\hat{c}_M\in[c_L,c_U]
\quad\text{s.t.}\quad
\hat{\rho}_M(\hat{c}_M)=\rho^*.
\label{eq:A23}
\end{equation}

\subsubsection{Uniform convergence and root consistency}

\begin{theorem}[EQC uniform convergence and root consistency]
\label{thm:eqc-consistency}
Assume \hyperref[cond:A1]{Conditions~A.1}--\hyperref[cond:A3]{A.3} and that the function class $\{\mathcal{J}(\cdot;c):c\in[c_L,c_U]\}$ satisfies a uniform law of large numbers under $G$. Then
\[
\sup_{c\in[c_L,c_U]}|\hat{\rho}_M(c)-\tilde{\rho}(c)|\xrightarrow{a.s.}0,
\]
and if $c^*$ is the unique solution to $\tilde{\rho}(c)=\rho^*$ on $[c_L,c_U]$, then
\[
\hat{c}_M\xrightarrow{a.s.}c^*.
\]
\end{theorem}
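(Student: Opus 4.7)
The plan is to split the statement into two pieces exactly as it is written: first establish uniform convergence of $\hat{\rho}_M(c)$ on the compact interval $[c_L, c_U]$, and then deduce root consistency by a standard ``argmax-type'' neighborhood argument that exploits the strict monotonicity of $\tilde{\rho}$ guaranteed by \cref{prop:strict-monotonicity}. The overall structure mirrors the textbook pattern for consistency of M-estimators once a uniform law of large numbers is in hand.

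For Step~1, I would start directly from the uniform law of large numbers hypothesized in the theorem, which delivers
\[
\sup_{c \in [c_L, c_U]} \bigl|\hat{\bar{\mathcal{J}}}_M(c) - \bar{\mathcal{J}}(c)\bigr| \xrightarrow{a.s.} 0,
\]
and then transfer this convergence to $\hat{\rho}_M$ through the continuous mapping $r(u) = \sigma_\theta^2 u /(\sigma_\theta^2 u + 1)$. Since $\bar{\mathcal{J}}(c)$ is continuous on the compact interval (by \hyperref[cond:A2]{Condition~A.2} plus the dominated-convergence argument implicit there) and strictly positive, it is bounded above and bounded away from zero. Uniform convergence then places $\hat{\bar{\mathcal{J}}}_M(\cdot)$ eventually inside a compact subset of $(0, \infty)$ on which $r$ is Lipschitz with a deterministic constant. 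Applying the Lipschitz bound pointwise and taking a supremum in $c$ yields $\sup_c |\hat{\rho}_M(c) - \tilde{\rho}(c)| \to 0$ almost surely.

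For Step~2, I would use a contrapositive neighborhood argument. Fix $\varepsilon > 0$ small enough that $[c^* - \varepsilon,\, c^* + \varepsilon] \subset [c_L, c_U]$. By \cref{prop:strict-monotonicity} and continuity of $\tilde{\rho}$, the gap
\[
\delta \;:=\; \min\bigl\{\rho^* - \tilde{\rho}(c^* - \varepsilon),\; \tilde{\rho}(c^* + \varepsilon) - \rho^*\bigr\}
\]
is strictly positive, and strict monotonicity forces $|\tilde{\rho}(c) - \rho^*| \ge \delta$ for every $c \in [c_L, c_U]$ with $|c - c^*| \ge \varepsilon$. The uniform convergence from Step~1 then guarantees that, almost surely for all sufficiently large $M$, $|\hat{\rho}_M(c) - \rho^*| \ge \delta/2$ at every such $c$, so any root $\hat{c}_M$ of $\hat{\rho}_M = \rho^*$ in $[c_L, c_U]$ must lie in $(c^* - \varepsilon, c^* + \varepsilon)$. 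Existence of at least one such root for large $M$ follows from continuity of $\hat{\rho}_M$ in $c$ (inherited from smoothness of $\mathcal{J}(\theta_m; c)$) combined with $\hat{\rho}_M(c_L) < \rho^* < \hat{\rho}_M(c_U)$ eventually, via the intermediate value theorem.

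The main obstacle is the continuous-mapping step: I need the Lipschitz constant of $r$ to be chosen deterministically and uniformly in $M$ on a probability-one event, which requires confirming that $\hat{\bar{\mathcal{J}}}_M(\cdot)$ stays bounded away from zero uniformly in $c$ for $M$ large. This follows from positivity of $\bar{\mathcal{J}}$ on the compact interval together with Step~1, but the argument must be stitched together carefully, since otherwise the transfer from $\hat{\bar{\mathcal{J}}}_M$ to $\hat{\rho}_M$ could lose uniformity near degenerate information configurations. A secondary concern is that $\hat{c}_M$ need not be unique for small $M$; the neighborhood argument above sidesteps this by showing that \emph{every} root lies in the $\varepsilon$-neighborhood, so the stated conclusion $\hat{c}_M \xrightarrow{a.s.} c^*$ holds for any measurable selection of roots. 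The deeper substantive difficulty—actually verifying the ULLN hypothesis itself on the function class $\{\mathcal{J}(\cdot; c) : c \in [c_L, c_U]\}$—is outsourced to the theorem's assumption and would in practice be discharged by a bracketing or envelope argument using \hyperref[cond:A2]{Condition~A.2}.
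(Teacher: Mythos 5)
Your proposal is correct and follows essentially the same route as the paper's proof sketch: uniform convergence of $\hat{\bar{\mathcal{J}}}_M$ transferred through the map $r(u)=\sigma_\theta^2 u/(\sigma_\theta^2 u+1)$, then root consistency from strict monotonicity of $\tilde{\rho}$, with your version simply supplying the neighborhood argument the paper leaves implicit. One small simplification: since $r'(u)=\sigma_\theta^2/(\sigma_\theta^2 u+1)^2\le\sigma_\theta^2$ for $u\ge 0$, the map $r$ is globally Lipschitz on $[0,\infty)$, so the "main obstacle" you flag—keeping $\hat{\bar{\mathcal{J}}}_M$ bounded away from zero—is not actually needed for the $\tilde{\rho}$ target (it would matter only if one calibrated $\bar{w}$ through $1/\mathcal{J}$).
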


\begin{proof}[Proof sketch]
Uniform convergence of $\hat{\bar{\mathcal{J}}}_M(c)$ to $\bar{\mathcal{J}}(c)$ implies uniform convergence of $\hat{\rho}_M(c)=r(\hat{\bar{\mathcal{J}}}_M(c))$ to $\tilde{\rho}(c)=r(\bar{\mathcal{J}}(c))$ by continuity of $r$. Since $\tilde{\rho}$ is continuous and strictly increasing on $[c_L,c_U]$, the root mapping is continuous under uniform perturbations, yielding $\hat{c}_M\to c^*$.
\end{proof}

\subsubsection{Asymptotic normality (fixed test form)}

\begin{theorem}[Asymptotic normality of $\hat{c}_M$; fixed $\Psi$]
\label{thm:eqc-normality}
Suppose, in addition to \cref{thm:eqc-consistency}, that $\tilde{\rho}$ is differentiable at $c^*$ with $\tilde{\rho}'(c^*)>0$, and $\Var\{\mathcal{J}(\theta;c^*)\}<\infty$. Then
\begin{equation}
\sqrt{M}\,(\hat{c}_M-c^*)
\xrightarrow{d}
\mathcal{N}\!\left(0,\ \frac{\Var\{\mathcal{J}(\theta;c^*)\}}{\{\bar{\mathcal{J}}'(c^*)\}^2}\right).
\label{eq:A24}
\end{equation}
\end{theorem}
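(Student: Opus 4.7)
The plan is to prove asymptotic normality of $\hat{c}_M$ via a one-dimensional Z-estimator (root-finding) argument, combined with a delta-method-style Taylor expansion around $c^*$. The first observation is that because the link $r(u) = \sigma_\theta^2 u/(\sigma_\theta^2 u + 1)$ is strictly monotone and smooth on $(0,\infty)$, the defining identity $\hat{\rho}_M(\hat{c}_M) = \rho^* = \tilde{\rho}(c^*)$ collapses to the cleaner identity $\hat{\bar{\mathcal{J}}}_M(\hat{c}_M) = \bar{\mathcal{J}}(c^*)$. This reframes the EQC root problem as a Z-estimator problem for $M_M(c) = \hat{\bar{\mathcal{J}}}_M(c) - \bar{\mathcal{J}}(c^*)$, whose population analog $M(c) = \bar{\mathcal{J}}(c) - \bar{\mathcal{J}}(c^*)$ has a unique zero at $c^*$ with $M'(c^*) = \bar{\mathcal{J}}'(c^*) > 0$ under \hyperref[cond:A3]{Condition~A.3}.

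Next, I would expand this identity around $c^*$. Writing
\[
0 \;=\; \{\hat{\bar{\mathcal{J}}}_M(c^*) - \bar{\mathcal{J}}(c^*)\} \;+\; \{\bar{\mathcal{J}}(\hat{c}_M) - \bar{\mathcal{J}}(c^*)\} \;+\; R_M,
\]
where $R_M = \{\hat{\bar{\mathcal{J}}}_M(\hat{c}_M) - \bar{\mathcal{J}}(\hat{c}_M)\} - \{\hat{\bar{\mathcal{J}}}_M(c^*) - \bar{\mathcal{J}}(c^*)\}$ is a stochastic equicontinuity residual, and applying a first-order Taylor expansion $\bar{\mathcal{J}}(\hat{c}_M) - \bar{\mathcal{J}}(c^*) = \bar{\mathcal{J}}'(c^*)(\hat{c}_M - c^*) + o(|\hat{c}_M - c^*|)$, one obtains the linearization
\[
\hat{c}_M - c^* \;=\; -\,\frac{\hat{\bar{\mathcal{J}}}_M(c^*) - \bar{\mathcal{J}}(c^*)}{\bar{\mathcal{J}}'(c^*)} \;+\; o_p(M^{-1/2}),
\]
provided $R_M = o_p(M^{-1/2})$ and $\hat{c}_M \to c^*$ in probability (the latter supplied by \cref{thm:eqc-consistency}). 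The classical CLT applied to the i.i.d.\ average $\hat{\bar{\mathcal{J}}}_M(c^*) = M^{-1}\sum_{m=1}^M \mathcal{J}(\theta_m; c^*)$, whose summands have finite variance by hypothesis, then yields the stated Gaussian limit with variance $\Var\{\mathcal{J}(\theta;c^*)\}/\{\bar{\mathcal{J}}'(c^*)\}^2$ by Slutsky's theorem.

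The main obstacle is controlling the stochastic equicontinuity remainder $R_M$ uniformly on a shrinking neighborhood of $c^*$. The cleanest route is to show that the class $\mathcal{F} = \{\mathcal{J}(\cdot; c): c \in [c_L, c_U]\}$ is $G$-Donsker, since then $\sqrt{M}(\hat{\bar{\mathcal{J}}}_M - \bar{\mathcal{J}})$ converges weakly to a continuous Gaussian process and $R_M = o_p(M^{-1/2})$ follows automatically once $\hat{c}_M \to c^*$. \hyperref[cond:A2]{Condition~A.2} supplies an integrable envelope for $\partial_c \mathcal{J}(\theta; c)$, and combined with compactness of $[c_L, c_U]$ and the explicit smoothness in \cref{lem:deriv-item-info}, this yields a Lipschitz-in-index parameterization whose bracketing integral is finite, implying Donsker. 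A more hands-on alternative is a mean-value argument: write $R_M = \{\hat{\bar{\mathcal{J}}}'_M(\tilde c) - \bar{\mathcal{J}}'(c^*)\}(\hat{c}_M - c^*)$ for some $\tilde c$ between $c^*$ and $\hat{c}_M$, where $\hat{\bar{\mathcal{J}}}'_M$ denotes the sample mean of $\partial_c \mathcal{J}(\theta_m; c)$, and bound the bracketed term using the uniform law of large numbers over $[c_L, c_U]$ applied to $\{\partial_c \mathcal{J}(\cdot; c)\}$ under the envelope of \hyperref[cond:A2]{Condition~A.2}. Either route delivers $R_M = o_p(M^{-1/2})$ and completes the argument.
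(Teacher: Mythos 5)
Your proposal is correct and follows essentially the same route as the paper's (sketched) proof: both exploit the injectivity of the link $r$ to collapse the reliability root equation to $\hat{\bar{\mathcal{J}}}_M(\hat{c}_M)=\bar{\mathcal{J}}(c^*)$, then apply the CLT to $\hat{\bar{\mathcal{J}}}_M(c^*)$ together with the standard delta method for one-dimensional Z-estimators. Your write-up simply supplies the details the paper leaves implicit (the explicit linearization and the Donsker/ULLN control of the equicontinuity remainder $R_M$), modulo a harmless slip where the mean-value form of $R_M$ should involve $\bar{\mathcal{J}}'(\tilde c)$ rather than $\bar{\mathcal{J}}'(c^*)$.
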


\begin{proof}[Proof sketch]
Since calibrating $\tilde{\rho}$ is equivalent to calibrating $\bar{\mathcal{J}}(c)$ to a fixed target value (because $r$ is one-to-one), the result follows from a CLT for $\hat{\bar{\mathcal{J}}}_M(c^*)$ and a standard delta method for roots (one-dimensional M-estimation).
\end{proof}

\subsection{SAC: Robbins--Monro Convergence and Polyak--Ruppert Averaging}
\label{subsec:app-sac}

\cref{subsec:sac} defines SAC as a Robbins--Monro procedure with projection and Polyak--Ruppert averaging. We state standard sufficient conditions for convergence in the present setting.

\subsubsection{Stochastic approximation form}

Let $\rho(c)$ denote the target reliability functional (typically $\bar{w}$ in SAC). Define
\begin{equation}
g(c)=\rho(c)-\rho^*,
\qquad\text{so that}\qquad g(c^*)=0.
\label{eq:A25}
\end{equation}
At iteration $n$, SAC forms a noisy estimate $\hat{\rho}_n$ of $\rho(c_n)$ via fresh Monte Carlo sampling (Algorithm Box 2) and updates
\begin{equation}
c_{n+1}
=\Pi_{[c_L,c_U]}\Bigl[c_n-a_n(\hat{\rho}_n-\rho^*)\Bigr]
=\Pi_{[c_L,c_U]}\Bigl[c_n-a_n\{g(c_n)+\xi_{n+1}\}\Bigr],
\label{eq:A26}
\end{equation}
where $\xi_{n+1}=\hat{\rho}_n-\rho(c_n)$ and $\Pi$ denotes projection onto $[c_L,c_U]$. The step size sequence (\cref{subsec:sac}) is
\begin{equation}
a_n=\frac{a}{(n+A)^\gamma},
\qquad a>0,\ A\ge 0,\ \gamma\in(1/2,1].
\label{eq:A27}
\end{equation}

\subsubsection{Almost sure convergence}

\begin{theorem}[SAC convergence; standard Robbins--Monro conditions]
\label{thm:sac-convergence}
Assume:
\begin{enumerate}[label=(\arabic*)]
\item $a_n>0$, $\sum_n a_n=\infty$, and $\sum_n a_n^2<\infty$ (satisfied by \cref{eq:A27} with $\gamma\in(1/2,1]$);
\item $c^*\in(c_L,c_U)$ and projection as in \cref{eq:A26} is used;
\item $g$ is continuous on $[c_L,c_U]$ and strictly increasing with $g(c)<0$ for $c<c^*$ and $g(c)>0$ for $c>c^*$;
\item $\{\xi_{n}\}$ is a martingale-difference sequence w.r.t.\ the natural filtration $\mathcal{F}_n$, i.e.,
\begin{equation}
\E[\xi_{n+1}\mid \mathcal{F}_n]=0,
\qquad
\sup_n \E[\xi_{n+1}^2\mid \mathcal{F}_n]<\infty
\quad\text{a.s.}
\label{eq:A28}
\end{equation}
\end{enumerate}
Then
\begin{equation}
c_n\xrightarrow{a.s.}c^*.
\label{eq:A29}
\end{equation}
\end{theorem}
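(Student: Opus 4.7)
The plan is to apply the classical Robbins--Siegmund almost-sure supermartingale convergence argument using a quadratic Lyapunov function, adapted to account for the projection step. First I would introduce $V_n = (c_n - c^*)^2$ and expand the one-step recursion. The crucial observation about projection is that, since $c^* \in (c_L, c_U) \subset [c_L, c_U]$, the projection $\Pi_{[c_L, c_U]}$ fixes $c^*$ and is $1$-Lipschitz, so
\begin{equation*}
|c_{n+1} - c^*| \le \bigl|c_n - c^* - a_n\{g(c_n) + \xi_{n+1}\}\bigr|.
\end{equation*}
Squaring and taking conditional expectation with respect to $\mathcal{F}_n$, using $\E[\xi_{n+1}\mid \mathcal{F}_n] = 0$ and the uniform second-moment bound \cref{eq:A28}, yields a recursion of the form
\begin{equation*}
\E[V_{n+1}\mid \mathcal{F}_n]
\le V_n - 2a_n (c_n - c^*)\, g(c_n) + a_n^2 \bigl\{g(c_n)^2 + \E[\xi_{n+1}^2\mid \mathcal{F}_n]\bigr\}.
\end{equation*}

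Next I would control the two error terms. Because the iterates are projected into the compact interval $[c_L, c_U]$ and $g$ is continuous there by condition~(3), $|g(c_n)|$ is uniformly bounded by some constant $B$; together with the assumed bound $\sup_n \E[\xi_{n+1}^2\mid \mathcal{F}_n] \le \sigma^2$, this shows the $a_n^2$ remainder is summable in expectation since $\sum_n a_n^2 < \infty$ by condition~(1). The sign condition in~(3) implies $(c_n - c^*)g(c_n) \ge 0$ pointwise, so the linear term in $a_n$ is a non-positive drift. Applying the Robbins--Siegmund theorem for non-negative supermartingale-like sequences then delivers two conclusions simultaneously: $V_n$ converges almost surely to some limit $V_\infty \ge 0$, and $\sum_{n=1}^{\infty} a_n (c_n - c^*)\, g(c_n) < \infty$ almost surely.

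It remains to identify $V_\infty = 0$. I would argue by contradiction: if $V_\infty > 0$ on a set of positive probability, then $|c_n - c^*|$ stays bounded away from $0$ eventually, and by the strict sign condition together with continuity of $g$ on the compact interval $[c_L, c_U]$, we have $(c_n - c^*) g(c_n) \ge \eta > 0$ eventually for some $\eta$ depending on the realized $V_\infty$. Combined with $\sum_n a_n = \infty$ from condition~(1), this would force $\sum_n a_n (c_n - c^*) g(c_n) = \infty$, contradicting the Robbins--Siegmund conclusion. Hence $V_\infty = 0$ almost surely, which is the desired $c_n \to c^*$ a.s.

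The main obstacle I anticipate is not the algebra but the clean statement of the sign-condition compactness step that rules out $V_\infty > 0$: one needs a uniform lower bound on $(c - c^*) g(c)$ on sets of the form $\{c \in [c_L, c_U] : |c - c^*| \ge \delta\}$, which follows from continuity and strict monotonicity but must be stated carefully since the set of limit points of $c_n$ is random. A secondary care point is verifying that the projection does not invalidate the martingale-difference structure in~(4): because $\Pi_{[c_L, c_U]}$ is $\mathcal{F}_n$-measurable given $c_n$, the noise $\xi_{n+1}$ continues to satisfy \cref{eq:A28} unchanged, and the $1$-Lipschitz contraction only strengthens the inequality leading to the Lyapunov recursion. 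No novel machinery is needed beyond Robbins--Siegmund; the argument is standard once these measurability and compactness details are made explicit.
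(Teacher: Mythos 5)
Your argument is correct and, unlike the paper's own treatment, it is essentially a complete proof rather than a citation. The paper disposes of this theorem with a two-line sketch that appeals to the ODE method of classical stochastic approximation (a globally stable equilibrium of $\dot c = -g(c)$ on the projected interval), whereas you run the martingale route: the nonexpansiveness of $\Pi_{[c_L,c_U]}$ together with $\Pi(c^*)=c^*$ gives the Lyapunov recursion for $V_n=(c_n-c^*)^2$, Robbins--Siegmund yields a.s.\ convergence of $V_n$ plus summability of $\sum_n a_n (c_n-c^*)g(c_n)$, and the sign condition on the compact set $\{c\in[c_L,c_U]: |c-c^*|\ge\delta\}$ forces $V_\infty=0$. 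In this one-dimensional, monotone-$g$, projected setting your approach is the more economical of the two: it needs no interpolation of the iterates or analysis of limiting trajectories, only the conditional-moment bound \cref{eq:A28} and $\sum a_n^2<\infty$, and the two "care points" you flag (handling the random limit $V_\infty$ by covering $\{V_\infty>0\}$ with countably many events $\{V_\infty\ge 2\delta^2\}$, and noting that the projection acts after the noise enters so condition (4) is untouched) are exactly the right ones and are routine to discharge. The ODE method the paper gestures at buys generality (multidimensional parameters, non-monotone mean fields with isolated stable equilibria) that is not needed here.
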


\begin{proof}[Proof sketch]
This is a direct application of classical stochastic approximation with projection. Condition~(3) ensures a globally stable root on $[c_L,c_U]$ (ODE method), Condition~(4) controls stochastic fluctuations, and Condition~(1) guarantees diminishing adaptation noise with persistent excitation. Projection provides stability (bounded iterates).
\end{proof}

\subsubsection{Polyak--Ruppert averaging and warm-start intuition}

Define the post--burn-in Polyak--Ruppert average (\cref{subsec:sac}):
\begin{equation}
\bar{c}_N=\frac{1}{N-B}\sum_{n=B+1}^N c_n.
\label{eq:A30}
\end{equation}
Under standard differentiability and local moment conditions (e.g., $g$ differentiable at $c^*$ with $g'(c^*)>0$), Polyak--Ruppert averaging yields the optimal $\sqrt{N}$-rate:
\begin{equation}
\sqrt{N}\,(\bar{c}_N-c^*)
\xrightarrow{d}
\mathcal{N}\!\left(0,\ \frac{\mathbb{V}}{g'(c^*)^2}\right),
\label{eq:A31}
\end{equation}
where $\mathbb{V}$ is the asymptotic variance of the noise process at stationarity (i.e., the variability of the Monte Carlo reliability estimator near $c^*$).

\begin{remark}[Why EQC warm start helps]
\label{rmk:warmstart}
With diminishing step sizes, early SAC iterates can have disproportionate influence on the finite-$N$ average $\bar{c}_N$. Initializing SAC at $c_0\approx c^*$ (e.g., $c_0=c^*_{\text{EQC}}$) reduces the transient regime, thereby shortening the burn-in required for stable averaging.
\end{remark}

\subsection{Summary of What Appendix A Establishes}
\label{subsec:app-summary}

\begin{itemize}
\item Closed-form derivative identities for item/test information under global discrimination scaling (\cref{eq:A12,eq:A13}).
\item Derivatives for both reliability functionals (\cref{eq:A14,eq:A15}) and the logic that reduces monotonicity to positivity of specific derivative expectations on a practical interval $[c_L,c_U]$ (\hyperref[cond:A3]{Condition~A.3}).
\item Existence and uniqueness of the calibrated scale $c^*$ for targets within $(\rho(c_L),\rho(c_U))$ (\cref{cor:existence-uniqueness}).
\item Jensen inequality ordering $\tilde{\rho}(c)\ge \bar{w}(c)$ (\cref{prop:jensen-reliability}), implying $c^*_{\bar{w}}\ge c^*_{\tilde{\rho}}$ when calibrating to the same numerical target (\cref{cor:scale-ordering}).
\item Asymptotic justification for the calibration algorithms: EQC root consistency and a $\sqrt{M}$ limit under a fixed test form (\cref{thm:eqc-consistency,thm:eqc-normality}), and SAC almost sure convergence with $\sqrt{N}$-rate under Polyak--Ruppert averaging (\cref{thm:sac-convergence} and \cref{eq:A31}).
\end{itemize}

\section{Achievable Reliability Bounds}
\label{app:bounds}

This appendix provides a rigorous discussion of achievable reliability bounds for reliability-targeted IRT simulation. It expands \cref{subsec:bounds} of the main text by (i) formalizing feasibility on a practical calibration interval $[c_L,c_U]$, (ii) giving analytic upper bounds for the average-information reliability $\tilde{\rho}(c)$, (iii) characterizing intrinsic limitations of the MSEM-based reliability $\bar{w}(c)$ under information gaps, and (iv) outlining an empirical illustration roadmap (with figure specifications) to support interpretation and reproducibility.

Throughout, we use the notation of \cref{app:proofs}. Fix an item configuration
\[
\Psi=\{(\beta_i,\lambda_{i,0})\}_{i=1}^I,\qquad \lambda_{i,0}>0,
\]
and apply global discrimination scaling $\lambda_i(c)=c\lambda_{i,0}$ (\cref{eq:scaling}). Let $\mathcal{J}(\theta;c)$ denote test information (\cref{eq:tif}), and let $\rho(c)$ denote either of the two population reliability functionals:
\[
\tilde{\rho}(c)=\frac{\sigma_\theta^2 \bar{\mathcal{J}}(c)}{\sigma_\theta^2 \bar{\mathcal{J}}(c)+1}
\;\;\text{(\cref{eq:avg-info})},
\qquad
\bar{w}(c)=\frac{\sigma_\theta^2}{\sigma_\theta^2+\mathrm{MSEM}(c)}
\;\;\text{(\cref{eq:msem})},
\]
with $\bar{\mathcal{J}}(c)=\E[\mathcal{J}(\theta;c)]$ and $\mathrm{MSEM}(c)=\E[1/\mathcal{J}(\theta;c)]$.

\subsection{Achievable Reliability Set, Endpoint Bounds, and Feasibility}
\label{subsec:app-feasibility}

For a fixed $(\Psi,G)$ and a fixed calibration interval $[c_L,c_U]\subset(0,\infty)$, define the achievable reliability set
\begin{equation}
\mathcal{R}([c_L,c_U]) \equiv \{\rho(c): c\in[c_L,c_U]\}.
\label{eq:B1}
\end{equation}
In complete generality (i.e., without any monotonicity assumption), define the global attainable extrema on $[c_L,c_U]$ as
\begin{equation}
\rho_{\min}^{\star}\equiv \inf_{c\in[c_L,c_U]}\rho(c),
\qquad
\rho_{\max}^{\star}\equiv \sup_{c\in[c_L,c_U]}\rho(c).
\label{eq:B2}
\end{equation}
These are the appropriate mathematical notions of ``achievable bounds'' when $\rho(c)$ may be non-monotone on $[c_L,c_U]$.

In the main text (\cref{subsec:bounds}), we work in the practical calibration regime (\cref{app:proofs}, \hyperref[cond:A1]{Conditions~A.1}--\hyperref[cond:A3]{A.3}), where $\rho(c)$ is continuous and strictly increasing on $[c_L,c_U]$. In that regime, the attainable extrema simplify to endpoint values.

\begin{proposition}[Interval property under monotonicity]
\label{prop:interval-property}
If $\rho(c)$ is continuous and strictly increasing on $[c_L,c_U]$, then
\begin{equation}
\mathcal{R}([c_L,c_U])=[\rho(c_L),\rho(c_U)],
\qquad
\rho_{\min}^{\star}=\rho(c_L),\quad \rho_{\max}^{\star}=\rho(c_U).
\label{eq:B3}
\end{equation}
\end{proposition}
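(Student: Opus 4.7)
The plan is to split the claim into three short steps: an inclusion from strict monotonicity, a reverse inclusion from the Intermediate Value Theorem, and a direct reading-off of the infimum and supremum. Since $\rho$ is assumed continuous and strictly increasing on a compact interval, both directions are elementary real-analysis manoeuvres.

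For the first inclusion $\mathcal{R}([c_L,c_U]) \subseteq [\rho(c_L),\rho(c_U)]$, I would invoke strict monotonicity: every $c \in [c_L,c_U]$ satisfies $\rho(c_L) \le \rho(c) \le \rho(c_U)$, so its image lies in the claimed interval. For the reverse inclusion, I would use continuity: the endpoints are attained at $c_L$ and $c_U$ by definition, and for any $y \in (\rho(c_L),\rho(c_U))$ the Intermediate Value Theorem applied to the continuous function $\rho$ on the compact interval $[c_L,c_U]$ produces some $c \in (c_L,c_U)$ with $\rho(c) = y$. Combining the two inclusions gives $\mathcal{R}([c_L,c_U]) = [\rho(c_L),\rho(c_U)]$. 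The identities $\rho_{\min}^{\star} = \rho(c_L)$ and $\rho_{\max}^{\star} = \rho(c_U)$ then follow immediately, since the infimum and supremum of a closed interval coincide with its endpoints, and both are attained (so the infimum and supremum are realized as the minimum and maximum on $\mathcal{R}$).

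The argument is textbook and there is no substantive obstacle. The only subtlety worth flagging is that the image-set identity would already follow from weak monotonicity combined with continuity; strict monotonicity is doing the additional work of ensuring that each value in $\mathcal{R}$ is attained at a unique $c$, which matters for \cref{cor:existence-uniqueness} but is not strictly needed for equation (B.3) itself.
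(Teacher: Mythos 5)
Your proof is correct and follows essentially the same route as the paper's: continuity (via the Intermediate Value Theorem) gives that the image is the full interval, and monotonicity pins the endpoints at $c_L$ and $c_U$; the paper merely compresses these two steps into one sentence each. Your closing remark that weak monotonicity would suffice for the image-set identity, with strictness only needed for uniqueness downstream, is accurate but does not change the argument.
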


\begin{proof}
Continuity implies $\rho([c_L,c_U])$ is an interval. Strict monotonicity implies its minimum and maximum are attained at $c_L$ and $c_U$, respectively.
\end{proof}

\paragraph{Feasibility of the inverse design problem.} Under \cref{prop:interval-property}, the inverse problem ``find $c^*$ such that $\rho(c^*)=\rho^*$'' is feasible with a unique solution if and only if
\begin{equation}
\rho(c_L)<\rho^*<\rho(c_U).
\label{eq:B4}
\end{equation}
If monotonicity fails on $[c_L,c_U]$ (which may occur for $\bar{w}$ under extreme scaling; see \cref{subsec:app-msem-limits}), then \cref{eq:B4} is no longer a sufficient characterization of feasibility or uniqueness, and the inverse map can admit multiple solutions or none.

\begin{remark}[IRTsimrel feasibility diagnostics]
\label{rmk:feasibility-diagnostics}
The EQC implementation in IRTsimrel reports boundary reliabilities $\rho(c_L)$ and $\rho(c_U)$ to enable direct feasibility screening prior to interpreting the calibrated solution. In the package output these appear as \texttt{misc\$rho\_bounds["rho\_L"]} and \texttt{misc\$rho\_bounds["rho\_U"]}.
\end{remark}

\subsection{Analytic Upper Bounds and Asymptotics for $\tilde{\rho}(c)$}
\label{subsec:app-tilde-bounds}

This section develops universal analytic bounds for $\tilde{\rho}(c)$ based on the logistic variance kernel. These bounds hold for any fixed $(\Psi,G)$ and help interpret how test length and baseline discriminations constrain the upper tail of achievable $\tilde{\rho}$ values on a fixed $[c_L,c_U]$.

\subsubsection{A sharp bound for the logistic variance kernel}

Let $s(x)=(1+e^{-x})^{-1}$ and $h(x)=s(x)\{1-s(x)\}$. Under global scaling, item information can be written (\cref{app:proofs}, \cref{eq:A11}) as
\begin{equation}
\mathcal{J}_i(\theta;c)=c^2\lambda_{i,0}^2 h\left(c\lambda_{i,0}(\theta-\beta_i)\right).
\label{eq:B5}
\end{equation}

\begin{lemma}[Kernel bound]
\label{lem:kernel-bound}
For all $x\in\R$,
\begin{equation}
0<h(x)\le \frac{1}{4},
\qquad \text{with equality iff }x=0.
\label{eq:B6}
\end{equation}
\end{lemma}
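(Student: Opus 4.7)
The plan is to reduce the bound to the elementary fact that $y(1-y)\le 1/4$ on $[0,1]$, together with the injectivity of the logistic link. Because $s(x)=(1+e^{-x})^{-1}$ takes values in $(0,1)$ for every finite $x$, writing $y=s(x)$ turns $h(x)=s(x)\{1-s(x)\}$ into the one-variable quadratic $y\mapsto y(1-y)$ on the open unit interval. Positivity is then immediate: both $y$ and $1-y$ are strictly positive for $y\in(0,1)$, so $h(x)>0$ for all $x\in\mathbb{R}$.

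For the upper bound I would use one of two equivalent routes. The first is AM--GM applied to the pair $s(x)$ and $1-s(x)$, which sum to unity: $\sqrt{s(x)\{1-s(x)\}}\le \tfrac{1}{2}\bigl(s(x)+(1-s(x))\bigr)=\tfrac{1}{2}$, giving $h(x)\le 1/4$. The second is to complete the square: $y(1-y)=\tfrac{1}{4}-(y-\tfrac{1}{2})^{2}\le \tfrac{1}{4}$. Either argument produces the bound with a transparent equality condition.

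For the equality characterization, AM--GM is an equality iff $s(x)=1-s(x)$, equivalently $s(x)=1/2$; completing the square is an equality iff $y=1/2$. In both cases, using strict monotonicity of $s$, this happens iff $x=0$. Combining these observations yields the claim in a line or two.

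There is no substantive obstacle here: $h$ is a well-known elementary function, and the lemma is stated as the kernel inequality used later in \cref{subsec:app-tilde-bounds} to bound item information. The only care needed is to observe that the strict inequality $h(x)>0$ relies on $s(x)\in(0,1)$ (never reaching the endpoints for finite $x$), and that the equality case in the upper bound is pinned down by the injectivity of $s$. No interchange of limits, differentiation, or regularity conditions are required.
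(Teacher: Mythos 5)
Your proof is correct and takes essentially the same route as the paper's one-line argument: both substitute $p = s(x) \in (0,1)$ and reduce the claim to the elementary fact that $p(1-p)$ is positive on $(0,1)$ and maximized at $p = 1/2$, which corresponds to $x = 0$ by injectivity of $s$. Your version merely spells out the maximization step (via AM--GM or completing the square) that the paper leaves implicit.
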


\begin{proof}
$h(x)=p(1-p)$ with $p=s(x)\in(0,1)$, maximized at $p=1/2$ (i.e., $x=0$) with maximum $1/4$.
\end{proof}

\subsubsection{Bounds for test information and $\tilde{\rho}(c)$}

Summing \cref{eq:B5} and applying \cref{lem:kernel-bound} yields a pointwise test-information bound.

\begin{proposition}[Pointwise and average information upper bounds]
\label{prop:info-upper-bounds}
For any $c>0$ and any $\theta\in\R$,
\begin{equation}
\mathcal{J}(\theta;c)=\sum_{i=1}^I c^2\lambda_{i,0}^2 h\left(c\lambda_{i,0}(\theta-\beta_i)\right)
\le \frac{c^2}{4}\sum_{i=1}^I \lambda_{i,0}^2.
\label{eq:B7}
\end{equation}
Consequently,
\begin{equation}
\bar{\mathcal{J}}(c)=\E\left[\mathcal{J}(\theta;c)\right]
\le \frac{c^2}{4}\sum_{i=1}^I \lambda_{i,0}^2.
\label{eq:B8}
\end{equation}
\end{proposition}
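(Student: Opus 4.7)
The plan is to reduce the proposition to a direct termwise application of the kernel bound in \cref{lem:kernel-bound}. First, I would fix $\theta \in \R$ and $c > 0$ arbitrarily, and apply the inequality $h(x) \le 1/4$ to each argument $x_i = c\lambda_{i,0}(\theta - \beta_i)$. Since this inequality is uniform in $x$, each summand in the definition of $\mathcal{J}(\theta;c)$ satisfies $c^2 \lambda_{i,0}^2 h(x_i) \le c^2 \lambda_{i,0}^2 / 4$, and summing over $i = 1, \ldots, I$ yields the pointwise bound \cref{eq:B7}.

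Second, I would obtain \cref{eq:B8} by taking expectation under $\theta \sim G$ on both sides of \cref{eq:B7}. The right-hand side of \cref{eq:B7} does not depend on $\theta$, so monotonicity of expectation immediately produces $\bar{\mathcal{J}}(c) \le (c^2/4)\sum_{i=1}^I \lambda_{i,0}^2$. No integrability issue arises at this step because $\mathcal{J}(\theta;c)$ is now sandwiched between $0$ and a finite deterministic constant, which also reconfirms \hyperref[cond:A1]{Condition~A.1} for $\bar{\mathcal{J}}(c)$ on any compact $[c_L, c_U]$.

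There is essentially no technical obstacle in the argument itself; the result is a one-line consequence of \cref{lem:kernel-bound} combined with linearity and monotonicity of the expectation. The only interpretive point worth recording is that the bound is generically strict: equality in $h(x) \le 1/4$ requires $x = 0$, so simultaneous equality in every summand would require $\theta = \beta_i$ for all $i$, which is impossible whenever the $\beta_i$ are not all equal. In particular, \cref{eq:B8} is strict whenever the item pool contains two or more distinct difficulties and $G$ is non-degenerate, which is the relevant case in any simulation of practical interest. This slack quantifies the gap between the idealized ``maximum information per item'' ceiling and what is actually achievable given the alignment between the difficulty grid and the latent distribution.
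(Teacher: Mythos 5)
Your proof is correct and matches the paper's own argument exactly: apply the kernel bound $h(x)\le 1/4$ from \cref{lem:kernel-bound} termwise, sum over items, and take expectations. The additional observations about strictness of the inequality and the automatic finiteness of $\bar{\mathcal{J}}(c)$ are sound but not needed for the statement.
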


\begin{proof}
Apply \cref{eq:B6} termwise in \cref{eq:B5} and sum; then take expectations.
\end{proof}

Plugging \cref{eq:B8} into the definition of $\tilde{\rho}(c)$ (\cref{eq:avg-info}) gives an explicit ceiling.

\begin{corollary}[Closed-form ceiling for $\tilde{\rho}(c)$]
\label{cor:tilde-ceiling}
Let $S_2\equiv\sum_{i=1}^I\lambda_{i,0}^2$. For any $c>0$,
\begin{equation}
\tilde{\rho}(c)
= \frac{\sigma_\theta^2 \bar{\mathcal{J}}(c)}{\sigma_\theta^2 \bar{\mathcal{J}}(c)+1}
\le
\frac{\sigma_\theta^2 (c^2 S_2/4)}{\sigma_\theta^2 (c^2 S_2/4)+1}.
\label{eq:B9}
\end{equation}
\end{corollary}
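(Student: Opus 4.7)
The plan is to derive the bound by chaining Proposition B.2 (which controls $\bar{\mathcal{J}}(c)$ from above) with the monotonicity of the scalar map that converts average information into $\tilde{\rho}(c)$. Concretely, I would introduce the one-dimensional function $r(u)=\sigma_\theta^2 u/(\sigma_\theta^2 u+1)$ on $u\ge 0$, observe that $\tilde{\rho}(c)=r\bigl(\bar{\mathcal{J}}(c)\bigr)$ by the definition in \cref{eq:avg-info}, and then use \cref{prop:info-upper-bounds} to replace $\bar{\mathcal{J}}(c)$ by its closed-form ceiling $c^2 S_2/4$ inside $r(\cdot)$.

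First I would verify the monotonicity of $r$: a direct calculation gives
\begin{equation*}
r'(u)=\frac{\sigma_\theta^2}{(\sigma_\theta^2 u+1)^2}>0
\end{equation*}
for all $u\ge 0$, so $r$ is strictly increasing on $[0,\infty)$. Next I would invoke \cref{prop:info-upper-bounds}, which already establishes $\bar{\mathcal{J}}(c)\le c^2 S_2/4$ with $S_2=\sum_{i=1}^I \lambda_{i,0}^2$. Applying $r$ to both sides and using strict monotonicity yields
\begin{equation*}
\tilde{\rho}(c)=r\bigl(\bar{\mathcal{J}}(c)\bigr)\le r\!\left(\tfrac{c^2 S_2}{4}\right)=\frac{\sigma_\theta^2(c^2 S_2/4)}{\sigma_\theta^2(c^2 S_2/4)+1},
\end{equation*}
which is exactly \cref{eq:B9}.

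There is essentially no hard step here: the content of the corollary is entirely loaded into the pointwise kernel bound $h(x)\le 1/4$ (\cref{lem:kernel-bound}), which was already used in \cref{prop:info-upper-bounds}. The only item to be careful about is ensuring that $\bar{\mathcal{J}}(c)$ is finite so that $r$ is well-defined at $\bar{\mathcal{J}}(c)$; this follows immediately from \cref{eq:B8} since the right-hand side is finite for every fixed $c>0$ and finite $I$. I would close by noting the trivial equality case: $r$ preserves the a.s.\ equality condition in \cref{lem:kernel-bound}, so the bound is attained in the limit only when $c\lambda_{i,0}(\theta-\beta_i)\to 0$ for all items with $G$-probability one, which is an idealized regime (all examinees located exactly at every item's difficulty) and therefore not attained in practice. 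This clarifies that \cref{eq:B9} is a strict upper envelope in any realistic simulation configuration.
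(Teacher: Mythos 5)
Your proof is correct and follows essentially the same route as the paper, which simply notes that plugging the bound $\bar{\mathcal{J}}(c)\le c^2 S_2/4$ from \cref{prop:info-upper-bounds} into the definition of $\tilde{\rho}(c)$ yields the ceiling; your explicit verification that $r(u)=\sigma_\theta^2 u/(\sigma_\theta^2 u+1)$ is strictly increasing just makes the implicit monotonicity step precise. The added observation about the (unattainable) equality case is a harmless bonus not present in the paper.
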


\paragraph{Special case (Rasch, standardized $G$).} If $\lambda_{i,0}\equiv 1$ and $\sigma_\theta^2=1$, then $S_2=I$ and
\begin{equation}
\tilde{\rho}(c)\le \frac{c^2 I/4}{c^2 I/4+1}.
\label{eq:B10}
\end{equation}
Evaluating \cref{eq:B10} at $c=c_U$ yields a simple analytic upper bound on the achievable $\tilde{\rho}$ range induced by the calibration cap $c\le c_U$.

\subsubsection{Small-$c$ behavior (quadratic onset)}

As $c\downarrow 0$, $\pi_i(\theta;c)\to 1/2$ and $h(c\lambda_{i,0}(\theta-\beta_i))\to 1/4$ for each fixed $\theta$. For fixed $\Psi$, this implies
\begin{equation}
\mathcal{J}(\theta;c)=\frac{c^2}{4}\sum_{i=1}^I\lambda_{i,0}^2 + o(c^2),
\qquad c\downarrow 0,
\label{eq:B11}
\end{equation}
and hence $\tilde{\rho}(c)=O(c^2)$ near the origin. Practically, once $c_L$ is already small, further reducing $c_L$ expands $\tilde{\rho}(c_L)$ only at a quadratic rate.

\subsubsection{Large-$c$ behavior for $\tilde{\rho}(c)$ (linear growth of $\bar{\mathcal{J}}(c)$)}

A key point for interpretation is that $\tilde{\rho}(c)$ depends on the arithmetic mean $\bar{\mathcal{J}}(c)$, and therefore can approach 1 even when information becomes highly uneven across $\theta$.

Assume $G$ has a continuous density $g$. For a fixed item $i$,
\begin{equation}
\E\big[\mathcal{J}_i(\theta;c)\big]
=\int c^2\lambda_{i,0}^2 h\left(c\lambda_{i,0}(\theta-\beta_i)\right)g(\theta)d\theta.
\label{eq:B12}
\end{equation}
Let $u=c\lambda_{i,0}(\theta-\beta_i)$, so $d\theta=du/(c\lambda_{i,0})$. Then
\begin{equation}
\E\big[\mathcal{J}_i(\theta;c)\big]
= c\lambda_{i,0}\int h(u) g\left(\beta_i+\frac{u}{c\lambda_{i,0}}\right)du.
\label{eq:B13}
\end{equation}
Because $h(u)=s'(u)$ is the logistic density, $\int_{-\infty}^{\infty} h(u)du=1$. Under dominated convergence (using integrability of $h$ and continuity of $g$),
\begin{equation}
\E\big[\mathcal{J}_i(\theta;c)\big]
\sim c\lambda_{i,0} g(\beta_i),
\qquad c\to\infty.
\label{eq:B14}
\end{equation}
Summing over items yields
\begin{equation}
\bar{\mathcal{J}}(c)\sim c\sum_{i=1}^I \lambda_{i,0}g(\beta_i),
\qquad c\to\infty,
\label{eq:B15}
\end{equation}
and therefore $\tilde{\rho}(c)\to 1$ as $c\to\infty$ whenever $\sum_i \lambda_{i,0}g(\beta_i)>0$. This asymptotic clarifies why, for $\tilde{\rho}$, the effective upper bound in practice is usually imposed by the chosen $c_U$, rather than by an intrinsic ceiling strictly below 1.

\subsection{Intrinsic Limitations of $\bar{w}(c)$ Under Information Gaps}
\label{subsec:app-msem-limits}

Unlike $\tilde{\rho}(c)$, the MSEM-based reliability depends on the harmonic mean of information through $\E[1/\mathcal{J}(\theta;c)]$:
\begin{equation}
\bar{w}(c)=\frac{\sigma_\theta^2}{\sigma_\theta^2+\E[1/\mathcal{J}(\theta;c)]}.
\label{eq:B16}
\end{equation}
This structure makes $\bar{w}(c)$ highly sensitive to low-information regions. \cref{app:proofs} formalizes the Jensen-inequality ordering $\tilde{\rho}(c)\ge \bar{w}(c)$ (\cref{prop:jensen-reliability}), but here we emphasize a distinct phenomenon: increasing discrimination can worsen $\bar{w}(c)$ if it creates deep information gaps between increasingly ``spiky'' item-information peaks.

\subsubsection{Collapse under a genuine difficulty gap}

The following result formalizes an extreme but instructive mechanism: if there is a region of non-negligible latent mass with no nearby item difficulties, then $\bar{w}(c)$ cannot be driven upward by arbitrarily large $c$; it can in fact collapse.

\begin{proposition}[MSEM explosion and $\bar{w}(c)$ collapse under a difficulty gap]
\label{prop:msem-collapse}
Assume:
\begin{enumerate}[label=(\arabic*)]
\item $G$ has a density $g$, and there exists an interval $[a,b]$ such that $\int_a^b g(\theta)d\theta>0$.
\item There exists $\delta>0$ such that for all $\theta\in[a,b]$ and all $i$, $|\theta-\beta_i|\ge \delta$ (no item difficulty lies within distance $\delta$ of $[a,b]$).
\item Let $\lambda_{\min}\equiv\min_i \lambda_{i,0}>0$, and define $S_2=\sum_i \lambda_{i,0}^2$.
\end{enumerate}
Then, as $c\to\infty$,
\begin{equation}
\mathrm{MSEM}(c)=\E\left[\frac{1}{\mathcal{J}(\theta;c)}\right]\to\infty
\quad\text{and hence}\quad
\bar{w}(c)\to 0.
\label{eq:B17}
\end{equation}
\end{proposition}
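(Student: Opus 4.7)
The plan is to exploit the geometric fact that on the gap interval $[a,b]$, every item has $|x_i(\theta;c)|$ bounded below linearly in $c$, so each item-information contribution decays \emph{exponentially} in $c$, which more than offsets the $c^{2}$ prefactor in $\mathcal{J}_i(\theta;c)$. The resulting pointwise collapse of $\mathcal{J}(\theta;c)$ on a set of positive $G$-mass is then amplified by the reciprocal inside MSEM, forcing divergence. The argument is one-sided (lower bound on $\mathrm{MSEM}(c)$), so no dominated-convergence machinery is required.

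Concretely, I would first fix $\theta\in[a,b]$ and use condition~(2) together with $\lambda_{i,0}\ge\lambda_{\min}$ to obtain $|x_i(\theta;c)|=c\lambda_{i,0}|\theta-\beta_i|\ge c\lambda_{\min}\delta$ for every $i$. Because $h$ is even and strictly decreasing on $[0,\infty)$ (\cref{lem:kernel-bound} together with elementary calculus on $h$), this yields the uniform bound $h(x_i(\theta;c))\le h(c\lambda_{\min}\delta)$. Summing the representation in \cref{eq:B5} then gives
\[
\mathcal{J}(\theta;c)\;\le\;c^{2} S_{2}\, h(c\lambda_{\min}\delta),\qquad \theta\in[a,b],
\]
where $S_{2}=\sum_{i}\lambda_{i,0}^{2}>0$.

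Next I would invoke the exponential tail of the logistic variance kernel, $h(x)=e^{-|x|}/(1+e^{-|x|})^{2}\le e^{-|x|}$, to conclude that $c^{2}S_{2}\,h(c\lambda_{\min}\delta)\le c^{2}S_{2}\,e^{-c\lambda_{\min}\delta}\to 0$ as $c\to\infty$. Hence $1/\mathcal{J}(\theta;c)\ge 1/\{c^{2}S_{2}\,h(c\lambda_{\min}\delta)\}\to\infty$ uniformly for $\theta\in[a,b]$. Because the integrand is nonnegative, I can lower-bound the MSEM integral by restricting to the gap:
\[
\mathrm{MSEM}(c)\;\ge\;\int_{a}^{b}\frac{g(\theta)}{\mathcal{J}(\theta;c)}\,d\theta\;\ge\;\frac{\int_{a}^{b}g(\theta)\,d\theta}{c^{2}S_{2}\,h(c\lambda_{\min}\delta)}\;\longrightarrow\;\infty,
\]
where the numerator is strictly positive by condition~(1). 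The conclusion $\bar{w}(c)\to 0$ then follows immediately from \cref{eq:B16}.

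The main obstacle is essentially bookkeeping rather than analysis: one must verify that the exponential decay of $h$ strictly dominates the polynomial $c^{2}$ prefactor, which is elementary but worth stating explicitly so that the one-sided bound on $\mathrm{MSEM}(c)$ is clearly valid for all sufficiently large $c$. A secondary subtlety is that the argument relies only on the \emph{minimum} baseline discrimination and the \emph{minimum} gap $\delta$, so the conclusion is robust: the divergence mechanism survives any thinning of the item pool that preserves the gap, and it is entirely unaffected by the behavior of $\mathcal{J}(\theta;c)$ outside $[a,b]$.
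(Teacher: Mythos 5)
Your proposal is correct and follows essentially the same route as the paper's proof: a pointwise exponential upper bound on $\mathcal{J}(\theta;c)$ over the gap $[a,b]$ via $h(x)\le e^{-|x|}$, followed by a one-sided lower bound on $\mathrm{MSEM}(c)$ obtained by restricting the integral to $[a,b]$. The only cosmetic difference is that you first use monotonicity of $h$ on $[0,\infty)$ to reduce to $h(c\lambda_{\min}\delta)$ before invoking the exponential tail, whereas the paper applies the tail bound termwise and then minimizes over $i$; the resulting bound $c^{2}S_{2}e^{-c\lambda_{\min}\delta}$ is identical.
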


\begin{proof}
For $x\ge 0$, $h(x)=e^{-x}/(1+e^{-x})^2\le e^{-x}$; by symmetry, $h(x)\le e^{-|x|}$ for all $x\in\R$. For any $\theta\in[a,b]$, condition (2) gives $|\theta-\beta_i|\ge\delta$ for all $i$, so
\begin{equation}
\mathcal{J}(\theta;c)
=\sum_{i=1}^I c^2\lambda_{i,0}^2 h\left(c\lambda_{i,0}(\theta-\beta_i)\right)
\le
c^2\sum_{i=1}^I \lambda_{i,0}^2 \exp\{-c\lambda_{i,0}|\theta-\beta_i|\}
\le
c^2 S_2 \exp\{-c\lambda_{\min}\delta\}.
\label{eq:B18}
\end{equation}
Therefore, for $\theta\in[a,b]$,
\begin{equation}
\frac{1}{\mathcal{J}(\theta;c)}
\ge \frac{\exp\{c\lambda_{\min}\delta\}}{c^2 S_2}.
\label{eq:B19}
\end{equation}
Integrating over $[a,b]$ yields
\begin{equation}
\mathrm{MSEM}(c)
=\int \frac{1}{\mathcal{J}(\theta;c)}g(\theta)d\theta
\ge
\left(\int_a^b g(\theta)d\theta\right)\frac{\exp\{c\lambda_{\min}\delta\}}{c^2 S_2}
\to\infty,
\label{eq:B20}
\end{equation}
which implies $\bar{w}(c)\to 0$.
\end{proof}

\begin{remark}[Connection to ``practical calibration intervals'']
\label{rmk:practical-intervals}
\cref{prop:msem-collapse} is a large-$c$ pathology: it demonstrates that $\bar{w}(c)$ need not be globally increasing on $(0,\infty)$, even though it is strictly increasing on a practical $[c_L,c_U]$ under \cref{app:proofs}, \hyperref[cond:A3]{Condition~A.3}. In applied calibration, one should therefore interpret $c_U$ not merely as a numerical convenience, but as a modeling choice that restricts attention to the regime where the inverse design problem remains well-posed for $\bar{w}$.
\end{remark}

\subsection{Conservative Ceilings and Feasibility Screening for Design}
\label{subsec:app-screening}

This section summarizes how the preceding results translate into concrete feasibility checks and target-grid construction in simulation design.

\subsubsection{A computable analytic ceiling for $\tilde{\rho}$ on $[c_L,c_U]$}

For $\tilde{\rho}$, \cref{cor:tilde-ceiling} provides a direct analytic upper bound valid for any $(\Psi,G)$. In particular, for any calibration cap $c\le c_U$,
\begin{equation}
\sup_{c\in[c_L,c_U]} \tilde{\rho}(c)
=\tilde{\rho}(c_U)
\le
\frac{\sigma_\theta^2 (c_U^2 S_2/4)}{\sigma_\theta^2 (c_U^2 S_2/4)+1},
\qquad S_2=\sum_{i=1}^I \lambda_{i,0}^2,
\label{eq:B21}
\end{equation}
where equality in the first step uses monotonicity on $[c_L,c_U]$ (\cref{app:proofs}).

This bound is useful as a quick impossibility check: if a proposed target $\rho^*$ exceeds the right-hand side of \cref{eq:B21}, then it is infeasible for $\tilde{\rho}$ on the chosen interval regardless of the specific difficulty locations.

\subsubsection{Back-of-envelope ``reference-scale'' ceilings for target-grid sanity checks}

In practice, researchers often wish to avoid targets that, while feasible in principle, would require extreme scaling and thus risk entering the non-monotone regime for $\bar{w}$ (\cref{subsec:app-msem-limits}) or producing numerically ill-conditioned calibration near the boundary (\cref{subsec:bounds}). A simple heuristic---used only for sanity checking target grids---is to combine the Rasch maximum item information (0.25) (attained at $\theta=\beta_i$ when $c=1$) with $\sigma_\theta^2=1$, yielding
\begin{equation}
\tilde{\rho}^{\mathrm{ref}}_{\max}(I)
\equiv \frac{I/4}{I/4+1}.
\label{eq:B22}
\end{equation}
This is not a sharp achievable bound for a given $(\Psi,G)$, nor does it incorporate scaling $c\neq 1$. Its role is to flag target grids that are likely to require unusually aggressive scaling for short tests.

\begin{table}[htbp]
\centering
\caption{Reference-scale ceiling for $\tilde{\rho}$ under Rasch and standardized $G$}
\label{tab:reference-ceiling}
\begin{tabular}{ccc}
\hline
Test length ($I$) & $I/4$ & $\tilde{\rho}^{\mathrm{ref}}_{\max}(I)$ \\
\hline
15 & 3.75 & 0.7895 \\
30 & 7.50 & 0.8824 \\
60 & 15.00 & 0.9375 \\
\hline
\end{tabular}

\smallskip
\footnotesize\textit{Note.} Values computed from \cref{eq:B22}. This ``ceiling'' assumes information is near its per-item maximum across the latent population at $c=1$, and is therefore best interpreted as a conservative target-selection heuristic, not as the achievable $\tilde{\rho}(c_U)$ implied by a chosen calibration interval.
\end{table}

\subsubsection{Recommended feasibility-screening workflow}

For a proposed design $(\Psi,G,\rho^*)$ and a chosen metric ($\tilde{\rho}$ or $\bar{w}$), we recommend:

\begin{enumerate}
\item Compute boundary reliabilities $\rho(c_L)$ and $\rho(c_U)$ under the chosen metric.
\item Check feasibility under the practical-monotone assumption: require $\rho(c_L)<\rho^*<\rho(c_U)$.
\item Avoid targets too close to either bound, because the inverse map is ill-conditioned near the boundary: small Monte Carlo perturbations in $\hat{\rho}(c)$ can induce large perturbations in $\hat{c}$.
\item For $\bar{w}$, add a monotonicity diagnostic if $c_U$ is large or the item grid is sparse: evaluate $\bar{w}(c)$ on a coarse grid of $c$ values to confirm it is increasing on $[c_L,c_U]$. If non-monotonicity is detected, reduce $c_U$ or revise the difficulty coverage.
\item If infeasible, adjust one or more of: (i) test length ($I$), (ii) item pool quality/coverage (difficulty support), (iii) $c_U$ (to enlarge the feasible interval only if monotonicity is preserved), or (iv) the reliability metric (noting $\tilde{\rho}\ge \bar{w}$; \cref{app:proofs}).
\end{enumerate}

\section{Latent Distribution Specifications}
\label{app:distributions}

This appendix documents the implementation of latent trait distributions $G$ used in the validation study (\cref{sec:validation}). All distributions are generated via the \texttt{sim\_latentG()} function in the \texttt{IRTsimrel} package.

\subsection{Pre-Standardization Principle}
\label{subsec:app-prestandardization}

A key design feature is \emph{pre-standardization}: every built-in distribution shape is mathematically constructed to have mean 0 and variance 1 before any location-scale transformation is applied. The generated abilities follow:
\begin{equation}
\theta_p = \mu + \sigma \cdot z_p, \qquad z_p \sim G_0, \quad \E[z] = 0, \quad \Var(z) = 1.
\label{eq:C1}
\end{equation}
This design ensures that changing the distributional shape does not inadvertently alter the scale, enabling clean comparisons across shapes while holding variance constant. When $\mu = 0$ and $\sigma = 1$ (the default), the generated $\theta$ values have exactly the target mean and variance regardless of the underlying shape.

\subsection{Validation Study Distributions}
\label{subsec:app-valdist}

The validation study (\cref{sec:validation}) employed four latent distribution shapes representing qualitatively distinct departures from normality. \cref{tab:dist-params} summarizes their mathematical construction and higher-order moments.

\begin{table}[htbp]
\centering
\caption{Validation Study Distribution Parameters}
\label{tab:dist-params}
\small
\begin{tabular}{@{}llllrr@{}}
\hline
Shape & Construction & Formula & Param. & Skew. & Ex.\ Kurt. \\
\hline
Normal & Standard normal & $z \sim N(0, 1)$ & --- & 0.00 & 0.00 \\
Bimodal & Symmetric mixture & $z = s\delta + \varepsilon$ & $\delta = 0.8$ & 0.00 & $-1.09$ \\
Pos.\ Skewed & Std.\ Gamma & $z = (\Gamma_k - k)/\sqrt{k}$ & $k = 4$ & 1.00 & 1.50 \\
Heavy-Tailed & Std.\ Student-$t$ & $z = t_\nu/\sqrt{\nu/(\nu-2)}$ & $\nu = 5$ & 0.00 & 6.00 \\
\hline
\end{tabular}

\smallskip
\footnotesize\textit{Note.} All distributions are pre-standardized to have $\E[z] = 0$ and $\Var(z) = 1$. Skewness refers to the third standardized moment; excess kurtosis refers to the fourth standardized moment minus 3. For the bimodal distribution, $s \sim \text{Rademacher}(\pm 1)$ and $\varepsilon \sim N(0, 1-\delta^2)$.
\end{table}

\subsubsection{Mathematical details for each shape}

\paragraph{Normal.} The baseline case serves as a benchmark: $z \sim N(0, 1)$. This is the conventional assumption in IRT and represents the null hypothesis of no distributional misspecification.

\paragraph{Bimodal.} A symmetric two-component Gaussian mixture represents populations with two distinct subgroups. The construction
\begin{equation}
z = s \cdot \delta + \varepsilon, \qquad s \sim \text{Rademacher}(\pm 1), \quad \varepsilon \sim N(0, 1-\delta^2),
\label{eq:C2}
\end{equation}
places mixture components at $\pm\delta$ with common within-component variance $1-\delta^2$. This ensures $\Var(z) = \delta^2 + (1-\delta^2) = 1$ by construction. With $\delta = 0.8$, the modes are clearly separated while maintaining unit variance. The excess kurtosis is $-2\delta^4/(1-\delta^2+\delta^4) \approx -1.09$, reflecting the platykurtic nature of bimodal distributions.

\paragraph{Positively Skewed.} The standardized Gamma distribution
\begin{equation}
z = \frac{\Gamma(k,1) - k}{\sqrt{k}}
\label{eq:C3}
\end{equation}
has $\E[z] = 0$ and $\Var(z) = 1$ for any shape parameter $k > 0$. With $k = 4$, the distribution has skewness $2/\sqrt{k} = 1$ and excess kurtosis $6/k = 1.5$. This represents positively selective samples (e.g., high-ability populations).

\paragraph{Heavy-Tailed.} The standardized Student-$t$ distribution
\begin{equation}
z = \frac{t_\nu}{\sqrt{\nu/(\nu-2)}}
\label{eq:C4}
\end{equation}
has $\Var(z) = 1$ for $\nu > 2$. With $\nu = 5$ degrees of freedom, the distribution has excess kurtosis $6/(\nu-4) = 6$, representing substantially heavier tails than the normal. This shape is useful for examining robustness to outliers and extreme values.

\cref{fig:latent-shapes} displays the density functions for all four validation study shapes, with the standard normal reference shown as a dashed curve.

\begin{figure}[htbp]
\centering
\includegraphics[width=\textwidth]{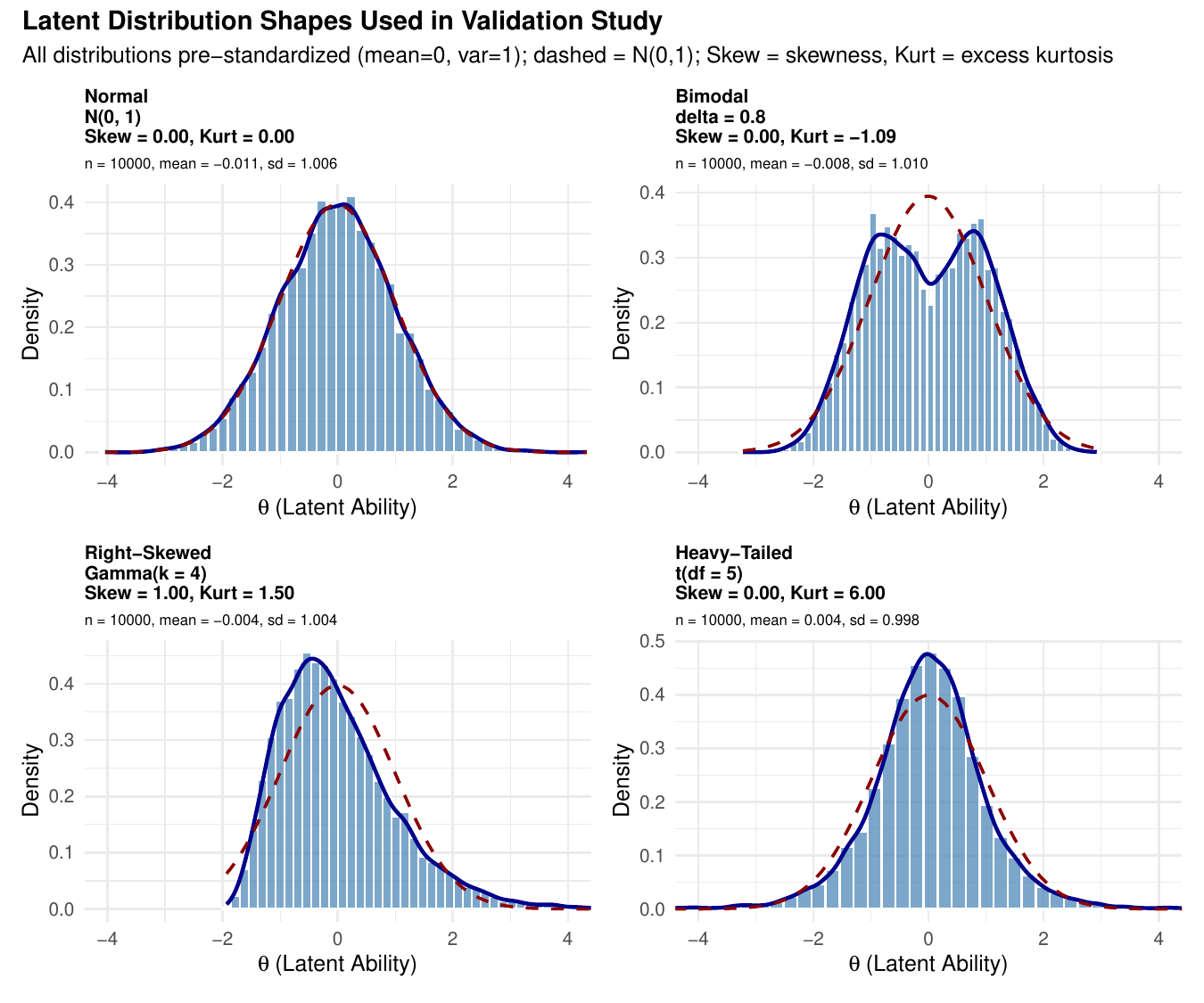}
\caption{Latent Distribution Shapes Used in Validation Study}
\label{fig:latent-shapes}

\smallskip
\footnotesize\textit{Note.} All distributions are pre-standardized to have mean 0 and variance 1. Solid blue curves show kernel density estimates from $n = 10{,}000$ draws; dashed red curves show the $N(0,1)$ reference. Each panel reports the theoretical skewness (Skew) and excess kurtosis (Kurt), along with empirical sample moments ($\mu$, $\sigma$) confirming the pre-standardization property.
\end{figure}

\subsection{Implementation in IRTsimrel}
\label{subsec:app-simlatentG}

The \texttt{sim\_latentG()} function generates latent abilities with the following interface:

\begin{verbatim}
sim_latentG(
  n,                  # Number of persons
  shape = "normal",   # Distribution shape
  shape_params = list(), # Shape-specific parameters
  mu = 0,             # Location parameter
  sigma = 1,          # Scale parameter
  seed = NULL         # Random seed for reproducibility
)
\end{verbatim}

For the validation study conditions, the function calls were:

\begin{verbatim}
# Normal
sim_latentG(n = N, shape = "normal", seed = seed)

# Bimodal
sim_latentG(n = N, shape = "bimodal", 
            shape_params = list(delta = 0.8), seed = seed)

# Positively Skewed
sim_latentG(n = N, shape = "skew_pos", 
            shape_params = list(k = 4), seed = seed)

# Heavy-Tailed
sim_latentG(n = N, shape = "heavy_tail", 
            shape_params = list(df = 5), seed = seed)
\end{verbatim}

The function returns a \texttt{latent\_G} object containing the generated $\theta$ values, the pre-standardized $z$ values, and sample moment diagnostics.

\section{Item Parameter Generation Details}
\label{app:items}

This appendix documents the generation of item parameters (difficulties $\beta$ and discriminations $\lambda$) for the validation study. All item generation is performed via the \texttt{sim\_item\_params()} function in the \texttt{IRTsimrel} package.

\subsection{Item-Generation Configurations}
\label{subsec:app-item-configs}

The validation study employed a $2 \times 2$ factorial design crossing IRT model (Rasch vs.\ 2PL) with difficulty source (parametric vs.\ empirical). \cref{tab:item-configs} summarizes the four configurations.

\begin{table}[htbp]
\centering
\caption{Item-Generation Configurations}
\label{tab:item-configs}
\small
\begin{tabular}{@{}lllll@{}}
\hline
Model & Source & Difficulty & Discrimination & Method \\
\hline
Rasch & Parametric & $\beta \sim N(0, 1)$ & $\lambda \equiv 1$ (fixed) & --- \\
Rasch & IRW & IRW pool (empirical) & $\lambda \equiv 1$ (fixed) & --- \\
2PL & Parametric & $\beta \sim N(0, 1)$ & $\log(\lambda) \sim N(0, 0.3^2)$, $\rho = -0.3$ & Gaussian Copula \\
2PL & IRW & IRW pool (empirical) & $\log(\lambda) \sim N(0, 0.3^2)$, $\rho = -0.3$ & Gaussian Copula \\
\hline
\end{tabular}

\smallskip
\footnotesize\textit{Note.} IRW = Item Response Warehouse \citep{domingue_introduction_2025}. The correlation $\rho = -0.3$ reflects the empirically observed negative relationship between difficulty and discrimination \citep{sweeney_investigation_2022}.
\end{table}

\subsection{Difficulty Sources}
\label{subsec:app-diff-sources}

\paragraph{Parametric source.} Difficulties are drawn from a standard normal distribution:
\begin{equation}
\beta_i \sim N(0, 1), \qquad i = 1, \ldots, I.
\label{eq:D1}
\end{equation}
This represents the conventional assumption in IRT simulation studies.

\paragraph{IRW source.} Difficulties are sampled from the Item Response Warehouse \citep{domingue_introduction_2025,zhang_realistic_2025}, an empirical repository of calibrated item parameters from operational assessments. The IRW pool exhibits a characteristically bimodal distribution with modes near $\beta \approx -2$ and $\beta \approx 1$, reflecting the structure of real item banks. \cref{fig:diff-comparison} compares the parametric and IRW difficulty distributions.

\begin{figure}[htbp]
\centering
\includegraphics[width=\textwidth]{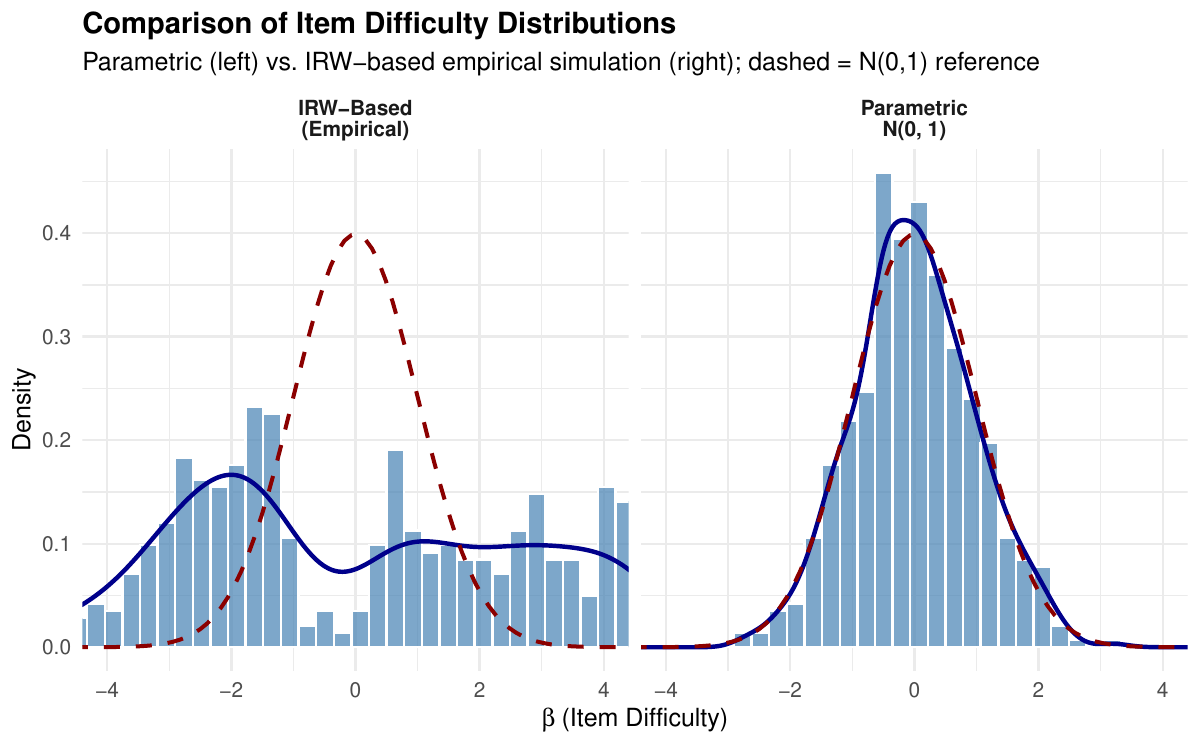}
\caption{Comparison of Item Difficulty Distributions}
\label{fig:diff-comparison}

\smallskip
\footnotesize\textit{Note.} Left panel: IRW-based empirical difficulties showing the characteristic bimodal structure of real assessment item pools. Right panel: Parametric $N(0,1)$ difficulties. Solid blue curves show kernel density estimates; dashed red curves show the $N(0,1)$ reference. The IRW distribution has substantially greater variance (SD $\approx 1.6$) and pronounced non-normality compared to the parametric distribution.
\end{figure}

\subsection{Discrimination Generation: The Gaussian Copula Method}
\label{subsec:app-copula}

For the 2PL model, discriminations must be generated with a specified correlation to difficulties. A critical finding from psychometric research is that item difficulty and discrimination are negatively correlated in real assessments, with typical values around $\rho \approx -0.3$ \citep{sweeney_investigation_2022}. Ignoring this dependence produces unrealistic simulation data.

The \texttt{IRTsimrel} package implements a \emph{Gaussian copula} method that achieves target correlations while exactly preserving the marginal distributions of both parameters. This is crucial when difficulties come from the non-normal IRW distribution.

\paragraph{Algorithm.} Given a vector of difficulties $\{\beta_i\}_{i=1}^I$ from any source (parametric or IRW), the copula method generates correlated log-normal discriminations through the steps shown in \cref{tab:copula-algorithm}.

\begin{table}[htbp]
\centering
\caption{Gaussian Copula Algorithm Steps}
\label{tab:copula-algorithm}
\small
\begin{tabular}{@{}clll@{}}
\hline
Step & Operation & Formula & Purpose \\
\hline
1 & Transform $\beta$ to uniform & $u = \text{rank}(\beta) / (n + 1)$ & Nonparametric CDF \\
2 & Transform uniform to normal & $z_\beta = \Phi^{-1}(u)$ & Std.\ normal quantile \\
3 & Generate correlated normal & $z_\lambda = \rho z_\beta + \sqrt{1-\rho^2} z_{\text{indep}}$ & Impose correlation \\
4 & Transform normal to uniform & $v = \Phi(z_\lambda)$ & Std.\ normal CDF \\
5 & Transform uniform to log-normal & $\log(\lambda) = \mu + \sigma \Phi^{-1}(v)$ & Target marginal \\
\hline
\end{tabular}

\smallskip
\footnotesize\textit{Note.} $\Phi(\cdot)$ denotes the standard normal CDF and $\Phi^{-1}(\cdot)$ its inverse. Parameters $\mu = 0$ and $\sigma = 0.3$ yield the target log-normal marginal for discriminations.
\end{table}

The key insight is that Step~1 uses the \emph{empirical} (rank-based) CDF rather than assuming any parametric form for the difficulty distribution. This nonparametric transformation ensures that the original difficulty marginal---whether normal, bimodal, or any other shape---is exactly preserved in the output.

\cref{fig:copula-steps} illustrates the copula algorithm step-by-step, showing how the IRW difficulty distribution is transformed through each stage while achieving the target correlation.

\begin{figure}[htbp]
\centering
\includegraphics[width=\textwidth]{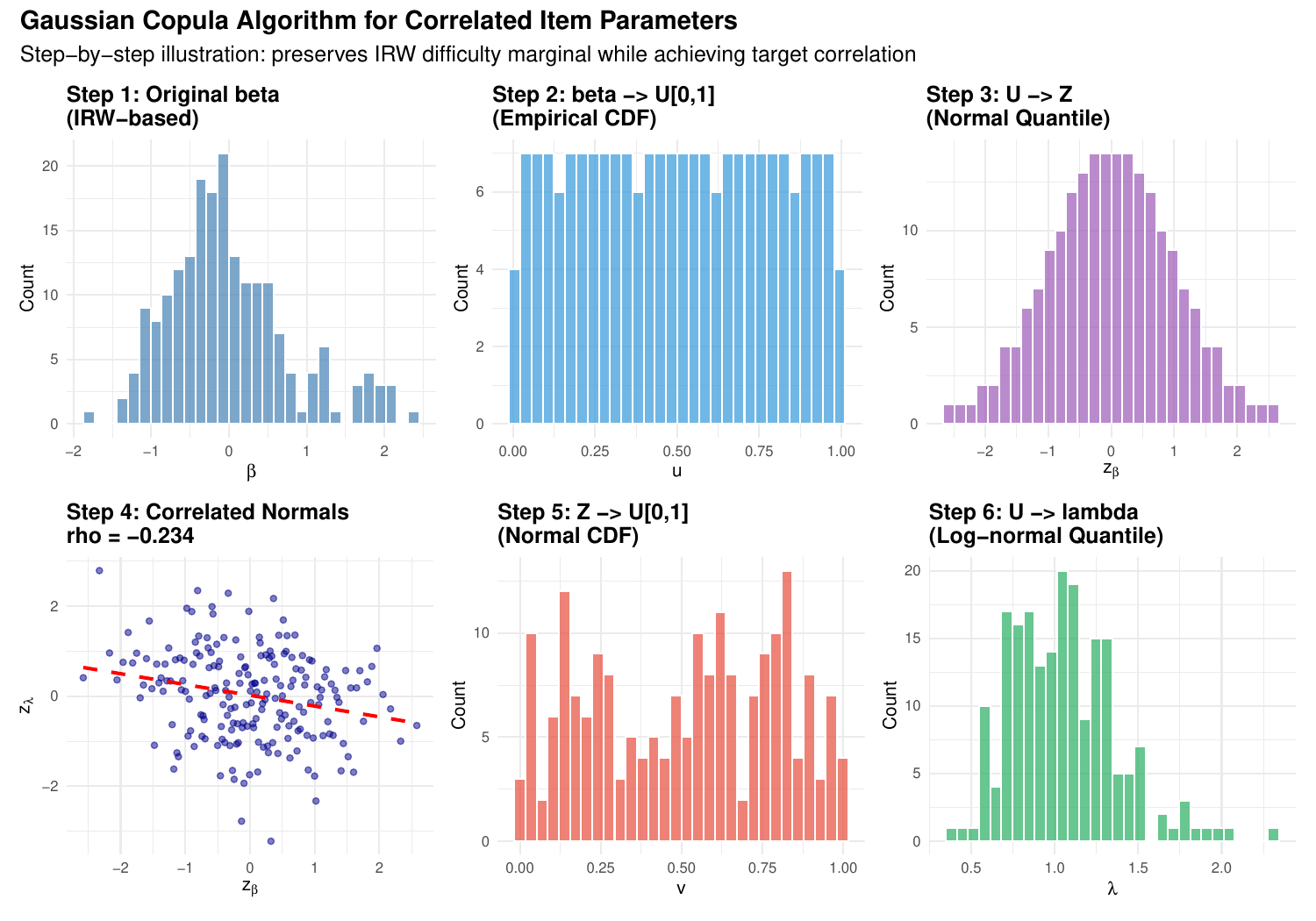}
\caption{Gaussian Copula Algorithm for Correlated Item Parameters}
\label{fig:copula-steps}

\smallskip
\footnotesize\textit{Note.} Step-by-step illustration of the Gaussian copula method using $n = 200$ items from the IRW pool. Step~1: Original IRW difficulties. Steps~2--3: Transformation to uniform then normal space. Step~4: Generation of correlated normal pairs (achieved $\rho = -0.234$). Steps~5--6: Back-transformation through uniform to log-normal discriminations. The method preserves the exact IRW marginal while achieving the target Spearman correlation.
\end{figure}

\subsection{Comparison of Generation Methods}
\label{subsec:app-method-comparison}

The \texttt{IRTsimrel} package provides three methods for generating correlated item parameters:

\begin{enumerate}
\item \textbf{Copula method} (recommended): Preserves exact marginals, achieves target Spearman correlation
\item \textbf{Conditional method}: Uses conditional normal regression; assumes linear relationships
\item \textbf{Independent method}: Generates discriminations independently (no correlation)
\end{enumerate}

\cref{fig:method-comparison} compares these methods with a target Spearman $\rho = -0.3$.

\begin{figure}[htbp]
\centering
\includegraphics[width=\textwidth]{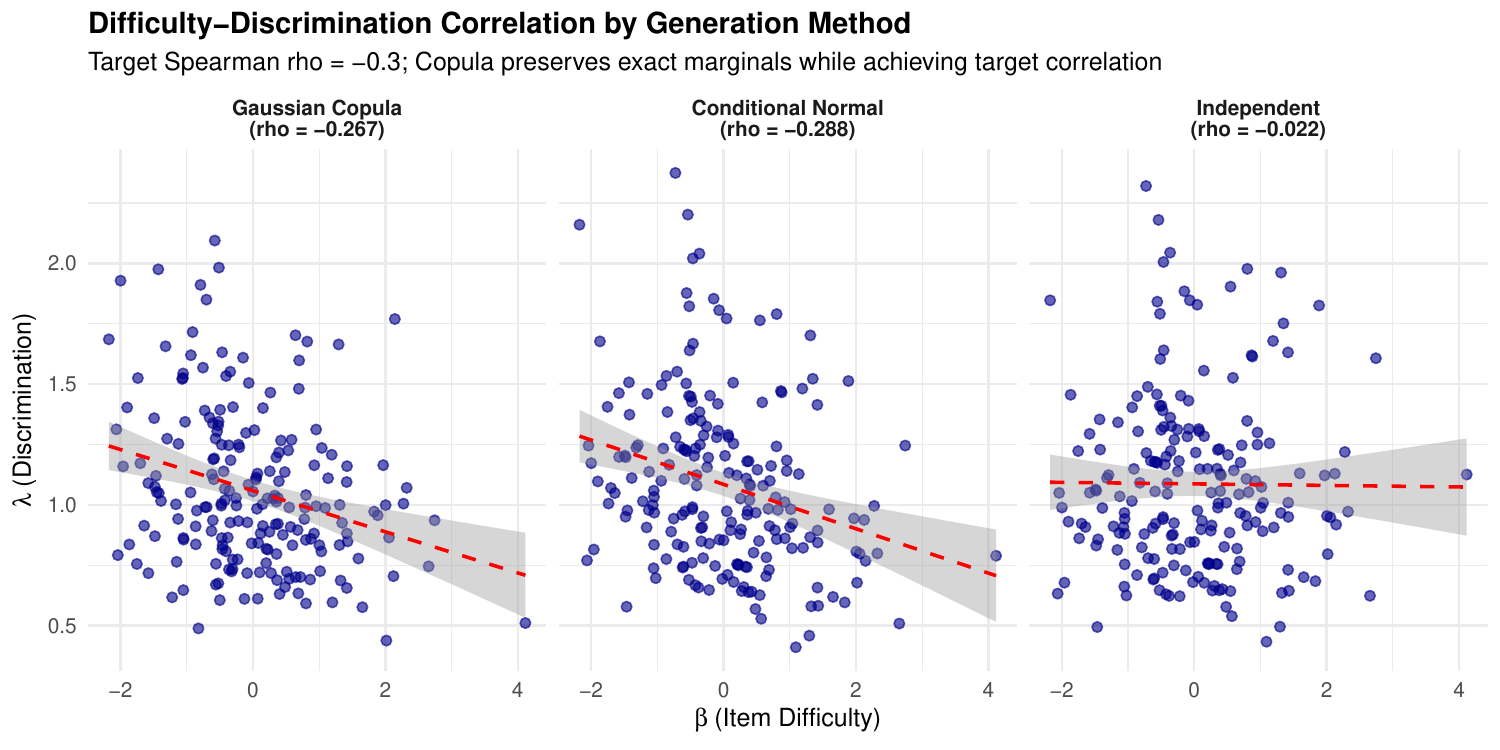}
\caption{Difficulty--Discrimination Correlation by Generation Method}
\label{fig:method-comparison}

\smallskip
\footnotesize\textit{Note.} Comparison of three methods for generating correlated item parameters. Target Spearman $\rho = -0.3$. Left: Gaussian copula (achieved $\rho = -0.267$). Center: Conditional normal (achieved $\rho = -0.288$). Right: Independent generation (achieved $\rho = -0.022$). The copula method preserves the exact IRW difficulty marginal while achieving the target rank correlation.
\end{figure}

\subsection{Joint Distribution of Item Parameters}
\label{subsec:app-joint-dist}

\cref{fig:joint-dist} displays the complete joint distribution of item parameters for the 2PL + IRW configuration, showing the marginal distributions and their bivariate relationship.

\begin{figure}[htbp]
\centering
\includegraphics[width=0.9\textwidth]{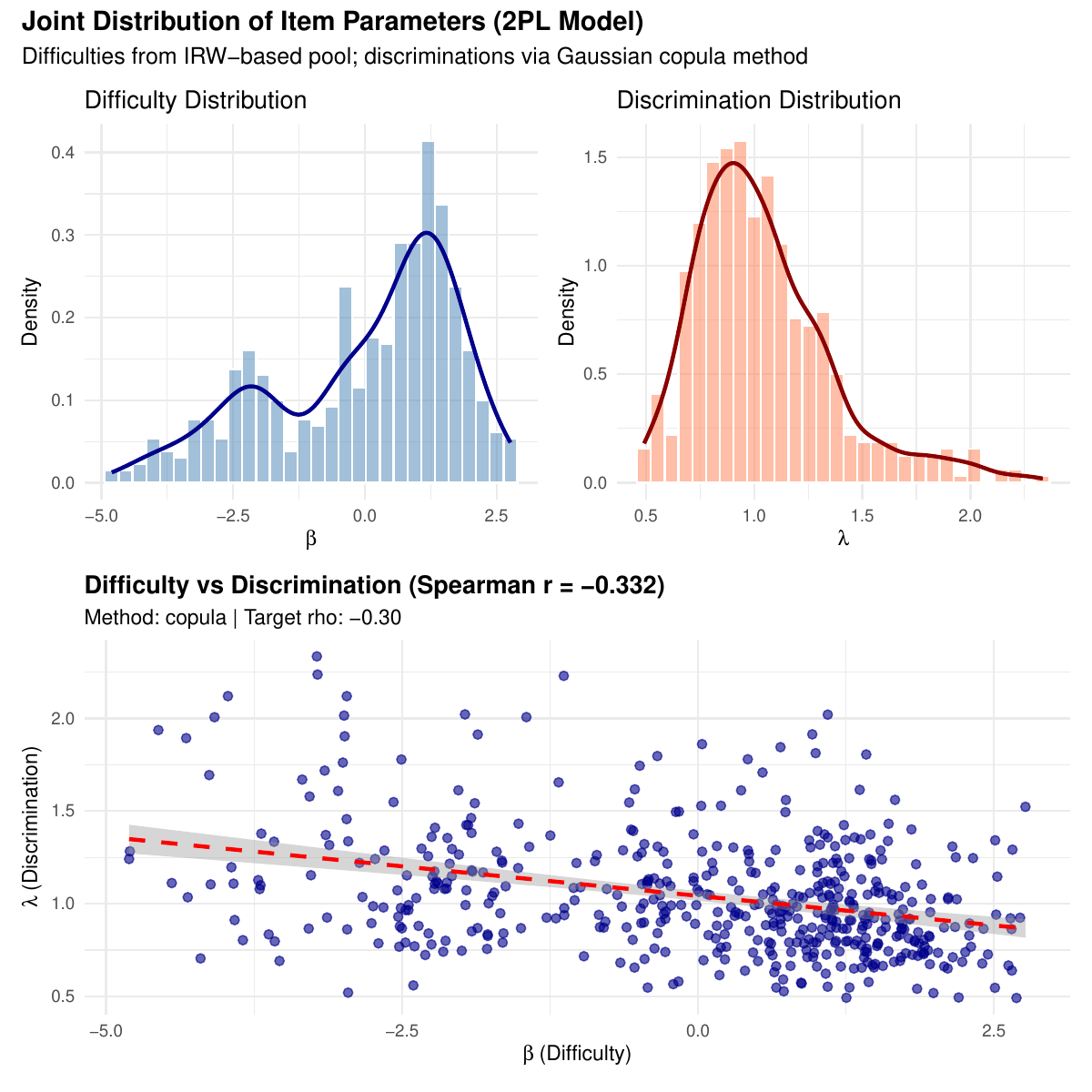}
\caption{Joint Distribution of Item Parameters (2PL Model)}
\label{fig:joint-dist}

\smallskip
\footnotesize\textit{Note.} Joint distribution of item parameters generated via the Gaussian copula method with $n = 500$ items. Top left: Difficulty distribution from IRW pool showing characteristic bimodality. Top right: Log-normal discrimination distribution. Bottom: Scatterplot showing the negative correlation between difficulty and discrimination (Spearman $r = -0.332$; target $\rho = -0.30$). The copula method successfully imposes the target correlation while preserving both marginal distributions exactly.
\end{figure}

\subsection{Summary Statistics}
\label{subsec:app-item-summary}

\cref{tab:item-summary} reports summary statistics for the four item-generation configurations used in the validation study.

\begin{table}[htbp]
\centering
\caption{Item Parameter Summary Statistics}
\label{tab:item-summary}
\small
\begin{tabular}{@{}lrrcrrc@{}}
\hline
Configuration & $\bar{\beta}$ & SD($\beta$) & Range($\beta$) & $\bar{\lambda}$ & SD($\lambda$) & $r_S$ \\
\hline
Rasch + Parametric & 0.00 & 0.97 & $[-3.01, 3.60]$ & 1.00 & 0.00 & --- \\
Rasch + IRW & 0.00 & 1.63 & $[-3.11, 4.00]$ & 1.00 & 0.00 & --- \\
2PL + Parametric & 0.00 & 1.00 & $[-3.11, 3.09]$ & 1.05 & 0.34 & $-0.28$ \\
2PL + IRW & 0.00 & 1.85 & $[-3.68, 3.32]$ & 1.04 & 0.31 & $-0.29$ \\
\hline
\end{tabular}

\smallskip
\footnotesize\textit{Note.} Statistics computed from $n = 1{,}000$ items per configuration. $r_S$ = Spearman correlation between $\beta$ and $\log(\lambda)$. IRW difficulties exhibit approximately 1.7$\times$ greater variability than parametric difficulties. Achieved correlations are within sampling error of the target $\rho = -0.3$.
\end{table}

\subsection{Implementation in IRTsimrel}
\label{subsec:app-sim-item-params}

The \texttt{sim\_item\_params()} function generates item parameters with the following interface:

\begin{verbatim}
sim_item_params(
  n_items,                    # Number of items
  model = "rasch",            # "rasch" or "2pl"
  source = "irw",             # "parametric", "irw", "hierarchical", or "custom"
  method = "copula",          # "copula", "conditional", or "independent"
  difficulty_params = list(), # Parameters for difficulty distribution
  discrimination_params = list(
    mu_log = 0,               # Mean of log(lambda)
    sigma_log = 0.3,          # SD of log(lambda)
    rho = -0.3                # Target correlation
  ),
  scale = 1,                  # Global discrimination scale factor
  seed = NULL                 # Random seed for reproducibility
)
\end{verbatim}

For the validation study, item parameters were generated as:

\begin{verbatim}
# Rasch + Parametric
sim_item_params(n_items = I, model = "rasch", source = "parametric",
                difficulty_params = list(mu = 0, sigma = 1), seed = seed)

# Rasch + IRW
sim_item_params(n_items = I, model = "rasch", source = "irw", seed = seed)

# 2PL + Parametric
sim_item_params(n_items = I, model = "2pl", source = "parametric",
                method = "copula",
                difficulty_params = list(mu = 0, sigma = 1),
                discrimination_params = list(mu_log = 0, sigma_log = 0.3, 
                                             rho = -0.3),
                seed = seed)

# 2PL + IRW
sim_item_params(n_items = I, model = "2pl", source = "irw",
                method = "copula",
                discrimination_params = list(mu_log = 0, sigma_log = 0.3, 
                                             rho = -0.3),
                seed = seed)
\end{verbatim}

\subsection{Integration with Reliability Calibration}
\label{subsec:app-integration}

In the reliability-targeted simulation framework, item parameters are generated with a baseline scale ($c = 1$) and then rescaled during calibration. The \texttt{eqc\_calibrate()} function automatically calls \texttt{sim\_item\_params()} internally and applies the calibrated scaling factor $c^*$:
\begin{equation}
\lambda_i^* = c^* \cdot \lambda_{i,0},
\label{eq:D2}
\end{equation}
where $\lambda_{i,0}$ denotes the baseline discrimination. This separation of structure (the baseline parameters) from scale (the calibrated factor $c^*$) is central to the reliability-targeted simulation approach described in the main text (\cref{sec:framework}).

\section{Extended Validation Results}
\label{app:extended}

This appendix provides extended validation results that complement the main text (\cref{sec:validation}). While \cref{tab:calibration-accuracy,tab:by-target} and \cref{fig:achieved-vs-target,fig:by-latent-shape,fig:by-model-source,fig:eqc-vs-sac,fig:jensen} summarize aggregate calibration accuracy, the analyses presented here stratify results by design factors, quantify finite-sample replication variability, and provide additional empirical support for the theoretical properties established in \cref{app:proofs,app:bounds}.

Throughout, we use the notation of the main text: $\tilde{\rho}$ denotes average-information reliability (\cref{eq:avg-info}), $\bar{w}$ denotes MSEM-based reliability (\cref{eq:msem}), and $c^*$ denotes the calibrated discrimination scale. The validation study comprised 960 conditions crossing latent distribution shape (4 levels), IRT model (2 levels), item source (2 levels), test length $I \in \{15, 30, 60\}$, sample size $N \in \{100, 200, 500, 1000, 2000\}$, and target reliability $\rho^*$ (4 levels per test length). For each condition, $K = 2{,}000$ replicated response datasets were generated to assess finite-sample variability.

\subsection{Calibration Accuracy by Test Length and Latent Distribution}
\label{subsec:app-accuracy-testlength}

\cref{fig:by-latent-shape} in the main text aggregated calibration deviations across latent distribution shapes. \cref{fig:accuracy-testlength-shape} provides a more granular view by cross-classifying results by both test length and latent distribution shape.

\begin{figure}[htbp]
\centering
\includegraphics[width=\textwidth]{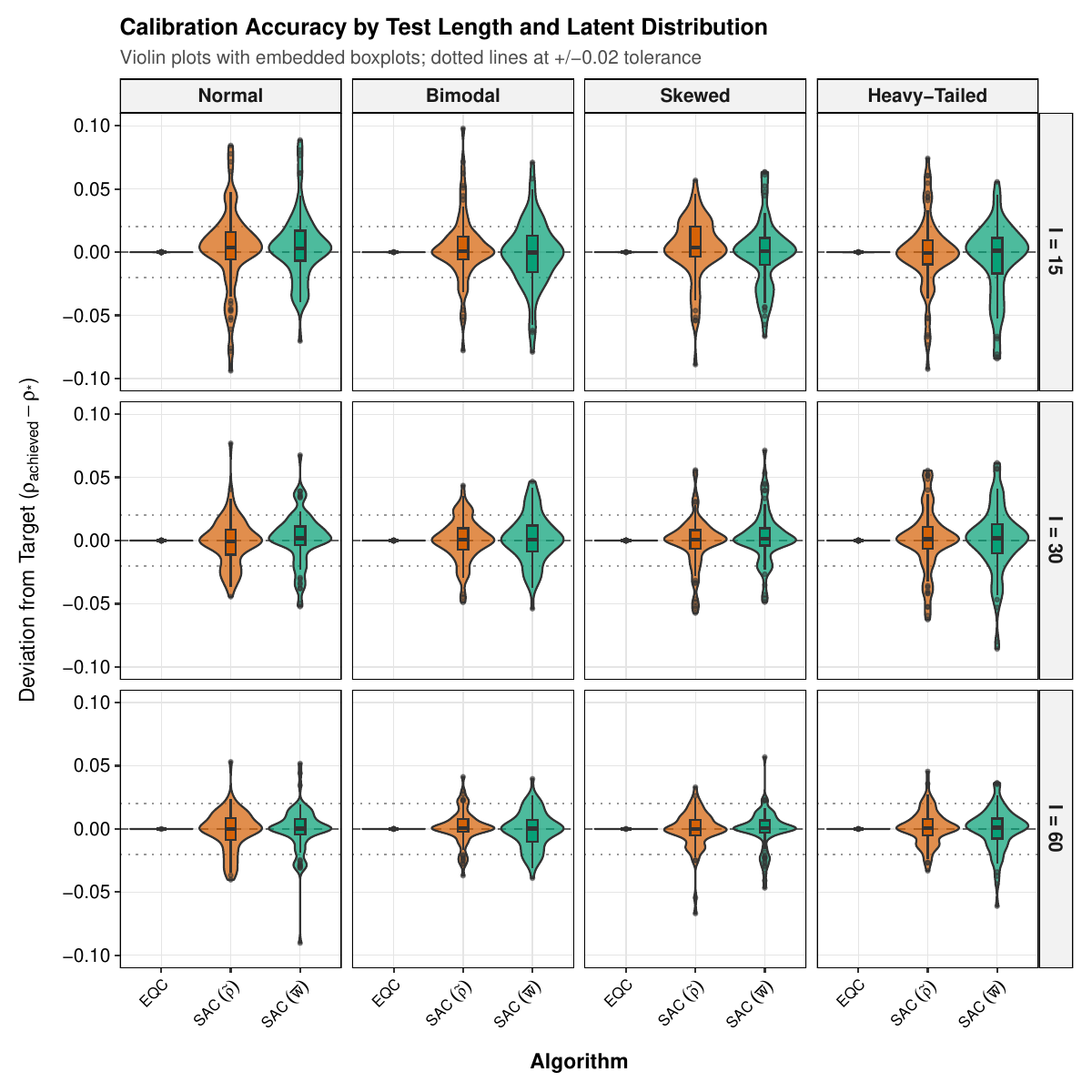}
\caption{Calibration Accuracy by Test Length and Latent Distribution Shape}
\label{fig:accuracy-testlength-shape}

\smallskip
\footnotesize\textit{Note.} Violin plots with embedded boxplots show the distribution of deviations from target reliability ($\Delta = \rho_{\text{achieved}} - \rho^*$) for each algorithm. Rows correspond to test lengths ($I = 15, 30, 60$); columns correspond to latent distribution shapes. The horizontal dashed line indicates $\Delta = 0$ (perfect calibration); dotted lines indicate $\pm 0.02$ tolerance bands. EQC achieves effectively zero deviation across all cells. SAC deviations are centered near zero but exhibit greater dispersion for short tests and non-normal distributions.
\end{figure}

Several patterns emerge from \cref{fig:accuracy-testlength-shape}. First, EQC achieves negligible deviation across all 12 cells of the design, confirming that deterministic root-finding on a fixed quadrature yields numerically exact calibration regardless of test length or latent shape. Second, SAC dispersion decreases systematically with test length: the interquartile range for $I = 60$ is visibly narrower than for $I = 15$. This pattern reflects the reduced Monte Carlo variability of reliability estimators when test information is aggregated over more items. Third, heavy-tailed distributions exhibit somewhat larger SAC outliers, particularly for short tests. This is consistent with the theoretical expectation that heavy-tailed $G$ allocates more probability mass to trait regions where the test information function $\mathcal{J}(\theta; c)$ is low and variable, increasing the variance of the Monte Carlo reliability estimator.

Importantly, median deviations remain close to zero across all cells, indicating that the calibration procedure is approximately unbiased even under substantial departures from normality. The $\pm 0.02$ tolerance bands capture the large majority of SAC conditions in all cells, with coverage improving as test length increases.

\subsection{Finite-Sample Replication Variability}
\label{subsec:app-replication-var}

\cref{subsec:objectives} noted that $K = 2{,}000$ replicated response datasets were generated per condition to quantify how much the \emph{realized} reliability varies across finite samples, even when the \emph{population-level} design target is held fixed. This distinction is critical for interpreting simulation results: a calibration procedure may achieve the target reliability at the population level, yet individual replications will exhibit sampling variability around that target.

\cref{fig:replication-var} displays the standard deviation of achieved reliability across the $K = 2{,}000$ replications, stratified by sample size $N$, test length $I$, and latent distribution shape.

\begin{figure}[htbp]
\centering
\includegraphics[width=\textwidth]{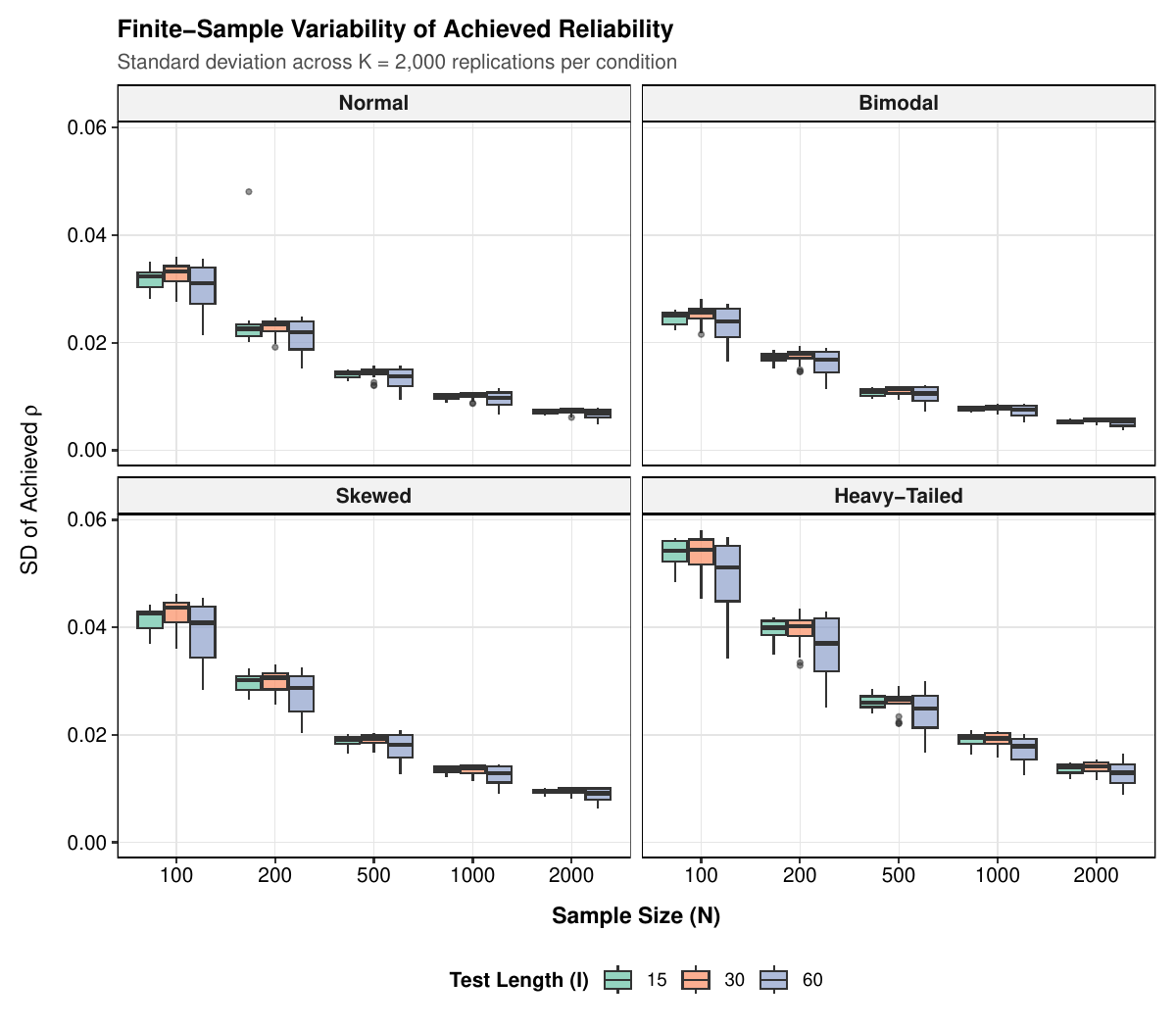}
\caption{Finite-Sample Variability of Achieved Reliability}
\label{fig:replication-var}

\smallskip
\footnotesize\textit{Note.} Boxplots show the standard deviation of achieved reliability ($\hat{\rho}$) computed across $K = 2{,}000$ replications per condition. Panels correspond to latent distribution shapes; colors indicate test length ($I$). Within each panel, boxplots are grouped by sample size ($N = 100, 200, 500, 1000, 2000$). Variability decreases with both $N$ and $I$, and is elevated for heavy-tailed distributions.
\end{figure}

The results in \cref{fig:replication-var} reveal three systematic patterns. First, replication variability decreases monotonically with sample size: the median SD drops from approximately 0.03--0.05 at $N = 100$ to approximately 0.01 at $N = 2{,}000$. This $O(N^{-1/2})$ decay is consistent with standard asymptotic theory for reliability estimators. Second, longer tests exhibit lower variability at each sample size, reflecting the law of large numbers applied to item-level information contributions. Third, heavy-tailed distributions exhibit systematically higher variability than normal or bimodal distributions, even after controlling for $N$ and $I$. This elevation reflects the increased probability of extreme $\theta$ values where test information is low, which inflates the variance of person-level information contributions.

These findings have practical implications for simulation study design. When sample sizes are small (e.g., $N \leq 200$), researchers should expect substantial replication-to-replication variability in realized reliability, even when calibration is exact at the population level. Reporting the design-level target $\rho^*$ alongside the empirical mean and standard deviation of realized reliability across replications provides a more complete picture of simulation conditions.

\subsection{Calibration Accuracy by Sample Size}
\label{subsec:app-accuracy-samplesize}

While the main text focused on aggregate calibration accuracy (\cref{tab:calibration-accuracy}), \cref{fig:mae-by-samplesize} examines whether calibration error varies systematically with the sample size $N$ used for generating response data.

\begin{figure}[htbp]
\centering
\includegraphics[width=0.9\textwidth]{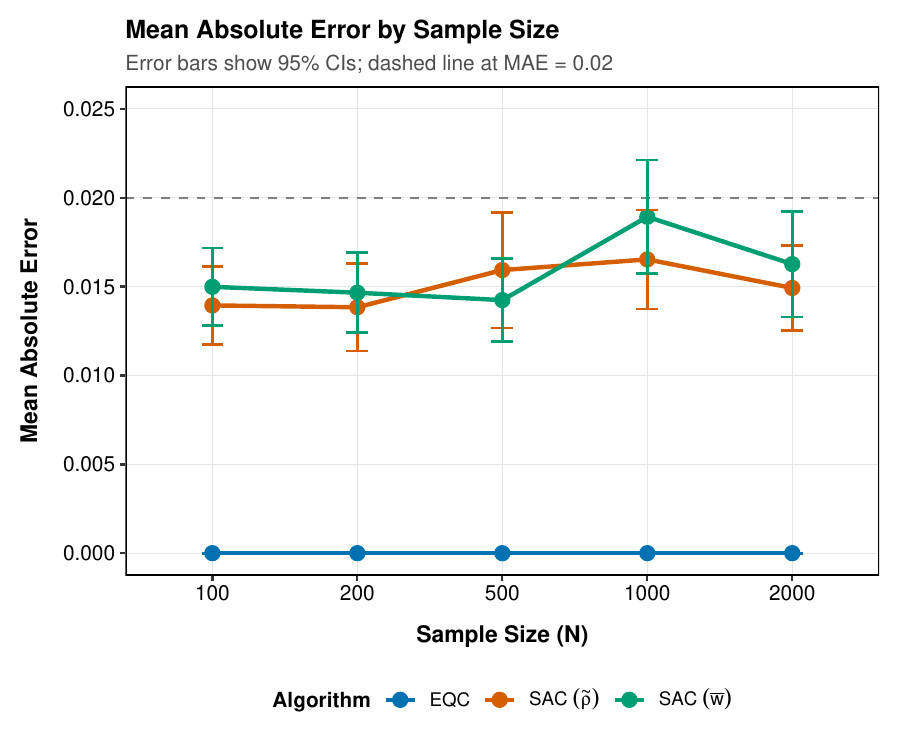}
\caption{Mean Absolute Error by Sample Size}
\label{fig:mae-by-samplesize}

\smallskip
\footnotesize\textit{Note.} Points show the mean absolute error (MAE) of achieved reliability relative to the target, averaged across all conditions at each sample size. Error bars indicate 95\% confidence intervals for the mean. The horizontal dashed line indicates MAE $= 0.02$. EQC achieves MAE $\approx 0$ regardless of sample size. SAC MAE is approximately constant across sample sizes, indicating that calibration accuracy is determined by the stochastic approximation procedure rather than by the size of the generated datasets.
\end{figure}

A key observation from \cref{fig:mae-by-samplesize} is that SAC calibration accuracy is essentially invariant to sample size $N$. This is expected because calibration targets the \emph{population-level} reliability functional $\rho(c)$, which is defined as an expectation over the latent distribution $G$ and does not depend on the sample size used for subsequent data generation. The Monte Carlo draws used within SAC are governed by the stochastic approximation configuration (number of iterations, draws per iteration), not by $N$. Consequently, SAC achieves similar MAE ($\approx 0.015$) whether the generated datasets will contain 100 or 2,000 persons.

This invariance has a practical implication: researchers need not adjust calibration settings based on the intended sample size of generated data. A single calibrated configuration $(\Psi, c^*, G)$ can be used to generate datasets of varying sizes without recalibration.

\subsection{Calibration Success Rate by Target Level and Test Length}
\label{subsec:app-success-rate}

The main text (\cref{tab:calibration-accuracy}) reported that 73.0\% of SAC ($\tilde{\rho}$) conditions achieved reliability within $\pm 0.02$ of the target. \cref{fig:accuracy-heatmap} disaggregates this success rate by target reliability level $\rho^*$ and test length $I$, revealing how calibration difficulty varies across the design space.

\begin{figure}[htbp]
\centering
\includegraphics[width=0.85\textwidth]{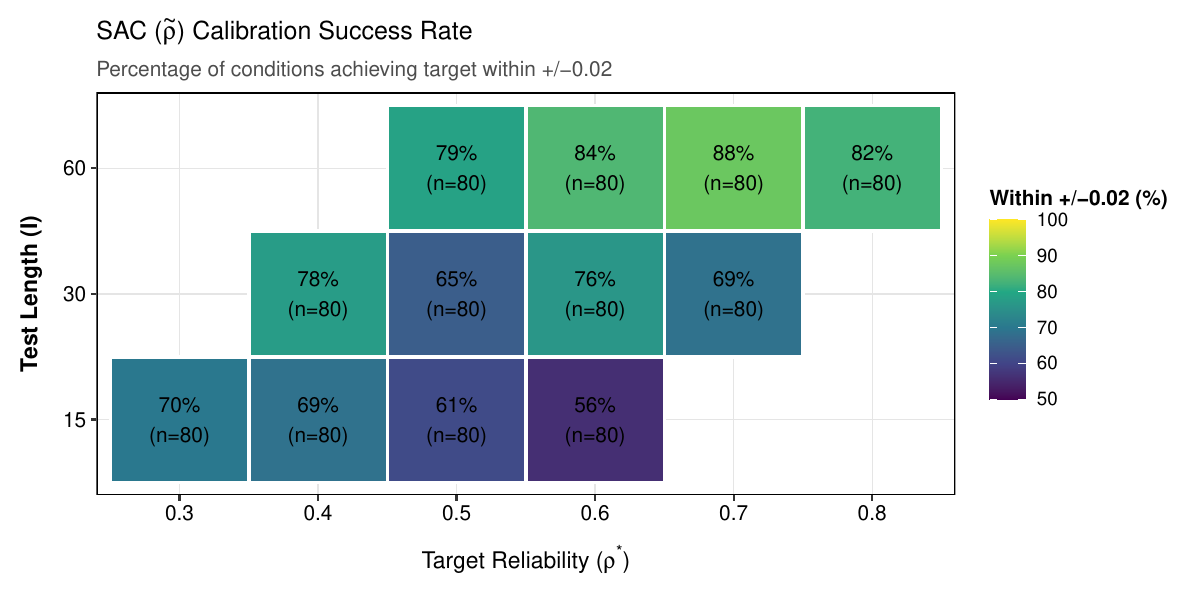}
\caption{SAC ($\tilde{\rho}$) Calibration Success Rate by Target Reliability and Test Length}
\label{fig:accuracy-heatmap}

\smallskip
\footnotesize\textit{Note.} Each cell shows the percentage of conditions (out of $n = 80$ per cell) that achieved reliability within $\pm 0.02$ of the target. Darker shading indicates lower success rates. Empty cells correspond to target--test-length combinations excluded by the adaptive target scheme (\cref{subsec:objectives}). Success rates are highest for intermediate targets and longer tests; they are lowest for the most demanding targets at each test length (e.g., $\rho^* = 0.60$ for $I = 15$).
\end{figure}

The heatmap in \cref{fig:accuracy-heatmap} reveals a consistent pattern: calibration success rates are lower at the boundaries of the feasible target range for each test length. For $I = 15$, the success rate drops to 56\% at $\rho^* = 0.60$, which approaches the upper bound of achievable reliability for short tests (\cref{app:bounds}). For $I = 60$, success rates are uniformly high (79--88\%) across all target levels. This pattern is consistent with the feasibility analysis in \cref{subsec:bounds} and \cref{app:bounds}: near the boundaries of the achievable reliability set $\mathcal{R}([c_L, c_U])$, the inverse mapping from $\rho^*$ to $c^*$ becomes increasingly sensitive to Monte Carlo error, leading to larger calibration deviations.

The adaptive target scheme employed in the validation study (\cref{subsec:objectives}) was designed to avoid infeasible targets. Nevertheless, targets near the upper boundary of each test length's feasible range remain more challenging to calibrate precisely. Practitioners seeking high-precision calibration should either (i) use longer tests, (ii) avoid targets near the feasibility boundary, or (iii) increase SAC iterations to reduce stochastic approximation error.

\subsection{Summary}
\label{subsec:app-extended-summary}

The extended validation results in this appendix support three conclusions that complement the main text findings:

\begin{enumerate}
\item \textbf{Robustness across design factors.} Calibration accuracy is stable across latent distribution shapes, IRT models, and item sources. While heavy-tailed distributions and short tests exhibit somewhat greater SAC variability, the calibration procedure remains approximately unbiased across all 960 conditions.

\item \textbf{Finite-sample variability.} Even with exact population-level calibration, realized reliability in individual replications exhibits sampling variability that decreases with $\sqrt{N}$ and $\sqrt{I}$. Researchers should report both the design target $\rho^*$ and the empirical distribution of realized reliability across replications.

\item \textbf{Boundary sensitivity.} Calibration success rates are lower near the boundaries of the feasible reliability range, where the inverse mapping from $\rho^*$ to $c^*$ becomes increasingly sensitive to Monte Carlo error.
\end{enumerate}

Together with the main text results, these findings validate the reliability-targeted simulation framework as a practical and theoretically grounded approach to IRT data generation.

\section{Software Implementation and Reproducibility}
\label{app:software}

This appendix provides practical guidance for implementing reliability-targeted IRT simulation using the \texttt{IRTsimrel} R package. The package is available at \url{https://joonho112.github.io/IRTsimrel/} and implements all methods described in the main text.

\subsection{Package Overview}
\label{subsec:app-package-overview}

The \texttt{IRTsimrel} package is designed to make reliability targeting a drop-in component of standard IRT simulation workflows. The implementation separates \emph{structural generators} from \emph{scale calibration}, reflecting the theoretical framework developed in \cref{sec:framework}: realistic item and population features are generated first, and then measurement precision is tuned via global discrimination scaling.

\cref{tab:function-mapping} maps the main conceptual components of the paper to the corresponding package functions.

\begin{table}[htbp]
\centering
\caption{Mapping Between Paper Concepts and IRTsimrel Functions}
\label{tab:function-mapping}
\small
\begin{tabular}{@{}p{3.8cm}ll@{}}
\hline
Paper Concept & Function & Description \\
\hline
Latent distribution $G$ & \texttt{sim\_latentG()} & Generates std.\ abilities \\
Item parameters $\Psi$ & \texttt{sim\_item\_params()} & Generates $\beta$, $\lambda$ \\
EQC algorithm & \texttt{eqc\_calibrate()} & Deterministic calibration \\
SAC algorithm & \texttt{sac\_calibrate()} & Stochastic calibration \\
Algorithm comparison & \texttt{compare\_eqc\_spc()} & Compares calibrations \\
Response generation & \texttt{simulate\_response\_data()} & Produces response matrices \\
External validation & \texttt{compute\_reliability\_tam()} & WLE/EAP reliability \\
Distribution comparison & \texttt{compare\_shapes()} & Visualizes latent shapes \\
\hline
\end{tabular}

\smallskip
\footnotesize\textit{Note.} See \cref{subsec:model} for latent distributions, \cref{subsec:overview} for item parameters, and \cref{subsec:eqc,subsec:sac} for calibration algorithms. The function \texttt{spc\_calibrate()} is an alias retained for backward compatibility.
\end{table}

\subsection{Recommended Workflow}
\label{subsec:app-workflow}

The standard workflow for reliability-targeted simulation consists of four stages: (1) specify structural configuration, (2) calibrate to target reliability, (3) generate response data, and (4) optionally validate. The following code block illustrates an end-to-end example targeting $\rho^* = 0.75$ under a Rasch model with bimodal latent distribution and IRW-based item difficulties.

\begin{verbatim}
library(IRTsimrel)

# -----------------------------------------------------
# Stage 1: Specify structural configuration
# -----------------------------------------------------
# Latent distribution: bimodal with mode separation delta = 0.8
# Item source: empirical difficulties from Item Response Warehouse
# Model: Rasch (discriminations fixed at 1)
# Test length: 30 items

# -----------------------------------------------------
# Stage 2: Calibrate using EQC
# -----------------------------------------------------
eqc_result <- eqc_calibrate(
  target_rho      = 0.75,
  n_items         = 30,
  model           = "rasch",
  latent_shape    = "bimodal",
  latent_params   = list(shape_params = list(delta = 0.8)),
  item_source     = "irw",
  reliability_metric = "info",
  M               = 20000,
  c_bounds        = c(0.1, 10),
  seed            = 42
)

print(eqc_result)
#> =======================================================
#>   Empirical Quadrature Calibration (EQC) Results
#> =======================================================
#> 
#> Calibration Summary:
#>   Model                        : RASCH
#>   Target reliability (rho*)    : 0.7500
#>   Achieved reliability         : 0.7500
#>   Absolute error               : 2.81e-06
#>   Scaling factor (c*)          : 0.6927
#> 
#> Design Parameters:
#>   Number of items (I)          : 30
#>   Quadrature points (M)        : 20000
#>   Reliability metric           : Average-information (tilde)
#>   Latent variance              : 0.9999
#> 
#> Convergence:
#>   Root status                  : uniroot_success
#>   Search bracket               : [0.100, 10.000]
#>   Bracket reliabilities        : [0.0695, 0.9883]
\end{verbatim}

The output shows that EQC achieves the target reliability of 0.75 with an absolute error of $2.81 \times 10^{-6}$, confirming essentially exact calibration. The calibrated scaling factor is $c^* = 0.693$, which scales down the baseline discriminations to achieve the specified reliability level. The bracket reliabilities $[0.070, 0.988]$ confirm that the target lies well within the feasible range for this configuration.

\begin{verbatim}
# -----------------------------------------------------
# Stage 3: Generate response data
# -----------------------------------------------------
sim_data <- simulate_response_data(
  eqc_result   = eqc_result,
  n_persons    = 1000,
  latent_shape = "bimodal",
  latent_params = list(shape_params = list(delta = 0.8)),
  seed         = 123
)

dim(sim_data$response_matrix)
#> [1] 1000   30

# -----------------------------------------------------
# Stage 4 (Optional): Validate with TAM
# -----------------------------------------------------
tam_rel <- compute_reliability_tam(
  resp  = sim_data$response_matrix,
  model = "rasch"
)

cat(sprintf("Target:   %.3f\n", eqc_result$target_rho))
cat(sprintf("EAP rel.: %.3f\n", tam_rel$rel_eap))
cat(sprintf("WLE rel.: %.3f\n", tam_rel$rel_wle))
#> Target:   0.750
#> EAP rel.: 0.753
#> WLE rel.: 0.737
\end{verbatim}

External validation via TAM confirms that the realized reliabilities (EAP $= 0.753$, WLE $= 0.737$) closely match the design target of 0.75, with the small discrepancies reflecting finite-sample variability and the distinction between population-level and estimator-specific reliability definitions.

For studies requiring independent stochastic validation or direct targeting of MSEM-based reliability $\bar{w}$, SAC can be run after EQC using a warm start:

\begin{verbatim}
# SAC validation with EQC warm start
sac_result <- sac_calibrate(
  target_rho = 0.75,
  n_items    = 30,
  model      = "rasch",
  latent_shape = "bimodal",
  latent_params = list(shape_params = list(delta = 0.8)),
  item_source = "irw",
  c_init     = eqc_result,
  n_iter     = 1000,
  M_per_iter = 2000,
  seed       = 456
)

compare_eqc_spc(eqc_result, sac_result)
#> =======================================================
#>   EQC vs SPC Comparison
#> =======================================================
#> 
#>   Target reliability  : 0.7500
#>   EQC c*              : 0.692742
#>   SPC c*              : 0.722159
#>   Absolute difference : 0.029417
#>   Percent difference  : 4.25%
#>   Agreement (< 5%)    : YES
\end{verbatim}

The comparison shows that EQC and SAC yield calibrated scales that agree within 4.25\%, well below the 5\% threshold for practical equivalence. This independent stochastic validation confirms the accuracy of the deterministic EQC solution.

\subsection{Function Reference Summary}
\label{subsec:app-function-ref}

\paragraph{Latent distribution generation.} The \texttt{sim\_latentG()} function generates abilities from 12 built-in shapes (normal, bimodal, trimodal, skewed, heavy-tailed, uniform, floor/ceiling effects, and custom mixtures), all pre-standardized to $\E[\theta] = 0$ and $\Var(\theta) = 1$ before any location-scale transformation. The \texttt{compare\_shapes()} function provides side-by-side visualization of multiple shapes for design exploration.

\paragraph{Item parameter generation.} The \texttt{sim\_item\_params()} function supports parametric ($\beta \sim N(0,1)$) and empirical (IRW) difficulty sources. For the 2PL model, discriminations are generated with a target Spearman correlation to difficulties (default $\rho = -0.3$) using the Gaussian copula method described in \cref{app:items}, which preserves exact marginals while achieving the target dependence structure.

\paragraph{Calibration.} Both \texttt{eqc\_calibrate()} and \texttt{sac\_calibrate()} return objects containing the calibrated scale $c^*$, achieved reliability, quadrature/iteration diagnostics, and the full item parameter set. The \texttt{reliability\_metric} argument selects between average-information reliability (\texttt{"info"} or \texttt{"tilde"}) and MSEM-based reliability (\texttt{"msem"} or \texttt{"bar"}).

\paragraph{Response generation.} The \texttt{simulate\_response\_data()} function accepts either an \texttt{eqc\_result} or \texttt{sac\_result} object and generates response matrices under the specified IRT model using the calibrated item parameters. Multiple independent datasets can be generated by varying the seed.

\subsection{Reproducibility}
\label{subsec:app-reproducibility}

For exact reproducibility, all stochastic functions in \texttt{IRTsimrel} accept a \texttt{seed} argument. A complete reproducibility record should include:

\begin{enumerate}
\item \textbf{Random seeds} for each stage (calibration, data generation)
\item \textbf{Package version} (\texttt{packageVersion("IRTsimrel")})
\item \textbf{R version} and platform (\texttt{sessionInfo()})
\item \textbf{Key parameter settings} (target reliability, test length, latent shape, item source)
\end{enumerate}

The validation study in \cref{sec:validation} used the following configuration:

\begin{verbatim}
# Validation study parameters
M          <- 20000   # Quadrature size for EQC
n_iter     <- 1000    # SAC iterations
M_per_iter <- 2000    # MC draws per SAC iteration
K          <- 2000    # Replications per condition
c_bounds   <- c(0.1, 10)
\end{verbatim}

Complete replication scripts are available at the GitHub repository: \url{https://github.com/joonho112/reliability-targeted-irt-simulation}.


\end{document}